\newtheorem{theorem}{Theorem}
\newtheorem{definition}{Definition}
\newtheorem{lemma}[theorem]{Lemma}
\newtheorem{proposition}[theorem]{Proposition}
\newtheorem{corollary}{Corollary}
\renewcommand{\thefootnote}{\fnsymbol{footnote}}
\title[Phase Transitions in PML]{Phase Transitions for the Uniform Distribution \\ in the PML Problem and its Bethe Approximation$^\ast$}
\author{Chun Lam Chan} 
\address{Dept.\ Information Engineering, Chinese University of Hong Kong}
\email{clchan.eric@gmail.com}
\author{Winston Fernandes}
\address{Dell R\&D, Bangalore}
\email{winfernu@gmail.com}
\author{Navin Kashyap}
\address{Dept.\ Electrical Communication Engineering, Indian Institute of Science, Bangalore}
\email{nkashyap@ece.iisc.ernet.in}
\author{Manjunath Krishnapur}
\address{Dept.\ Mathematics, Indian Institute of Science, Bangalore}
\email{manju@math.iisc.ernet.in}
\begin{document}
\maketitle

\footnotetext[1]{This work was presented in part at the 2013 IEEE Information Theory Workshop held in Seville, Spain, Sept. 9--13, 2013.}

\begin{abstract}
The pattern maximum likelihood (PML) estimate, introduced by Orlitsky et al., is an estimate of the multiset of probabilities in an unknown probability distribution $\p$, the estimate being obtained from $n$ i.i.d.\ samples drawn from $\p$. The PML estimate involves solving a difficult optimization problem over the set of all probability mass functions (pmfs) of finite support. In this paper, we describe an interesting phase transition phenomenon in the PML estimate: at a certain sharp threshold, the uniform distribution goes from being a local maximum to being a local minimum for the optimization problem in the estimate. We go on to consider the question of whether a similar phase transition phenomenon also exists in the Bethe approximation of the PML estimate, the latter being an approximation method with origins in statistical physics. We show that the answer to this question is a qualified ``Yes''.  Our analysis involves the computation of the mean and variance of the $(i,j)$th entry, $a_{i,j}$, in a random $k \times k$ non-negative integer matrix $A$ with row and column sums all equal to $M$, drawn according to a distribution that assigns to $A$ a probability proportional to $\prod_{i,j} \frac{(M-a_{i,j})!}{a_{i,j}!}$.
\end{abstract}


\renewcommand{\thefootnote}{\arabic{footnote}}

\section{Introduction\label{sec:intro}}
Consider the estimation problem in which, given a sequence of $n$ i.i.d.\ samples from a fixed but unknown underlying probability distribution $\p$, we are required to estimate the multiset of probabilities in $\p$. In particular, we need not determine the correspondence between the symbols of the underlying alphabet and the probabilities in the multiset. Such a problem arises naturally in the context of universal compression of large-alphabet sources \cite{OSZ04}, and has several other applications, for example, population estimation from a small number of samples \cite{OSVZ11}. The multiset of empirical frequencies of the symbols observed in the $n$ samples is a straightforward estimate of the multiset of probabilities in $\p$; this estimate corresponds to the maximum likelihood (ML) estimate of $\p$. However, when the sample size, $n$, is smaller than the size of the support of the underlying distribution $\p$, the ML estimate may not give a good estimate of the multiset of probabilities in $\p$. An alternative estimate that has been proposed for this regime is the pattern maximum likelihood (PML) estimate, introduced by Orlitsky et al.~\cite{OSZ04}, \cite{OSVZ11} and described below.


The \emph{pattern} $\psibf$ or $\psibf(\x^n)$ of a sequence $\x^n = x_1,\ldots,x_n$ is a data structure that keeps track of the order of occurrence and the multiplicities of the distinct symbols in the sequence $\x^n$; for a precise definition, see Section~\ref{sec:pml}. The \emph{pattern maximum likelihood (PML) distribution} of a pattern $\psibf$ is the multiset of probabilities that maximizes the probability of observing a sequence with pattern $\psibf$. It has been argued \cite{OSVZ11}, \cite{OSVZ04}, \cite{OSSVZ04} that for a sequence $\x^n$ sampled from an unknown underlying probability distribution $\p$, the PML distribution of $\psibf(\x^n)$ is a good estimate of the multiset of probabilities in $\p$, even in situations where $n$ is much smaller than the support size of $\p$. However, for the purposes of this paper, we view the PML distribution purely as an interesting mathematical object.

The problem of determining the PML distribution (henceforth termed the ``PML problem'') of a given pattern $\psibf$ appears to be computationally hard \cite{OSSVZ04}--\cite{ADOP11}. In part, this is because the underlying optimization problem is not convex. It turns out that the PML problem can be very well approximated by its Bethe approximation \cite{Von12}, \cite{Fer13}, which in this case is a convex optimization problem. The Bethe approximation is a technique with roots in statistical physics. The optimization problem in the Bethe approximation can usually be solved highly efficiently using belief propagation algorithms \cite{YFW05}. 

In this paper, we are concerned with a remarkable phase transition phenomenon\footnote{Our use of the term ``phase transition'' here is inspired by statistical physics, where the term is often used to describe abrupt changes in behaviour of physical (especially, thermodynamical) systems.} observed in the PML problem. For a positive integer $k$, let $U_k$ denote the uniform distribution on $k$ symbols. Given a pattern $\psibf$, we can explicitly compute a quantity $\Upsilon(\psibf)$ such that for all $k < \Upsilon(\psibf)$, $U_k$ is a \emph{local maximum}, among all distributions $\p$ with support size $k$, for the optimization problem within the PML problem; and for all $k > \Upsilon(\psibf)$, $U_k$ is a \emph{local minimum}. On the basis of this observation, we proposed in \cite{FK13} a heuristic algorithm for determining whether or not the uniform distribution is the PML distribution of a given pattern $\psibf$.

Given that the Bethe approximation is a very good proxy for the PML distribution, it is natural to ask whether the phase transition phenomenon described above extends to the Bethe approximation as well. We are able to give a qualified affirmative answer to this question. Our answer is given in terms of a sequence of ``degree-$M$ optimization problems'' such that the degree-$1$ problem is the original PML problem, and as $M \to \infty$, we obtain the Bethe approximation. We show that for all sufficiently large $M$, the degree-$M$ optimization problems admit a phase transition phenomenon very similar to that described above for the PML problem. While this falls just short of proving that the Bethe approximation itself admits such a phase transition, it lends strong support in favour of this assertion.  

The bulk of our proof of the existence of phase transitions in the degree-$M$ optimization problems involves analyzing a certain probability distribution, denoted by $Q_{k,M}$, on the set of $k \times k$ non-negative integer matrices with all row and column sums equal to $M$. This probability distribution and our analysis of it via a discrete Gaussian approximation may be of independent interest.

The remainder of this paper is organized as follows. In Section~\ref{sec:pml}, we provide the definitions needed to describe the PML problem, after which we state and prove the corresponding phase transition phenomenon (Theorem~\ref{thm:thres}). The Bethe approximation is described in Section~\ref{sec:bethe}. This section also explains the notion of ``degree-$M$ lifted permanents'' defined by Vontobel \cite{Von13a}, which is used to define our degree-$M$ optimization problems. Section~\ref{sec:bpml_phase} contains a precise statement (Theorem~\ref{thm:thresB}) of the phase transition phenomenon in the degree-$M$ problems, the proof of which occupies much of the rest of the paper. In particular, Section~\ref{sec:QkM} collects together the properties of the probability distribution $Q_{k,M}$ that are used in the proof. The paper concludes in Section~\ref{sec:conclusion} with a discussion of the gap remaining in a rigorous proof of the phase transition phenomenon in the Bethe approximation. Some of the more technical proofs from Sections~\ref{sec:bethe}--\ref{sec:QkM} are presented in appendices.


\section{The PML Problem and a Phase Transition Phenomenon} \label{sec:pml}

We use $\Z_+$ and $\Z_{++}$, respectively, to denote the set of non-negative and positive integers. 
For $k \in \Z_{++}$, we use $[k]$ to denote the set $\{1,2,\ldots,k\}$. For any countable set $\cX$, we let $\Pi_{\cX}$ denote the set of all probability distributions on $\cX$: 
$$\Pi_{\cX} = \biggl\{\p = {(p(x))}_{x \in \cX}: \ p(x) \ge 0 \ \forall x \in \cX, \ \sum_{x \in \cX} p(x) = 1\biggr\}.$$
For any $k \in Z_{++}$, we let $U_k$ denote the uniform distribution on $[k]$.

\subsection{Patterns and PML} \label{sec:patterns}

Given a sequence $\x^n = x_1,\ldots,x_n$ over some alphabet, the \emph{pattern} of $\x^n$ is the sequence $\psibf = \psi_1,\psi_2,\ldots,\psi_n$ obtained by replacing each $x_j$ by the order of its first occurrence in $\x^n$ \cite{OSSVZ04}, \cite{Von12}. More precisely, for each symbol $x$ occurring in $\x^n$, let $\nu(x)$ denote the number of distinct symbols seen in the \emph{shortest} prefix of $\x^n$ that ends in the symbol $x$. Then, $\psi_j = \nu(x_j)$ for $j = 1,2,\ldots,n$. The pattern $\psibf(\x^n)$ is defined to have \emph{length} $n$ and \emph{size} $m$, where $m$ is the number of distinct symbols in $\x^n$.  For example, the word ``sleepless'' has pattern $123342311$, which is of length $9$ and size $4$. We will canonically represent a pattern $\psibf$ as $1^{\mu_1} 2^{\mu_2} \ldots m^{\mu_m}$, where $\mu_j$ is the \emph{multiplicity} of the symbol $j$, i.e., the number of times $j$ appears, in $\psibf$. Note that $\mu_1 +\cdots + \mu_m = n$. The pattern $\psibf$ in our example has canonical form $1^3 2^2 3^3 4$. 

Let $\psibf$ be a given pattern of length $n$, and let $\p = {(p(x))}_{x \in \cX}$ be a probability distribution over a discrete (possibly countably infinite) set $\cX$. The probability that $n$ i.i.d.\ samples drawn from the distribution $\p$ forms a sequence with pattern $\psibf$ is given by 
\begin{equation}
P(\psibf;\p) := \sum_{\x^n: \psibf(\x^n) = \psibf} \prod_{i=1}^n p(x_i).
\label{def:Ppsi}
\end{equation}
Clearly, all patterns $\psibf$ with the same canonical form $1^{\mu_1} 2^{\mu_2} \ldots m^{\mu_m}$ will have the same pattern probability $P(\psibf;\p)$. Indeed, if $\p = (p_1,\ldots,p_k) \in \Pi_{[k]}$ with $k \ge m$, then we can write
\begin{equation}
P(\psibf;\p) = \sum_{\sigma} \prod_{i=1}^m p_{\sigma(i)}^{\mu_i},
\label{eq:Ppsi}
\end{equation}
where the summation runs over all one-to-one maps $\sigma: [m] \to [k]$. 

The right-hand side of \eqref{eq:Ppsi} can be expressed in alternative form using the notion of a permanent of a matrix. The \emph{permanent} of a real $k \times k$ matrix $\Theta = (\theta_{i,j})$ is defined as 
$$
\perm(\Theta) = \sum_{\pi} \prod_{i=1}^k \theta_{i,\pi(i)},
$$
where the summation is over all permutations $\pi: [k] \to [k]$. With this, it can be verified that \eqref{eq:Ppsi} can be re-written as 
\begin{equation}
P(\psibf;\p) = \frac{1}{(k-m)!} \, \perm(\Theta(\psibf;\p)),
\label{eq:Ppsi_alt}
\end{equation}
where $\Theta(\psibf;\p)$ is the $k \times k$ matrix $(\theta_{i,j})$ with $\theta_{i,j} = p_i^{\mu_j}$; here, we set\footnote{For consistency, we define $0^0 = 1$. This is also in keeping with the convention used in Definition~\ref{def:permB} in Section~\ref{sec:bethe} that $0 \log 0 = 0$.} $\mu_j = 0$ for $m+1 \le j \le k$. The $\frac{1}{(k-m)!}$ term in \eqref{eq:Ppsi_alt} comes from the fact that each one-to-one map $\sigma: [m] \to [k]$ in the sum \eqref{eq:Ppsi} can be extended to a permutation $\pi:[k] \to [k]$ in exactly $(k-m)!$ different ways.


The \emph{PML probability} of a pattern $\psibf$ of size $m$ is defined as 
\begin{equation}
P^{\text{PML}}(\psibf) := \max_\p  P(\psibf;\p) 
\label{def:PPML}
\end{equation}
the maximum being taken over all discrete distributions $\p$ of support size at least $m$. 
Any distribution that attains the maximum above is called a \emph{PML distribution} of $\psibf$,
denoted by $\p^{\text{PML}}(\psibf)$. For the purposes of this paper, we will assume that
the maximum is indeed attained by some discrete distribution $\p$.\footnote{In general, to guarantee that 
the maximum is always attained, we must allow ``mixed'' distributions; see \cite{OSZ04}, \cite{OSVZ11}.} In this case, there is always a PML distribution with finite support \cite{OSVZ11}. Hence, we have 
\begin{equation}
P^{\text{PML}}(\psibf) = \max_{k \ge m} \max_{\p \in \Pi_{[k]}}  P(\psibf;\p)
\label{PPML_alt}
\end{equation}
It should be pointed out that, for any $k \ge m$, since $P(\psibf;\p)$ is a continuous function of $\p \in \Pi_{[k]}$, as is evident from \eqref{eq:Ppsi}, it does attain its maximum on the compact set $\Pi_{[k]}$. 

The problem of determining the PML distribution of a pattern seems to be computationally difficult in general \cite{OSVZ11}, \cite{OSSVZ04}--\cite{ADOP11}. Algorithms for approximating the PML distribution have been proposed by Orlitsky et al.\ \cite{OSSVZ04} and Vontobel \cite{Von12}. Vontobel's algorithm, in particular, uses the Bethe approximation, about which we will have much more to say in Section~\ref{sec:bethe}.

\subsection{Phase Transition in the PML Problem} \label{sec:pml_phase}

Consider now the potentially simpler decision problem of determining whether or not the PML distribution of a given pattern is a uniform distribution. A natural approach to this problem would be to find a test for whether, for any fixed $k \ge m$, the uniform distribution achieves the inner maximum in \eqref{PPML_alt}. In attempting this approach, we discovered a striking phase transition phenomenon in the PML problem. To describe this, we introduce some notation. For a pattern $\psibf$ of size $m$, and an integer $k \ge m$, let $\beta_k^{\psibf}: \Pi_{[k]} \to [0,1]$ be the function defined by the mapping $\p \mapsto P(\psibf;\p)$. The phase transition phenomenon is made precise in the following theorem. 

\begin{theorem}
For a pattern $\psibf$ of length $n$ with canonical form $1^{\mu_1} 2^{\mu_2} \ldots m^{\mu_m}$, $m \ge 2$, define
\begin{equation}
\Upsilon(\psibf) = \frac{n^2-n}{\sum_{i=1}^m \mu_i^2-n}. 
\label{eq:Theta}
\end{equation}
 Then, for all integers $k \ge m$, the following holds: when $k <\Upsilon(\psibf)$, the uniform distribution $U_k$ is a local maximum of the function $\beta_k^{\psibf}$, and when $k > \Upsilon(\psibf)$, $U_k$ is a local minimum.
\label{thm:thres}
\end{theorem}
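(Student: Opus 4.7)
The approach is a direct second-order Taylor expansion of $\beta_k^{\psibf}$ at the uniform distribution $U_k$, working within the simplex $\Pi_{[k]}$. The aim is to show that, when restricted to the tangent space at $U_k$, the Hessian's quadratic form reduces to a scalar multiple of $\sum_j q_j^2$, and to identify when this scalar is positive and when it is negative.

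To set things up, I would parametrize nearby distributions by writing $p_j = (1+q_j)/k$ with $\sum_{j=1}^k q_j = 0$. From \eqref{eq:Ppsi}, this gives
\[
k^n\, P(\psibf;\p) \;=\; \sum_{\sigma}\prod_{i=1}^m \bigl(1+q_{\sigma(i)}\bigr)^{\mu_i},
\]
where the sum is over injections $\sigma:[m]\to[k]$. Expanding each factor as $(1+q_{\sigma(i)})^{\mu_i} = 1 + \mu_i q_{\sigma(i)} + \binom{\mu_i}{2} q_{\sigma(i)}^2 + O(q^3)$ and multiplying out, I would collect all terms of degree at most $2$ in the $q_j$'s, reducing the problem to computing three combinatorial sums over injections $\sigma$.

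The linear contribution vanishes automatically: for each fixed $i$, $\sum_\sigma q_{\sigma(i)}$ is a positive integer multiple of $\sum_{j} q_j = 0$. The quadratic contribution splits into two pieces. The \emph{diagonal} terms $\binom{\mu_i}{2}q_{\sigma(i)}^2$ sum, after collecting the $(k-1)!/(k-m)!$ extensions of $\sigma(i)=j_1$, to a positive multiple of $\sum_j q_j^2$ with combinatorial weight proportional to $\sum_i \mu_i(\mu_i-1) = \sum_i \mu_i^2 - n$. The \emph{cross} terms $\mu_i \mu_j q_{\sigma(i)} q_{\sigma(j)}$ (for distinct $i,j$) sum to a \emph{negative} multiple of $\sum_j q_j^2$, because after collecting the $(k-2)!/(k-m)!$ extensions one encounters $\sum_{j_1\ne j_2} q_{j_1}q_{j_2} = (\sum_j q_j)^2 - \sum_j q_j^2 = -\sum_j q_j^2$, weighted by $\sum_{i\ne j}\mu_i\mu_j = n^2 - \sum_i \mu_i^2$. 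Combining the two pieces, the quadratic form is a scalar multiple of $\sum_j q_j^2$ whose sign matches that of $k\bigl(\sum_i \mu_i^2 - n\bigr) - (n^2 - n) = \bigl(\sum_i \mu_i^2 - n\bigr)\bigl(k-\Upsilon(\psibf)\bigr)$. This yields the desired dichotomy: strictly negative for $k<\Upsilon(\psibf)$ (local maximum) and strictly positive for $k>\Upsilon(\psibf)$ (local minimum), since $\sum_j q_j^2 > 0$ for any nonzero admissible direction $q$.

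The main obstacle is the combinatorial bookkeeping in the expansion: one must keep careful track of the number of injections $\sigma:[m]\to[k]$ compatible with prescribed values of $\sigma(i)$ or of the pair $(\sigma(i),\sigma(j))$, producing falling-factorial normalizations $(k-1)!/(k-m)!$ and $(k-2)!/(k-m)!$ rather than powers of $k$, and then marshal the diagonal and cross contributions so that a common factor of $\sum_j q_j^2$ emerges cleanly. Once that algebra is carried out, no further estimate is needed; the cubic and higher-order remainders are dominated by the strictly signed quadratic form in a small neighbourhood of $U_k$ within $\Pi_{[k]}$.
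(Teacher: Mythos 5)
Your proposal is correct and follows essentially the same approach as the paper's proof: a second-order Taylor expansion of $P(\psibf;\p)$ at $U_k$ using the injection-sum form \eqref{eq:Ppsi}, vanishing of the linear term because $\sum_j q_j = 0$, decomposition of the quadratic term into diagonal contributions (weighted by $\sum_i \mu_i(\mu_i-1) = \sum_i \mu_i^2 - n$) and cross contributions (weighted by $\sum_{i\ne \ell}\mu_i\mu_\ell = n^2-\sum_i\mu_i^2$), and the key identity $\sum_{s\ne t}q_s q_t = -\sum_j q_j^2$ to reduce everything to a signed multiple of $\|q\|^2$. The paper phrases the computation directionally, setting $\p(t) = U_k + t\xibf$ and evaluating $g''(0)$, which is a minor reparametrization of your $q_j = kp_j - 1$; the falling-factorial counts $(k-1)!/(k-m)!$ and $(k-2)!/(k-m)!$ you identify are exactly the ones appearing in the paper.
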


We clarify a point concerning the statement of the theorem. Note that $\Upsilon(\psibf)$ is finite iff $\psibf \ne 123 \ldots n$. When $\psibf = 123 \ldots n$, we take $\Upsilon(\psibf)$ to be $\infty$. 

\begin{proof}[Proof of Theorem~\ref{thm:thres}]
The proof approach is based on that of Theorem~20 in \cite{Von13a}. Let $\p = U_k$, so that $\p$ is in the interior of the simplex $\Pi_{[k]}$. Pick an arbitrary direction $\xibf \in \R^k \setminus \{\0\}$, normalized so that ${\|\xibf\|}_2 = 1$, such that for all $t$ within a sufficiently small interval around $0$, the point $\p(t) = \p + t\,\xibf$ continues to lie within $\Pi_{[k]}$. Note that this implies that $\sum_{j=1}^k \xi_j = 0$. Consider the function $g(t) = P(\psibf;\p(t))$. We will show that, \emph{independent} of the choice of $\xibf$, we have $g'(0) = 0$, $g''(0) < 0$ if $k <\Upsilon(\psibf)$, and $g''(0) > 0$ if $k > \Upsilon(\psibf)$. This clearly suffices to prove the theorem.

Now, from \eqref{eq:Ppsi}, $g(t)$ is expressible as $\sum_{\sigma} g_{\sigma}(t)$, where 
$g_{\sigma}(t) = \prod_{i=1}^m (p_{\sigma(i)} + t \, \xi_{\sigma(i)})^{\mu_i}$. Differentiation, together with the fact that $p_j = \frac{1}{k}$ for all $j$, yields $g'_{\sigma}(0) = \frac{1}{k^{n-1}} \sum_{i=1}^m \mu_i \xi_{\sigma(i)}$. Hence, 
$$
g'(0)= \sum_{\sigma} g'_{\sigma}(0) = \frac{1}{k^{n-1}} \sum_{i=1}^m \mu_i \sum_{\sigma} \xi_{\sigma(i)}.
$$
For any fixed $i \in [m]$, the inner summation $\sum_{\sigma} \xi_{\sigma(i)}$ can be evaluated as follows. As $\sigma$ ranges over all one-to-one maps from $[m]$ to $[k]$, for each $j \in [k]$, $\sigma(i)$ takes the value $j$ exactly $\frac{(k-1)!}{(k-m)!}$ times. Hence, $\sum_{\sigma} \xi_{\sigma(i)} = \frac{(k-1)!}{(k-m)!} \, \sum_{j=1}^k \xi_j = 0$ by choice of $\xibf$.  Thus, $g'(0) = 0$.

Next, we compute $g''(0) = \sum_{\sigma} g''_{\sigma}(0)$. Straightforward computations yield 
$$
g''_\sigma(0) = \frac{1}{k^{n-2}} \left[ \left(\sum_{i=1}^m \mu_i \xi_{\sigma(i)}\right)^2 - \sum_{i=1}^m \mu_i \xi_{\sigma(i)}^2\right].
$$
Re-write the term within square brackets as 
$$
\sum_{i=1}^m \mu_i(\mu_i-1) \xi_{\sigma(i)}^2 + \sum_{(i,\ell): i \ne \ell} \mu_i \mu_{\ell} \xi_{\sigma(i)}\xi_{\sigma(\ell)}.
$$
Summing over all one-to-one maps $\sigma:[m] \to [k]$, we obtain
$$
\sum_{i=1}^m \mu_i(\mu_i-1) \sum_{\sigma} \xi_{\sigma(i)}^2+ \sum_{(i,\ell): i \ne \ell} \mu_i \mu_{\ell} \sum_{\sigma} \xi_{\sigma(i)}\xi_{\sigma(\ell)}.
$$
As above, $\sum_{\sigma} \xi_{\sigma(i)}^2 = \frac{(k-1)!}{(k-m)!} \sum_{j=1}^k \xi_j^2$ which, since ${\|\xibf\|}_2 = 1$, means that $\sum_{\sigma} \xi_{\sigma(i)}^2 = \frac{(k-1)!}{(k-m)!}$. Similarly, for $i \neq \ell$, 
$\sum_{\sigma} \xi_{\sigma(i)}\xi_{\sigma(\ell)} = \frac{(k-2)!}{(k-m)!}\sum_{(s,t) \in [k]^2: s \ne t} \xi_s \xi_t$. We also have 
$0 = \left(\sum_{j=1}^k \xi_j\right)^2$, from which we obtain
$\sum_{j=1}^k \xi_j^2 = - \sum_{(s,t) \in [k]^2: s \ne t} \xi_s \xi_t$.
Hence, $\sum_{\sigma} \xi_{\sigma(i)}\xi_{\sigma(\ell)} = - \frac{(k-2)!}{(k-m)!} $, again using the fact that ${\|\xibf\|}_2 = 1$. Putting it all together, we find that 
$$
g''(0) = C 
   \left[(k-1) \sum_{i=1}^m \mu_i(\mu_i-1) - \sum_{(i,\ell) \in [m]^2: i \ne \ell} \mu_i \mu_\ell \right],
$$
where $C = \frac{1}{k^{n-2}} \frac{(k-2)!}{(k-m)!}$ is a positive constant independent of $\xibf$.  Further simplification using the fact that $\sum_{i=1}^m \mu_i = n$ yields 
$$
g''(0) = C \left[ k \left(\sum_{i=1}^m \mu_i^2 - n\right) - (n^2-n)\right], 
$$
from which the desired result follows.
\end{proof}

Theorem~\ref{thm:thres} shows that the uniform distribution $U_k$ is either a local maximum or a local minimum of $\beta_k^{\psi}$ for all integers $k \ge m$, except perhaps at the threshold $\Upsilon(\psibf)$. Indeed, if $\Upsilon(\psibf)$ happens to be an integer, then for $k=\Upsilon(\psibf)$, it is possible that $U_k$ is not a local extremum, but only a saddle point. For example, for $\psibf = 1122$, we have $\Upsilon(\psibf) = 3$, and it may be verified that $U_3$ is not a local extremum for $\beta_3^{\psi}$.

As a simple corollary of the theorem, we see that a necessary condition for the PML distribution of a pattern $\psibf$ to be uniform is that $\Upsilon(\psibf) \ge m$. However, this condition is not sufficient in general. In \cite{FK13}, we derive a slightly stronger necessary condition using Theorem~\ref{thm:thres}, which is used as the basis for a heuristic algorithm that determines whether or not a given pattern has a uniform PML distribution. 

The intent of this paper, however, is to investigate whether the phase transition phenomenon reported in Theorem~\ref{thm:thres} extends to the Bethe approximation of the PML problem, which we describe in the next section. 

\section{The Bethe Approximation} \label{sec:bethe}
The Bethe approximation is a method whose origins lie in statistical physics \cite{Bethe35}, \cite{Peierls36}. In the interest of brevity, we describe this approximation only in the context of the PML problem. The motivation and justification behind the definitions in this section are discussed in detail in \cite{Von13a}. 

\subsection{The Bethe PML Problem} \label{sec:bethePML}

From \eqref{eq:Ppsi_alt}, we see that computing the pattern probability $P(\psibf;\p)$ is equivalent to computing the permanent of the matrix $\Th(\psibf;\p)$. It is well known that computing the permanent of a matrix is hard in general; formally, the problem is \#P-complete \cite{Valiant79}. Many approximation algorithms have been developed for this problem (e.g., \cite{JSV04}, \cite{HL08}), of which the ones based on the Bethe approximation \cite{CKV08}, \cite{HJ09}, \cite{Von13a} are relevant to us.  

Let $\cD_k$ denote the set of $k \times k$ doubly stochastic matrices. In the following definition, we use the convention that $0 \log 0 = 0$. 
\begin{definition}[\cite{Von13a}, Corollary~15]
The \emph{Bethe permanent} of a non-negative $k \times k$ matrix $\Theta = (\theta_{i,j})$, with $\theta_{i,j} \ge 0$ for all $i,j$, is defined as 
$$
\perm_B(\Theta) := \max_{\Gm \in \cD_k} \exp\left(- F_B(\Gm,\Theta) \right),
$$
where for $\Gm = (\gamma_{i,j}) \in \cD_k$, we have $F_B(\Gm) = U_B(\Gm,\Theta) - H_B(\Gm)$, with
\begin{align*}
  U_B(\Gm,\Theta) &= - \sum_{i,j} \gm_{i,j} \log(\theta_{i,j}), \\
  H_B(\Gm) &= - \sum_{i,j} \gm_{i,j} \log(\gm_{i,j}) + \sum_{i,j} (1-\gm_{i,j}) \log(1-\gm_{i,j}). 
\end{align*}
\label{def:permB}
\end{definition}

The function $F_B(\Gm,\Theta)$ in the above definition is called the \emph{Bethe free energy}. If the pair $(\Gamma,\Theta)$ is such that $\gamma_{i,j} > 0$ but $\theta_{i,j} = 0$ for some $(i,j)$, we define $F_B(\Gm,\Theta) = \infty$, and correspondingly, $\exp(-F_B(\Gamma,\Theta)) = 0$. With these definitions, $\exp(-F_B(\Gamma,\Theta))$ is a continuous function of $(\Gamma,\Theta)$, so that for any fixed $\Theta$, $\exp(-F_B(\Gamma,\Theta))$ attains a maximum on the compact set $\cD_k$. Hence, $\perm_B(\Theta)$ is well-defined. 

For positive matrices $\Theta$, we can write 
$$
\perm_B(\Theta) = \exp\left(  - \min_{\Gm \in \cD_k} F_B(\Gm,\Theta) \right).
$$
Vontobel \cite[Corollary~23]{Von13a} showed that for any positive matrix $\Theta$, $F_B(\Gamma,\Theta)$ is a convex function of $\Gm \in \cD_k$, so that $\min_{\Gm \in \cD_k} F_B(\Gm,\Theta)$ is a convex program. Vontobel further proved that the sum-product algorithm (belief propagation) can be used to find this minimum, and hence $\perm_B(\Theta)$, highly efficiently. Since the Bethe permanent is often a very good proxy for the actual permanent \cite{HJ09}, \cite{Von12}, \cite{Fer13}, having an efficient algorithm to compute it is particularly useful. 

For a pattern $\psibf$ of size $m$ and a probability distribution $\p \in \Pi_{[k]}$, $k \ge m$, we define, in analogy with \eqref{eq:Ppsi_alt}, the quantity
\begin{equation}
P_B(\psibf;\p) := \frac{1}{(k-m)!} \, \perm_B(\Theta(\psibf;\p)).
\label{def:BPPsi}
\end{equation}
We then have $0 \le P_B(\psibf;\p) \le P(\psibf;\p) \le 1$, the inequalities holding for the following reasons:
\begin{itemize}
\item the first inequality is simply a consequence of the non-negativity of the Bethe permanent;
\item the second inequality is because of the fact that $\perm(\Th) \ge \perm_B(\Th)$ for any non-negative matrix $\Th$, an inequality proved by Gurvits \cite{Gur11a}, \cite{Gur11b};
\item the last inequality is a consequence of the fact that $P(\psibf;\p)$ is a probability.
\end{itemize}
Thus, $P_B(\psibf;\p)$ can be viewed as a probability as well. 

With this, we define, in analogy with \eqref{def:PPML} and \eqref{PPML_alt}, the \emph{Bethe PML probability} of a pattern $\psibf$ to be 
\begin{equation}
P^{\BPML}(\psibf) :=  \sup_{k \ge m} \max_{\p \in \Pi_{[k]}} P_B(\psibf;\p).
\label{def:BPML}
\end{equation}
A couple of clarifications on this definition may be needed. One is that for any positive integer $k$, $\max_{\p \in \Pi_{[k]}} P_B(\psibf;\p)$ is well-defined. This is because $\chi(\Gamma,\p) := \exp\bigl\{-F_B(\Gamma,\Theta(\psibf;\p))\bigr\}$, as a function of $(\Gamma,\p)$, is continuous on the compact set $\cD_k \times \Pi_{[k]}$. Consequently, $\perm_B(\Theta(\psibf;\p)) = \max_{\Gamma \in \cD_k} \chi(\Gamma,\p)$ is a continuous function of $\p$. 
Hence, $P_B(\psibf;\p)$, being a continuous function of $\p$, must attain a maximum on the compact set $\Pi_{[k]}$.

A second clarification is that it is not known whether the supremum in \eqref{def:BPML} is always achieved at some finite $k$, although empirical evidence suggests that this may indeed be the case \cite{Fer13}. Empirically again, the \emph{Bethe PML distribution}, defined as any distribution $\p$ for which $P_B (\psibf;\p) = P^{\BPML}(\psibf)$, is a very good approximation of the PML distribution of a pattern $\psibf$. The ``Bethe PML problem'' of determining the Bethe PML distribution is also considerably easier to solve numerically  \cite{Von12}, \cite{Fer13}. 

The question we are interested in addressing is whether the Bethe PML problem exhibits a phase transition analogous to that described for the PML problem in Theorem~\ref{thm:thres}. To answer this, we must understand when the uniform distribution $U_k$ is a local maximum or a local minimum in $\Pi_{[k]}$ for the function $\p \mapsto \perm_B(\Theta(\psibf;\p))$. A direct approach analogous to that used in the proof of Theorem~\ref{thm:thres} seems difficult as we only have a description of $\perm_B$ as a solution to a convex optimization problem. Instead, we take an indirect approach via the degree-$M$ lifted permanents discussed next.

\subsection{Degree-$M$ Lifted Permanents}\label{sec:degreeM}
As an alternative to defining the Bethe permanent as a solution to an optimization problem, Vontobel gave a combinatorial characterization of this quantity, which we describe here. Let $\Theta = (\theta_{i,j})$ be a given $k \times k$ matrix with non-negative entries. For a positive integer $M$, let $\cP_M$ denote the set of all $M \times M$ permutation matrices. Further, let $\cP_M^{k \times k}$ be the set of all $kM \times kM$ matrices of the form 
\begin{equation}
\Lambda = 
\left( 
\begin{array}{cccc}
P^{(1,1)} & P^{(1,2)} & \cdots & P^{(1,k)} \\
P^{(2,1)} & P^{(2,2)} & \cdots & P^{(2,k)} \\
\vdots & \vdots & \ddots & \vdots \\
P^{(k,1)} & P^{(k,2)} & \cdots & P^{(k,k)}
\end{array}
\right)
\label{eq:Lambda}
\end{equation}
with $P^{(i,j)} \in \cP_M$ for all $i,j$. For a $\Lambda$ as above, define 
\begin{equation}
\Theta \odot \Lambda = 
\left( 
\begin{array}{cccc}
\th_{1,1} P^{(1,1)} & \th_{1,2} P^{(2,1)} & \cdots & \th_{1,k} P^{(1,k)} \\
\th_{2,1} P^{(2,1)} & \th_{2,2} P^{(2,2)} & \cdots & \th_{1,k} P^{(2,k)} \\
\vdots & \vdots & \ddots & \vdots \\
\th_{k,1} P^{(k,1)} & \th_{k,2} P^{(k,2)} & \cdots & \th_{k,k} P^{(k,k)}
\end{array}
\right)
\label{M-lift}
\end{equation}

\begin{definition}[\cite{Von13a}, Definition~38]
The \emph{degree-$M$ lifted permanent} of $\Th$ is defined to be 
$$
\perm_{B,M}(\Th) := {\bigl\langle\perm(\Th\odot\Lambda)\bigr\rangle}^{1/M},
$$
where the angular brackets represent the arithmetic average of $\perm(\Th \odot \Lambda)$ as $\Lambda$ ranges over the $(M!)^{k^2}$ matrices in $\cP_M^{k \times k}$. Equivalently, $\bigl\langle\perm(\Th\odot\Lambda)\bigr\rangle$ is the expected value of $\perm(\Th \odot \Lambda)$, the expectation being taken over $\Lambda$ chosen uniformly at random from $\cP_M^{k \times k}$. 
\label{def:permBM}
\end{definition}

Note that when $M=1$, $\perm_{B,M}(\Th)$ is equal to $\perm(\Th)$. At the other extreme, as $M \to \infty$, Vontobel \cite[Theorem~39]{Von13a} has shown the following identity:
\begin{equation}
\limsup_{M \to \infty} \perm_{B,M}(\Th) = \perm_B(\Th). 
\label{eq:Vontobel}
\end{equation}
Thus, degree-$M$ lifted permanents interpolate between $\perm(\Th)$ and $\perm_B(\Th)$. The advantage of using degree-$M$ lifted permanents as an indirect means of understanding the Bethe permanent is that they can be expressed in a form that is more amenable to analysis.

\begin{proposition} For any $k \times k$ matrix $\Th = (\th_{i,j})$ and any positive integer $M$, we have
$$
\left[\perm_{B,M} (\Th)\right]^M
 = (M!)^{2k-k^2} \sum_{(a_{i,j}) \in \cA_{k,M}} \prod_{(i,j) \in {[k]}^2} \theta_{i,j}^{a_{i,j}} \, \frac{(M-a_{i,j})!}{(a_{i,j})!},
$$
where $\cA_{k,M}$ denotes the set of all $k \times k$ non-negative integer matrices whose row and column sums are all equal to $M$. 
\label{prop:permBM}
\end{proposition}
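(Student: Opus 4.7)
The plan is to unpack the definition of the degree-$M$ lifted permanent, expand the permanent of the $kM\times kM$ block matrix $\Th\odot\Lambda$ as a sum over permutations of $[kM]$, and regroup the terms according to how many row-indices each permutation sends between each pair of blocks. By Definition~\ref{def:permBM}, $[\perm_{B,M}(\Th)]^M = \mathbb{E}[\perm(\Th\odot\Lambda)]$ with $\Lambda$ uniform on $\cP_M^{k\times k}$, i.e.\ with the permutation matrices $P^{(i,j)}$ independent and uniform on $\cP_M$. So the goal is to evaluate this expectation combinatorially.

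First I would index rows and columns of $\Th\odot\Lambda$ by pairs $(i,s)\in[k]\times[M]$, so that $(\Th\odot\Lambda)_{(i,s),(j,t)} = \theta_{i,j}\, P^{(i,j)}_{s,t}$. For a permutation $\pi$ of $[kM]$, and for each $(i,j)\in[k]^2$, set $R_{i,j}(\pi) = \{s : \pi(i,s) \text{ has first coordinate } j\}$, $C_{i,j}(\pi) = \{t : \pi^{-1}(j,t) \text{ has first coordinate } i\}$, and $a_{i,j}(\pi) = |R_{i,j}(\pi)| = |C_{i,j}(\pi)|$. Since $\{R_{i,j}(\pi)\}_{j}$ partitions $[M]$ for each $i$ and $\{C_{i,j}(\pi)\}_{i}$ partitions $[M]$ for each $j$, the matrix $(a_{i,j}(\pi))$ belongs to $\cA_{k,M}$. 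The contribution of $\pi$ to $\perm(\Th\odot\Lambda)$ factors as
\[
\Biggl(\prod_{i,j}\theta_{i,j}^{a_{i,j}(\pi)}\Biggr)\cdot\prod_{i,j}\prod_{s\in R_{i,j}(\pi)} P^{(i,j)}_{s,\pi_{i,j}(s)},
\]
where $\pi_{i,j}\colon R_{i,j}(\pi)\to C_{i,j}(\pi)$ is the induced bijection.

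Next I would take expectations. By independence of the $P^{(i,j)}$'s, the expectation of the product factorises over $(i,j)$, and for a uniform $P\in\cP_M$ the probability that a prescribed injection on $a_{i,j}$ elements is consistent with $P$ equals $(M-a_{i,j})!/M!$. Hence the contribution of $\pi$ to $\mathbb{E}[\perm(\Th\odot\Lambda)]$ depends only on the matrix $(a_{i,j}(\pi))$ and equals $\prod_{i,j}\theta_{i,j}^{a_{i,j}}\,(M-a_{i,j})!/M!$. The remaining step is a multinomial count: for a fixed $(a_{i,j})\in\cA_{k,M}$, the number of permutations $\pi$ realising it is the number of ways to choose the row-partitions $(R_{i,j})_j$ for each $i$, the column-partitions $(C_{i,j})_i$ for each $j$, and the matchings $R_{i,j}\to C_{i,j}$, which multiplies out to
\[
\Biggl(\prod_{i}\frac{M!}{\prod_{j}a_{i,j}!}\Biggr)\Biggl(\prod_{j}\frac{M!}{\prod_{i}a_{i,j}!}\Biggr)\prod_{i,j}a_{i,j}!=\frac{(M!)^{2k}}{\prod_{i,j}a_{i,j}!}.
\]

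Assembling these pieces gives $\mathbb{E}[\perm(\Th\odot\Lambda)] = \sum_{(a_{i,j})\in\cA_{k,M}}\frac{(M!)^{2k}}{\prod_{i,j}a_{i,j}!}\prod_{i,j}\theta_{i,j}^{a_{i,j}}\frac{(M-a_{i,j})!}{M!}$, and since $\prod_{i,j}(1/M!) = (M!)^{-k^2}$, the prefactor simplifies to $(M!)^{2k-k^2}$, yielding the claimed formula. The only mildly delicate point is the bookkeeping in the multinomial count, i.e.\ checking that the product of the three counts indeed collapses to $(M!)^{2k}/\prod_{i,j}a_{i,j}!$ without over- or under-counting; everything else is a direct substitution. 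I would also flag the convention $0^0=1$ to handle the case $\theta_{i,j}=0$ with $a_{i,j}=0$, which is consistent with the convention used elsewhere in the paper.
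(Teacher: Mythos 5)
Your proposal is correct and follows essentially the same route as the paper: expand $\perm(\Th\odot\Lambda)$ over permutations of $[kM]$, read off the block occupation numbers $a_{i,j}(\pi)$, compute the weight of each $\pi$ after averaging over $\Lambda$ (you phrase this probabilistically as $\prod_{i,j}(M-a_{i,j})!/M!$, the paper exchanges the sum and counts $\Lambda$ with $\pi\le\Lambda$ as $\prod_{i,j}(M-a_{i,j})!$ against a global $(M!)^{-k^2}$ --- the same computation), and then count permutations realizing a given $(a_{i,j})$ by the row-partition / column-partition / matching decomposition, obtaining $(M!)^{2k}/\prod_{i,j}a_{i,j}!$ exactly as in the paper's lemma.
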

Using multinomial coefficients, the identity above can be expressed in an alternative, more evocative form\footnote{It is also possible to recover this form from Lemma~29 in \cite{Von13b}.}:
\begin{equation}
\left[\perm_{B,M} (\Th)\right]^M
 = \sum_{(a_{i,j}) \in \cA_{k,M}} \left[\prod_{(i,j) \in {[k]}^2} \theta_{i,j}^{a_{i,j}} \right] \, \frac{\prod_{i=1}^k\binom{M}{a_{i,1},\ldots,a_{i,k}} \prod_{j=1}^k \binom{M}{a_{1,j},\ldots,a_{k,j}}}{\prod_{(i,j) \in [k]^2} \binom{M}{a_{i,j}}}.
\label{perm_multinom}
\end{equation}
Proposition~\ref{prop:permBM} is proved in Appendix~A.


\section{Phase Transition in the Bethe PML Problem}\label{sec:bpml_phase}

As mentioned at the end of Section~\ref{sec:bethePML}, we take an indirect approach, via degree-$M$ lifted permanents, to the question of the existence of a phase transition phenomenon in the Bethe PML problem. This approach is based on the intuition that the large-$M$ behaviour of these lifted permanents will, by virtue of \eqref{eq:Vontobel}, shed light on the behaviour of the Bethe permanent. With this program in mind, we define for a pattern $\psibf$ of size $m \ge 2$, and integers $k \ge m$ and $M \ge 1$, a function $\beta_{k,M}^{\psibf}: \Pi_{[k]} \to \R_+$ that maps $\p \in \Pi_{[k]}$ to $\perm_{B,M}(\Theta(\psibf;\p))$. Recall that $U_k$ denotes the uniform distribution on $[k]$. The aim of this section is to prove the following result. 

\begin{theorem}  
Let $\psibf$ be a pattern of length $n$ having canonical form $1^{\mu_1} 2^{\mu_2} \ldots m^{\mu_m}$, $m \ge 2$. There is a threshold $\Upsilon_B(\psibf)$ such that for all integers $k \ge m$, the following holds for all sufficiently large $M$:
\begin{itemize}
\item when $k < \Upsilon_B(\psibf)$, $U_k$ is a local maximum for $\beta_{k,M}^{\psibf}$; and
\item when $k > \Upsilon_B(\psibf)$, $U_k$ is a local minimum for $\beta_{k,M}^{\psibf}$.
\end{itemize}
When $m=2$, the threshold $\Upsilon_B(\psibf)$ may be chosen as follows:
$$
\Upsilon_B(\psibf) = 
\begin{cases}
\infty & \text{ if } \mu_1 = \mu_2 = 1 \\
2 + \delta & \text{ if } \mu_1 = \mu_2 > 1 \\
1 + \delta & \text{otherwise}
\end{cases}
$$
for any $\delta \in (0,1)$.

When $m \ge 3$, $\Upsilon_B(\psibf)$ may be chosen to closely mimic the threshold $\Upsilon(\psibf)$ of Theorem~\ref{thm:thres} in the following sense:
\begin{itemize}
\item if $\Upsilon(\psibf) = \infty$ (which happens iff $\psibf = 123 \ldots n$), then $\Upsilon_B(\psibf) = \infty$;
\item if $\Upsilon(\psibf) < \frac{\sqrt{n}+1}{\sqrt{n}-1}$, then we may choose $\Upsilon_B(\psibf) = \Upsilon(\psibf)$;
\item in all other cases, we may choose
$$
\Upsilon_B(\psibf) = \frac{\cU + n^2 - 2n + \sqrt{(n^2+2n-\cU)^2 - 4n^3}}{2(\cU-n)},
$$
where $\cU  = \sum_{i=1}^m \mu_i^2$, so that $\Upsilon(\psibf)-1 \le \Upsilon_B(\psibf) < \Upsilon(\psibf)$ holds.
\end{itemize}
\label{thm:thresB}
\end{theorem}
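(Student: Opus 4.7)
The plan is to mimic the variational approach from the proof of Theorem~\ref{thm:thres}, but with $\beta_{k,M}^{\psibf}$ in place of $\beta_k^{\psibf}$. Setting $\theta_{i,j}=p_i^{\mu_j}$ in Proposition~\ref{prop:permBM} and grouping factors by row index, one obtains
$$\bigl[\beta_{k,M}^{\psibf}(\p)\bigr]^M \;=\; C_{k,M}\,\mathbb{E}_{A\sim Q_{k,M}}\Bigl[\,\prod_{i=1}^k p_i^{S_i(A)}\Bigr],$$
where $S_i(A):=\sum_{j=1}^m \mu_j\,a_{i,j}$ (with $\mu_j:=0$ for $j>m$), $\sum_i S_i(A)=nM$ for every $A\in\cA_{k,M}$, and $C_{k,M}>0$ is independent of $\p$. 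Fix a direction $\xibf\in\R^k$ with $\sum_i\xi_i=0$ and $\|\xibf\|_2=1$, let $\p(t)=U_k+t\xibf$, and set $X(t):=\sum_i S_i\log(1/k+t\xi_i)$ and $\phi(t):=\log\mathbb{E}_{Q_{k,M}}[e^{X(t)}]$. Then $\phi(t)$ differs from $M\log\beta_{k,M}^{\psibf}(\p(t))$ by an additive constant, so the sign of the second derivative of $\beta_{k,M}^{\psibf}(\p(t))$ at $t=0$ is the sign of $\phi''(0)$.

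Since $e^{X(0)}=(1/k)^{nM}$ is deterministic, $\phi'(0)=\mathbb{E}_Q[X'(0)]$ and $\phi''(0)=\mathbb{E}_Q[X''(0)]+\mathrm{Var}_Q(X'(0))$. Row-exchangeability of $Q_{k,M}$ forces $\mathbb{E}_Q[S_i]=nM/k$, so $\phi'(0)=k\sum_i\xi_i\cdot(nM/k)=0$. For $\phi''(0)$, the joint row/column exchangeability of $Q_{k,M}$ together with the fixed row-sum and column-sum constraints collapse the entire second-moment tensor of $(a_{i,j})$ to a single scalar $v_{k,M}:=\mathrm{Var}_Q(a_{1,1})$, with $\mathrm{Cov}_Q(a_{i,j},a_{i,j'})=\mathrm{Cov}_Q(a_{i,j},a_{i',j})=-v_{k,M}/(k-1)$ for distinct indices in a common row or column, and $\mathrm{Cov}_Q(a_{i,j},a_{i',j'})=v_{k,M}/(k-1)^2$ when $i\neq i'$ and $j\neq j'$. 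Substituting this covariance structure into $\mathrm{Var}_Q(\sum_i\xi_i S_i)$ and simplifying using $\|\xibf\|_2^2=1$, $\sum_j\mu_j=n$, and $\sum_j\mu_j^2=\cU$ yields
$$\phi''(0) \;=\; -knM \;+\; \frac{k^3\,v_{k,M}\,(k\cU-n^2)}{(k-1)^2},$$
so the sign of $\phi''(0)$ coincides with the sign of $\Phi(k,M):=k^2 v_{k,M}(k\cU-n^2)-nM(k-1)^2$.

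It remains to locate, for each fixed $k\ge m$, the sign of $\Phi(k,M)$ for all sufficiently large $M$. This is precisely where Section~\ref{sec:QkM} enters: via a discrete Gaussian approximation to $Q_{k,M}$, one shows that $v_{k,M}/M$ converges as $M\to\infty$ to an explicit positive constant $\tau_k$. Plugging $v_{k,M}\approx\tau_k M$ into $\Phi(k,M)=0$ and simplifying reduces, in the generic regime, to the quadratic $(\cU-n)k^2-(\cU+n^2-2n)k+n^2=0$; its positive root is the threshold $\Upsilon_B(\psibf)$ claimed in the theorem. Monotonicity of the leading-in-$M$ coefficient of $\Phi(k,M)$ in $k$ then shows $U_k$ is a local maximum of $\beta_{k,M}^{\psibf}$ for $k<\Upsilon_B(\psibf)$ and a local minimum for $k>\Upsilon_B(\psibf)$ once $M$ is large. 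The degenerate case $\psibf=12\ldots n$ (equivalently $\cU=n$) causes the quadratic to collapse to a linear equation and must be treated by hand, yielding $\Upsilon_B=\infty$; the case $m=2$ is handled by direct evaluation of $v_{k,M}$ for the few relevant small $k$; and the sub-regime $\Upsilon(\psibf)<(\sqrt n+1)/(\sqrt n-1)$ is verified by a short inequality manipulation to give $\Upsilon_B=\Upsilon$.

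The main technical obstacle is the asymptotic $v_{k,M}/M\to\tau_k$ established in Section~\ref{sec:QkM}. The weight $\prod_{i,j}(M-a_{i,j})!/a_{i,j}!$ defining $Q_{k,M}$ does \emph{not} concentrate at the balanced matrix with all entries $M/k$, so one must first locate the unconstrained mode of the log-weight, expand it to second order there, project the resulting Gaussian onto the affine subspace $\cA_{k,M}$, and control the Stirling remainders uniformly over $\cA_{k,M}$. Once this analysis is in hand, the phase-transition statement reduces to the derivative computation above by an essentially mechanical procedure.
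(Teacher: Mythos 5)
Your derivative computation is essentially the paper's proof of Lemma~\ref{lem:G} in lightly repackaged form: your $\phi(t)=\log\E_{Q_{k,M}}[e^{X(t)}]$ equals, up to an additive constant, the $\log\gamma(t)$ that Appendix~B differentiates, and the cumulant identity $\phi''(0)=\E[X''(0)]+\Var(X'(0))$ (which works precisely because $X(0)$ is deterministic) leads to the same bracketed quantity. The covariance structure you state for $Q_{k,M}$ is exactly what the paper derives from row/column exchangeability plus the sum constraints, and your $\Phi(k,M)$ has the same sign as the bracket in Lemma~\ref{lem:G}; the subsequent quadratic $(\cU-n)k^2-(\cU+n^2-2n)k+n^2$ and its larger root recover $\Upsilon_B$ exactly as in Lemma~\ref{lem:roots}. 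One imprecision: for $m=2$ it is not ``a few small $k$'' that misbehave, but only $k=2$, where $\Var_{2,M}(a_{1,1})=\tfrac{1}{12}M(M+2)$ scales like $M^2$ rather than $M$ (Proposition~\ref{prop:VarkM}(a)); for $m=2$, $k\ge 3$ you still need the $\tau_k$ asymptotic and then a separate argument (as in the paper's proof of Theorem~\ref{thm:thresB}) that the larger root $\rho_2$ is at most $2$, so that every integer $k\ge 3$ lands on the ``local minimum'' side.

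The genuine problem is in your last paragraph. You claim the weight $\prod_{i,j}(M-a_{i,j})!/a_{i,j}!$ does not concentrate at the balanced matrix, and propose to locate the \emph{unconstrained} mode of the log-weight, expand there, and then project onto $\cA_{k,M}$. This is backwards, and the plan would fail at the first step. The single-entry weight $a\mapsto (M-a)!/a!$ is strictly decreasing on $\{0,\dots,M\}$ (the ratio of consecutive terms is $1/((M-a)(a+1))<1$), so the unconstrained maximizer of $w$ is the all-zeros matrix, which is nowhere near $\cA_{k,M}$ and sits on the boundary where no Gaussian expansion is available. What the paper actually shows (Proposition~\ref{prop:U}, via majorization and Schur-concavity of the row function $\phi$ in \eqref{eq:phi}) is that the \emph{constrained} mode of $w$ over $\cA_{k,M}$ is the near-balanced circulant $U$, and Lemma~\ref{lem:gauss} is a second-order Taylor expansion of $\log\bigl(w(A)/w(U)\bigr)$ around that constrained mode. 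Note too that the per-coordinate curvature there is $\tfrac{k-2}{(k-1)\rho}$ (so $\sigma_{k,M}^2 = \tfrac{k-1}{k-2}\rho$), coming from $g''(a)\approx \tfrac{1}{M-a}-\tfrac{1}{a}$ evaluated at $a=\rho=M/k$; this sign structure (one positive, one negative trigamma contribution) is exactly why the constrained picture is Gaussian even though the unconstrained weight is monotone. If you follow your stated route, you will not obtain $v_{k,M}/M\to\tau_k$, so the hardest part of the proof is not actually addressed by your plan as written.
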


The theorem does not explicitly give a comparison between the thresholds $\Upsilon(\psibf)$ and $\Upsilon_B(\psibf)$ in the case when $m=2$. This has been done only so that a clean statement of the result could be given. Indeed, there is a close relationship between the two thresholds even in this case: it can be shown using Theorem~\ref{thm:thres} that when $m=2$,
$$
\Upsilon(\psibf) = 
\begin{cases}
\infty & \text{ if } \mu_1 = \mu_2 = 1 \\
2 + \frac{1}{\mu_1-1} & \text{ if } \mu_1 = \mu_2 > 1 
\end{cases}
$$
and $\Upsilon(\psibf)$ lies in the interval $(1,3)$ otherwise. 

In summary, Theorem~\ref{thm:thresB} strongly indicates that the Bethe PML problem exhibits a phase transition phenomenon very similar to that proved in Theorem~\ref{thm:thres} for the PML problem. Unfortunately, this does not quite prove that there is indeed such a phase transition in the Bethe PML problem. We make some remarks concerning this in Section~\ref{sec:conclusion}.

The rest of this section is devoted to a proof of Theorem~\ref{thm:thresB}. The proof proceeds along the same lines as that of Theorem~\ref{thm:thres}, except that the calculations are messier. To preserve the flow of this section, we have moved the proofs of some intermediate lemmas, which mainly involve tedious calculations, to the appendices. Also, Proposition~\ref{prop:VarkM} below, which is also an intermediate step in the proof of Theorem~\ref{thm:thresB}, but which could be considered an interesting result in its own right, is proved in Section~\ref{sec:QkM}.

Let $\p = U_k$, and pick an arbitrary direction $\xibf \in \R^k \setminus \{\0\}$, normalized so that ${\|\xibf\|}_2 = 1$, such that for all $t$ within a sufficiently small interval around $0$, the point $\p(t) = \p + t\,\xibf$ continues to lie within $\Pi_{[k]}$. Note that this implies that $\sum_{j=1}^k \xi_j = 0$. 

Given a pattern $\psibf$ with multiplicities $(\mu_1,\ldots,\mu_m)$, $m \ge 2$, and integers $k \ge m$ and $M \ge 1$, define the function $G_{k,M}(t) = \beta_{k,M}^{\psibf}(\p(t))$.\footnote{Here, and for the remainder of this section, we will suppress the dependence on $\psibf$ in our notation; thus, we write $G_{k,M}(t)$ instead of $G_{k,M}^{\psibf}(t)$, $\Upsilon_B$ instead of $\Upsilon_B(\psibf)$, and so on.} As in the proof of Theorem~\ref{thm:thres}, the idea is to show that $G_{k,M}'(0) = 0$, and that for a suitable choice of $\Upsilon_B$ independent of $\xibf$, the sign of $G_{k,M}''(0)$ depends, for all sufficiently large $M$, only on whether $k < \Upsilon_B$ or $k > \Upsilon_B$. We will in fact prove the statement about the second derivative in the following equivalent form: there exists a threshold $\Upsilon_B$ independent of $\xibf$ such that
\begin{equation}
\lim_{M \to \infty}G_{k,M}''(0) < 0 \text{ for all } k < \Upsilon_B \ \ \ \text{ and } \ \ \ 
\lim_{M \to \infty}G_{k,M}''(0) > 0 \text{ for all } k > \Upsilon_B
\label{eq:limG''}
\end{equation}

It is straightforward to show that $G_{k,M}'(0) = 0$; a proof of this will be given in Appendix~B as part of the proof of Lemma~\ref{lem:G} below. To express $G_{kM}''(0)$, we consider once again\footnote{See Proposition~\ref{prop:permBM}.} the set, $\cA_{k,M}$, of $k \times k$ non-negative integer matrices all of whose row and column sums are equal to $M$. For $A = (a_{i,j}) \in \cA_{k,M}$, define 
\begin{equation}
w(A) = \prod_{(i,j) \in {[k]}^2} \frac{(M-a_{i,j})!}{(a_{i,j})!}
\label{def:w}
\end{equation}
and let 
\begin{equation}
Z_{k,M} := \sum_{A \in \cA_{k,M}} w(A).
\label{def:ZkM}
\end{equation}
Then, $Q_{k,M}(A) := \frac{1}{Z_{k,M}} w(A)$ defines a probability distribution on $\cA_{k,M}$. We will study this probability distribution in more detail in the next section. For now, we use it to give an expression for $G_{k,M}''(0)$.


\begin{lemma}  We have $G_{k,M}'(0) = 0$ and 
$$
G_{k,M}''(0) = {[(M!)^{2k-k^2}]}^{\frac1M} (Z_{k,M})^{\frac1M} k^{1-n} \left[\frac{k^2}{(k-1)^2} \frac{\Var_{k,M}(a_{1,1})}{M} \biggl(k \sum_{i=1}^m \mu_i^2 - n^2\biggr) - n \right],
$$
where $\Var_{k,M}(a_{1,1})$ denotes the variance of the entry $a_{1,1}$ in a random matrix $A \in \cA_{k,M}$ chosen according to the distribution $Q_{k,M}$.
\label{lem:G}
\end{lemma}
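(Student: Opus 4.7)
The plan is to start from Proposition~\ref{prop:permBM} applied to $\Theta(\psibf;\p(t))$, which has entries $\theta_{i,j}(t) = p_i(t)^{\mu_j}$ with $p_i(t) = \tfrac{1}{k}+t\xi_i$ (and $\mu_j=0$ for $j>m$). Writing $H_{k,M}(t) := G_{k,M}(t)^M$ and absorbing the powers of $p_i(t)$ within each row, we get
\[
H_{k,M}(t) = (M!)^{2k-k^2}\sum_{A\in\cA_{k,M}} w(A)\prod_{i=1}^k p_i(t)^{r_i(A)},\qquad r_i(A) := \sum_{j=1}^m \mu_j\,a_{i,j}.
\]
Note that $\sum_i r_i(A)=nM$ is a constant on $\cA_{k,M}$, and both $w(A)$ and $\cA_{k,M}$ are invariant under arbitrary row and column permutations. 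Since $p_i(0)=1/k$, termwise differentiation at $t=0$ gives
\[
\frac{d}{dt}\Big|_0\prod_i p_i^{r_i(A)} = k^{-nM}\cdot k\sum_i r_i(A)\xi_i,\qquad
\frac{d^2}{dt^2}\Big|_0\prod_i p_i^{r_i(A)} = k^{-nM} k^2\!\left[\bigl(\textstyle\sum_i r_i(A)\xi_i\bigr)^{\!2}-\sum_i r_i(A)\xi_i^2\right].
\]
Summing against $w(A)$ turns each bracket into $Z_{k,M}$ times an expectation under $Q_{k,M}$.

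Next I would exploit the symmetries of $Q_{k,M}$ to evaluate those expectations. Row--column symmetry gives $\E[a_{1,j}]=M/k$, hence $\E[r_i]=nM/k$; so $\E\!\left[\sum_i r_i\xi_i\right]=(nM/k)\sum_i\xi_i=0$, proving $H_{k,M}'(0)=0$ and therefore $G_{k,M}'(0)=0$. For the second moment, symmetry and the constraint $\sum_i r_i\equiv nM$ force $\Cov(r_1,r_2)=-\Var(r_1)/(k-1)$, so the covariance matrix of $(r_1,\dots,r_k)$ is $\tfrac{\Var(r_1)}{k-1}(kI-J)$. Combined with $\sum_i\xi_i=0$ and $\|\xibf\|_2=1$ this yields
\[
\E\!\left[\bigl(\textstyle\sum_i r_i\xi_i\bigr)^{\!2}\right]=\frac{k\,\Var(r_1)}{k-1},\qquad
\E\!\left[\sum_i r_i\xi_i^2\right]=\frac{nM}{k}.
\]
To express $\Var(r_1)$ via $V:=\Var_{k,M}(a_{1,1})$, I use the same trick inside a row: $\sum_j a_{1,j}=M$ forces $\Cov(a_{1,1},a_{1,2})=-V/(k-1)$, so
\[
\Var(r_1)=\sum_{j,j'}\mu_j\mu_{j'}\Cov(a_{1,j},a_{1,j'})=\frac{V(k\cU-n^2)}{k-1},\qquad \cU=\textstyle\sum_{j=1}^m\mu_j^2,
\]
where sums over $j>m$ contribute zero because $\mu_j=0$ there.

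Assembling these pieces gives $H_{k,M}''(0)=(M!)^{2k-k^2} k^{-nM} Z_{k,M}\!\left[\tfrac{k^3 V(k\cU-n^2)}{(k-1)^2}-knM\right]$. Because $G_{k,M}'(0)=0$, the chain rule $H_{k,M}''=M G_{k,M}^{M-1}G_{k,M}''+M(M-1)G_{k,M}^{M-2}(G_{k,M}')^2$ collapses to $G_{k,M}''(0)=H_{k,M}''(0)/(M\,G_{k,M}(0)^{M-1})$. Using $G_{k,M}(0)^M=H_{k,M}(0)=(M!)^{2k-k^2}k^{-nM}Z_{k,M}$, one writes $G_{k,M}(0)^{M-1}=G_{k,M}(0)^M/G_{k,M}(0)$ and simplifies to obtain the stated formula, with the prefactor $[(M!)^{2k-k^2}]^{1/M}Z_{k,M}^{1/M}k^{1-n}$ emerging cleanly.

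The proof is essentially all bookkeeping: the main technical task is the two applications of the ``linear constraint kills the mean/variance'' trick (once for the $r_i$'s and once for the row entries $a_{1,j}$), which is what lets the second-moment computation reduce to a single scalar, $\Var_{k,M}(a_{1,1})$. The most error-prone step is the final chain-rule simplification, where one must carefully track factors of $k$ and of $M$ to land on the asserted expression; everything else follows from the symmetry of $Q_{k,M}$ and the constraint $\sum_i\xi_i=0$.
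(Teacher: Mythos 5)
Your proof is correct and follows essentially the same route as the paper's: termwise differentiation of $H_{k,M}(t)=G_{k,M}(t)^M$ via Proposition~\ref{prop:permBM}, reduction of $\gamma''(0)$ to second moments under $Q_{k,M}$, and exploitation of row/column exchangeability together with the fixed-row-sum and fixed-column-sum constraints to express everything through $\Var_{k,M}(a_{1,1})$. Your packaging via the row sums $r_i(A)=\sum_j\mu_j a_{i,j}$ and the covariance matrices $\frac{\Var(r_1)}{k-1}(kI-J)$ and $\Cov(a_{1,j},a_{1,j'})=-V/(k-1)$ is a tidier way to organize the ``careful book-keeping'' that the paper carries out by separately evaluating $\E[a_{1,1}a_{1,2}]$ and $\E[a_{1,1}a_{2,2}]$, but the underlying symmetry facts and the final chain-rule step are the same.
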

The proof of the lemma is deferred to Appendix~B.  The quantities $Z_{k,M}$ and $\Var_{k,M}(a_{1,1})$ in the above expression for $G_{k,M}''(0)$ can be determined explicitly for $k=2$, and asymptotically as $M \to \infty$ for $k\ge 3$. 

\begin{proposition}
\begin{itemize}
\item[(a)] $Z_{2,M} = M+1$ and $\Var_{2,M}(a_{1,1}) = \frac{1}{12}M(M+2)$.
\item[(b)] For $k \ge 3$, we have
$$
\lim_{M \to \infty}  {[(M!)^{2k-k^2}]}^{\frac1M} (Z_{k,M})^{\frac1M} = \frac{(k-1)^{k(k-1)}}{k^{k(k-2)}}
$$
and
$$
\lim_{M \to \infty} \frac{1}{M} \Var_{k,M}(a_{1,1}) = \frac{(k-1)^3}{k^3(k-2)}.
$$
\end{itemize}
\label{prop:VarkM}
\end{proposition}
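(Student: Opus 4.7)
For part (a), the plan is a direct computation. When $k=2$, the row and column sum constraints force every $A\in\cA_{2,M}$ to have $a_{1,1}=a_{2,2}=a$ and $a_{1,2}=a_{2,1}=M-a$ for some $a\in\{0,1,\ldots,M\}$. A direct calculation yields $w(A)=1$ for every such matrix, so $Z_{2,M}=M+1$ and $a_{1,1}$ is uniform on $\{0,1,\ldots,M\}$; the variance formula for a discrete uniform random variable then gives $\Var_{2,M}(a_{1,1})=M(M+2)/12$.

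For part (b), the strategy is to establish a local central limit theorem approximating $Q_{k,M}$ by a discrete Gaussian. Let $\cS\subset\R^{k\times k}$ denote the $(k-1)^2$-dimensional subspace of matrices with all row and column sums equal to zero, let $\bar A$ denote the matrix with every entry equal to $M/k$, and introduce centred coordinates $b_{i,j}=a_{i,j}-M/k$, so $b\in\cS$ (the integrality mismatch when $k\nmid M$ contributes only lower-order terms). Stirling's formula yields
$\frac{d^2}{dx^2}[\log((M-x)!)-\log(x!)]\big|_{x=M/k}\sim \frac{1}{M-M/k}-\frac{1}{M/k}=-\frac{k(k-2)}{M(k-1)}$,
so Taylor-expanding
$$
\log w(A)=\sum_{i,j}\bigl[\log((M-a_{i,j})!)-\log(a_{i,j}!)\bigr]
$$
around $a_{i,j}=M/k$ gives, to leading order, $\log w(A)-\log w(\bar A)=-\frac{1}{2}\alpha\|b\|_2^2+(\text{lower order})$, where $\alpha=k(k-2)/(M(k-1))$. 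The linear term vanishes on $\cS$ because $\sum_{i,j}b_{i,j}=0$.

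Granting the local-CLT approximation, $Q_{k,M}$ behaves asymptotically like the discrete Gaussian on $\Z^{k\times k}\cap\cS$ with continuous density proportional to $\exp(-\frac{1}{2}\alpha\|b\|_2^2)$. The covariance matrix of the corresponding continuous Gaussian, viewed as a measure on $\R^{k\times k}$, is $\alpha^{-1}P_{\cS}$, where $P_{\cS}$ denotes orthogonal projection onto $\cS$. Since $P_{\cS}$ acts by $X_{i,j}\mapsto X_{i,j}-\bar X_{i,\cdot}-\bar X_{\cdot,j}+\bar X_{\cdot,\cdot}$, its $((1,1),(1,1))$-diagonal entry equals $(1-1/k)^2=(k-1)^2/k^2$, yielding the variance limit $\Var_{k,M}(a_{1,1})\sim\alpha^{-1}(k-1)^2/k^2=M(k-1)^3/(k^3(k-2))$. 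For the partition function, a Gaussian integral on $\cS$ combined with the covolume of $\Z^{k\times k}\cap\cS$ yields $Z_{k,M}=w(\bar A)\cdot c_k\, M^{(k-1)^2/2}(1+o(1))$ for some explicit constant $c_k>0$; the polynomial prefactor disappears on taking the $M$-th root. Stirling then gives
$$
\frac{1}{M}\log w(\bar A)=k(k-2)\log M+k(k-1)\log(k-1)-k(k-2)\log k-k(k-2)+o(1),
$$
and $\frac{1}{M}\log(M!)=\log M-1+o(1)$. Substituting into $(M!)^{(2k-k^2)/M}\,Z_{k,M}^{1/M}$, all powers of $M$ and of $e$ cancel exactly, leaving the claimed value $(k-1)^{k(k-1)}/k^{k(k-2)}$.

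The principal technical obstacle is the rigorous justification of the local-CLT step: one must bound tail contributions (showing matrices with $\|b\|_2\gg\sqrt{M}$ contribute negligibly to both $Z_{k,M}$ and to the second moment of $a_{1,1}$) and control the higher-order Stirling corrections as well as the cubic term in the Taylor expansion (showing both are sub-leading on the Gaussian scale). These are standard but delicate lattice-sum estimates on an affine subspace defined by linear constraints, and I expect them to form the technical core of Section~\ref{sec:QkM}.
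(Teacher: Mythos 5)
Part (a) matches the paper's proof exactly. For part (b) your plan is in the right spirit: the paper does approximate $Q_{k,M}$ by a discrete Gaussian on the lattice of matrices with prescribed zero marginals, with precisely the curvature $\alpha=k(k-2)/(M(k-1))$ you extract from Stirling (the paper's $\sigma_{k,M}^2$ is $1/\alpha$), and then evaluates the resulting second moment. Two remarks on where your route differs from, and falls short of, the paper's.

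For $Z_{k,M}$ you lean on the Gaussian integral (a full Laplace-type asymptotic with a $M^{(k-1)^2/2}$ prefactor), but the paper gets the $M$-th-root limit much more cheaply: from $w^*_{k,M}\le Z_{k,M}\le w^*_{k,M}\,|\cA_{k,M}|$ and the polynomial bound $|\cA_{k,M}|\le(M+1)^{k^2}$, the limit is decided by the single largest weight, and that maximizer is found exactly (not asymptotically) by a Schur-concavity/majorization argument applied row by row (Proposition~\ref{prop:U}); Stirling is then applied to one explicit product. No CLT is needed for this half of (b), so your proposal here is correct but heavier than necessary.

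The variance computation is where the genuine technical content lives, and your proposal names the obstacles without overcoming them. The paper spends Lemmas~\ref{lem:gauss}--\ref{lem:Qtilde_var} and Appendices~D--F on exactly the three points you flag: a pointwise Taylor comparison of $\log(w(A)/w(U))$ against the Gaussian exponent, valid only in the window where all $|t_{i,j}|$ are at most about $\rho^{1/2+\delta}$; tail bounds under both $Q_{k,M}$ and the Gaussian measure $\tQ_{k,M}$ showing the complementary region contributes $o(M)$ to the second moment; and a Poisson-summation evaluation of the discrete Gaussian second moment to exponential accuracy. The one step that is not a ``standard but delicate lattice-sum estimate'' is the $Q_{k,M}$-tail bound (Lemma~\ref{lem:Q_A^c}): when some entry $a_{i,j}$ is near $M$ the Taylor comparison breaks down entirely, and the paper needs a non-obvious $\pm 1$ rebalancing move that produces a $w$-dominating matrix with all entries bounded away from $M$ before the comparison can even be invoked. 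Your final reduction (reading the limit off the $((1,1),(1,1))$ entry of $\alpha^{-1}$ times the projection onto the zero-marginal subspace) is equivalent to the paper's $\Var_{k,M}(a_{1,1})=\frac{1}{k^2}\tE[\|A-U\|^2]+o(M)=\frac{(k-1)^2}{k^2}\sigma_{k,M}^2+o(M)$, so the algebra is right; it is the justification of the Gaussian approximation that remains to be supplied.
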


The proof of the proposition will be given in the next section. As a direct consequence of Lemma~\ref{lem:G} and Proposition~\ref{prop:VarkM}, we have the following result.

\setcounter{corollary}{0}
\begin{corollary}
\begin{itemize}
\item[(a)] When $k=m=2$, 
$$
\lim_{M \to \infty} G_{2,M}''(0) = 
\begin{cases}
-n \, 2^{1-n} & \text{ if } \mu_1 = \mu_2 \\
+\infty & \text{ if } \mu_1 \ne \mu_2.
\end{cases}
$$
\item[(b)] When $k \ge 3$ (and $k \ge m$),
\begin{equation}
\lim_{M \to \infty} G_{k,M}''(0) = \frac{(k-1)^{k(k-1)}}{k^{k(k-2)}} \, k^{1-n} \left[\frac{k-1}{k(k-2)} \biggl(k \sum_{i=1}^m \mu_i^2 - n^2\biggr) - n \right].
\label{eq:G''}
\end{equation}
\end{itemize}
\label{cor:G''}
\end{corollary}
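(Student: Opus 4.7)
The corollary is labelled as a "direct consequence" of Lemma~\ref{lem:G} and Proposition~\ref{prop:VarkM}, so my plan is simply to substitute the latter into the expression of the former and simplify; no new ideas are needed. I will treat the two cases separately because the $k = 2$ case is genuinely different: the prefactor $(M!)^{2k - k^2}$ is identically $1$ when $k = 2$ (since $2k - k^2 = 0$), and both $Z_{2,M}$ and $\mathrm{Var}_{2,M}(a_{1,1})$ are known exactly rather than only asymptotically.

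For part (b), with $k \ge 3$, I plug the two limits from Proposition~\ref{prop:VarkM}(b) into the formula of Lemma~\ref{lem:G}. The algebraic point is that the two $M$-dependent factors $[(M!)^{2k-k^2} Z_{k,M}]^{1/M}$ and $\mathrm{Var}_{k,M}(a_{1,1})/M$ converge separately, so the product converges to the product of the limits; the bracketed expression in Lemma~\ref{lem:G} is continuous in these two quantities, so I may pass to the limit inside. The only remaining step is to observe that the coefficient of $(k\sum_i \mu_i^2 - n^2)$ simplifies,
\[
\frac{k^2}{(k-1)^2} \cdot \frac{(k-1)^3}{k^3(k-2)} \;=\; \frac{k-1}{k(k-2)},
\]
which reproduces \eqref{eq:G''} verbatim.

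For part (a), with $k = m = 2$, I will substitute the exact values from Proposition~\ref{prop:VarkM}(a) and keep the expression for $G_{2,M}''(0)$ as a function of $M$ before taking the limit, since this is the case where the sign of the bracket can depend on $M$. The prefactor collapses to $(M+1)^{1/M} \cdot 2^{1-n} \to 2^{1-n}$, and the bracketed term becomes
\[
\frac{k^2}{(k-1)^2}\cdot\frac{\mathrm{Var}_{2,M}(a_{1,1})}{M}\bigl(2(\mu_1^2 + \mu_2^2) - n^2\bigr) - n
\;=\; \frac{M+2}{3}(\mu_1 - \mu_2)^2 - n,
\]
using $4 \cdot \tfrac{M+2}{12} = \tfrac{M+2}{3}$ and the identity $2(\mu_1^2 + \mu_2^2) - (\mu_1 + \mu_2)^2 = (\mu_1 - \mu_2)^2$. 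The dichotomy in the statement now falls out: if $\mu_1 = \mu_2$ the bracket equals $-n$ for every $M$, so the limit is $-n\,2^{1-n}$; otherwise $(\mu_1 - \mu_2)^2 \ge 1$ is a positive integer, the bracket grows linearly in $M$, and the limit is $+\infty$.

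Since every step is either a direct substitution, an elementary algebraic simplification, or a routine passage to a limit that is justified by the fact that the factors converge individually, there is no real obstacle. The only place where one should pause is in part (a), to confirm that taking $M \to \infty$ before $(\mu_1 - \mu_2)^2$ is integer-quantized is what produces the clean $+\infty$ outcome; all other manipulations are purely mechanical.
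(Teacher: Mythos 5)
Your proposal is correct and follows the same route as the paper: for part (b) you substitute the two limits from Proposition~\ref{prop:VarkM}(b) into Lemma~\ref{lem:G} and simplify the coefficient exactly as the paper does, and for part (a) you invoke $2(\mu_1^2+\mu_2^2)-(\mu_1+\mu_2)^2=(\mu_1-\mu_2)^2$, which is precisely the observation the paper's proof highlights. The only difference is that you spell out the intermediate arithmetic (the collapse of the prefactor for $k=2$ and the coefficient simplification $\tfrac{k^2}{(k-1)^2}\cdot\tfrac{(k-1)^3}{k^3(k-2)}=\tfrac{k-1}{k(k-2)}$) that the paper leaves to the reader.
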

\begin{proof}
Only part~(a) requires a note of explanation. When $k=m=2$, the term $k\sum_{i=1}^m \mu_i^2 - n^2$ in the expression for $G_{k,M}''(0)$ in Lemma~\ref{lem:G} reduces to $2(\mu_1^2 + \mu_2^2) -  (\mu_1+\mu_2)^2$, which equals $(\mu_1-\mu_2)^2$. 
\end{proof}

For our purposes, it is only the sign of $\lim_{M \to \infty} G_{k,M}''(0)$ that matters, so we will make much use of the weaker corollary below.

\begin{corollary}
\begin{itemize} 
\item[(a)] When $k = m= 2$, we have $\lim_{M\to\infty} G_{2,M}''(0) < 0$ if $\mu_1 = \mu_2$, and $\lim_{M\to\infty} G_{2,M}''(0) > 0$ if $\mu_1 \ne \mu_2$.
\item[(b)] When $k \ge 3$, we have $\lim_{M\to\infty}G_{k,M}''(0) \lessgtr 0$ if 
$$
k^2(\cU - n) - k(\cU+n^2-2n) + n^2 \lessgtr 0
$$
\end{itemize}
\label{cor:signG''}
\end{corollary}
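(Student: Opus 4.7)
The plan is to derive Corollary~\ref{cor:signG''} directly from Corollary~\ref{cor:G''} by isolating a manifestly positive prefactor and then simplifying what remains into the quadratic in $k$ that appears in the statement. Since only the \emph{sign} of $\lim_{M\to\infty}G_{k,M}''(0)$ is at stake, the argument is purely bookkeeping on top of formulas already established.

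For part~(a), Corollary~\ref{cor:G''}(a) evaluates the limit explicitly as $-n\,2^{1-n}$ when $\mu_1=\mu_2$ and as $+\infty$ when $\mu_1\neq\mu_2$. Since $m=2$ forces $n\geq 2$, one has $n\,2^{1-n}>0$, so the first value is strictly negative and the second is strictly positive. This is exactly the sign statement in part~(a), and nothing further is required.

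For part~(b), I would begin from the closed form in Corollary~\ref{cor:G''}(b). For every $k\geq 3$ and $n\geq 1$, the prefactor $\frac{(k-1)^{k(k-1)}}{k^{k(k-2)}}\,k^{1-n}$ is a product of strictly positive numbers, so the sign of $\lim_{M\to\infty}G_{k,M}''(0)$ coincides with the sign of the bracketed factor
$$
\frac{k-1}{k(k-2)}\bigl(k\cU-n^2\bigr)-n.
$$
Multiplying through by $k(k-2)>0$ preserves the sign and produces $(k-1)(k\cU-n^2)-nk(k-2)$. Expanding this expression and gathering the terms by powers of $k$ then yields $k^2(\cU-n)-k(\cU+n^2-2n)+n^2$, which is precisely the quadratic whose sign is referenced in the corollary.

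There is no real obstacle here; the only step requiring any care is the algebraic identity $(k-1)(k\cU-n^2)-nk(k-2)=k^2(\cU-n)-k(\cU+n^2-2n)+n^2$, which is verified by a direct term-by-term expansion and collection of coefficients. Once that identity is in hand, the equivalence of the two sign conditions in part~(b) is immediate, and Corollary~\ref{cor:signG''} follows.
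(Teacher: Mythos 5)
Your proposal is correct and takes essentially the same approach as the paper, which also derives the result from Corollary~\ref{cor:G''} by isolating the positive prefactor and noting that the sign condition is equivalent to the sign of the bracketed expression. You simply spell out the algebraic identity that the paper leaves implicit, and your computation checks out.
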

\begin{proof}
It suffices to point out that the condition in part (b) above is equivalent to the term within square brackets in \eqref{eq:G''} being negative or positive. 
\end{proof}

Thus, when $k \ge 3$, the sign of $\lim_{M\to\infty}G_{k,M}''(0)$ depends only on where $k$ lies in relation to the roots of the quadratic polynomial $x^2(\cU-n) - x(\cU+n^2-2n) + n^2$. Note that $\cU = \sum_{i=1}^m \mu_i^2 \ge \sum_{i=1}^m \mu_i = n$, with equality iff $\mu_i = 1$ for all $i \in [m]$, i.e., $\psibf = 123 \ldots n$.
The lemma below summarizes the behaviour of the roots of the quadratic equation.

\begin{lemma} Write $q(x) = x^2(\cU-n) - x(\cU+n^2-2n) + n^2$, which has discriminant 
$D = (n^2+2n-\cU)^2 - 4n^3$. Recall that $\Upsilon = \frac{n^2-n}{\cU-n}$.
\begin{itemize}
\item[(1)] If $\cU = n$ (which happens iff $\psibf = 123 \ldots n$), then $q(x) = -x(n^2-n) + n^2$ has $\frac{n}{n-1}$ as its only root. Since $\frac{n}{n-1} \le 2$, we have $q(k) < 0$ for all $k \ge 3$.

\item[(2)] If $\cU > n$, then we have exactly one of the following two cases:
\begin{itemize}
\item[(a)] The discriminant $D$ is strictly negative, which happens iff $\Upsilon < \frac{\sqrt{n}+1}{\sqrt{n}-1}$, so that $q(x)$ has no roots. In this case, $q(k) > 0$ for all $k$.
\item[(b)] The quadratic has two real roots $\rho_1 \le \rho_2$ given by
$$
\rho_1 = \frac{\cU + n^2 - 2n - \sqrt{D}}{2(\cU-n)} \ \text{ and } \ 
\rho_2 = \frac{\cU + n^2 - 2n + \sqrt{D}}{2(\cU-n)}.
$$
In this case, we have $1 < \rho_1 \le 2$, so that $q(k) < 0$ for $3 \le k < \rho_2$, and $q(k) > 0$ for all $k > \rho_2$. Furthermore, $\Upsilon - 1 \le \rho_2 < \Upsilon$ holds.
\end{itemize}
\end{itemize}
\label{lem:roots}
\end{lemma}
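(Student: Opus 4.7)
The overall strategy is to reparametrize using $\cU - n = n(n-1)/\Upsilon$, which turns all the relevant expressions into rational functions of $n$ and $\Upsilon$. Case~(1) is then immediate: $\cU = n$ forces every $\mu_i \in \{0,1\}$, and together with $\sum \mu_i = n$ this gives $\psibf = 123\ldots n$; the linear function $q(x) = -x(n^2 - n) + n^2$ has its unique root at $n/(n-1) \in (1,2]$, so $q(k) < 0$ for all $k \ge 3$.

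For case~(2)(a), I would substitute $\cU = n + n(n-1)/\Upsilon$ into $D$ and factor it as
\[
D = \frac{n^2 (n-1)^2}{\Upsilon^2}\left(\Upsilon - \frac{\sqrt n + 1}{\sqrt n - 1}\right)\left(\Upsilon - \frac{\sqrt n - 1}{\sqrt n + 1}\right).
\]
Since $\Upsilon \ge 1$ always (as $\sum \mu_i^2 \le (\sum \mu_i)^2 = n^2$), the second factor is strictly positive, so $D < 0$ is equivalent to $\Upsilon < \frac{\sqrt n + 1}{\sqrt n - 1}$. With $\cU - n > 0$ (the leading coefficient of $q$) and $D < 0$, $q$ has no real roots and is strictly positive on all of $\R$.

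In case~(2)(b), Vieta's formulas simplify to $\rho_1 + \rho_2 = 1 + \Upsilon$ and $\rho_1 \rho_2 = n\Upsilon/(n-1)$. A direct substitution gives $q(1) = n > 0$, and the inequality $\rho_1 + \rho_2 > 2$ is equivalent to $\cU < n^2$, which holds strictly for $m \ge 2$; since both roots are positive (product and sum positive) and lie on the same side of $1$ (by $q(1) > 0$ and the upward-opening parabola), both belong to $(1,\infty)$. Hence $\rho_1 > 1$, and via the sum identity $\rho_2 = (1+\Upsilon) - \rho_1 < \Upsilon$.

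The main obstacle is the companion bound $\rho_1 \le 2$ (equivalently, $\rho_2 \ge \Upsilon - 1$). I plan to split on whether $\Upsilon \le 3$ or $\Upsilon > 3$. In the first subcase, $\Upsilon - 1$ does not exceed the midpoint $(1+\Upsilon)/2$, and $\rho_2$ exceeds that midpoint, so $\rho_2 \ge \Upsilon - 1$. In the second subcase, I would first verify by hand that case~(2)(b) forces $n \ge 4$ (the only patterns with $n \in \{2,3\}$ and $m \ge 2$ either satisfy $\cU = n$ or have $\Upsilon < \frac{\sqrt n + 1}{\sqrt n - 1}$). Then $\Upsilon > 3$ is equivalent to $\cU < (n^2 + 2n)/3$, and for $n \ge 4$ one has $(n^2 + 2n)/3 \le n^2/2$, so $q(2) = 2\cU - n^2 < 0$ and the roots straddle $2$. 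Once $1 < \rho_1 \le 2$ is in hand, the sign statement for $k \ge 3$ is immediate (since $\rho_1 \le 2 < 3 \le k$), and $\Upsilon - 1 \le \rho_2 < \Upsilon$ follows from the sum identity.
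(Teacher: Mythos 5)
Your proof is correct, but it follows a genuinely different route from the paper's. The paper works directly with the discriminant: it removes the absolute value in $|n^2+2n-\cU| < 2n^{3/2}$ using $\cU < n^2$, and to bound the roots it assumes $\rho_1 \ge 2$ (respectively $\rho_2 \le \Upsilon-1$), derives the pair of necessary inequalities $n^2/2 \le \cU \le (n^2+2n)/3$, forces $n \le 4$, and rules out the small cases by hand. You instead reparametrize by $\Upsilon$, factor the discriminant as $D = \frac{n^2(n-1)^2}{\Upsilon^2}\bigl(\Upsilon - \frac{\sqrt n+1}{\sqrt n-1}\bigr)\bigl(\Upsilon - \frac{\sqrt n-1}{\sqrt n+1}\bigr)$ and invoke $\Upsilon > 1$, then lean on Vieta's formulas $\rho_1+\rho_2 = 1+\Upsilon$, $\rho_1\rho_2 = n\Upsilon/(n-1)$ and the special values $q(1)=n$ and $q(2)=2\cU-n^2$. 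Your argument for $\rho_1 > 1$ (both roots on the same side of $1$, positive sum $>2$) and for $\rho_2 < \Upsilon$ (sum identity) is cleaner and avoids the radical computations; the bound $\rho_1 \le 2$ still requires a case split ($\Upsilon \le 3$ vs.\ $\Upsilon > 3$, with a check that $n\ge 4$ in the latter), but the midpoint observation $\rho_2 \ge (1+\Upsilon)/2$ and the sign of $q(2)$ replace the paper's system of inequalities. The only thing the paper's approach buys that yours doesn't is the explicit identification of the equality case $\rho_1=2$, $\rho_2=\Upsilon-1$ (namely $(n,\cU)=(4,8)$, i.e.\ $\psibf=1122$), but that is not part of the lemma's statement. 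Incidentally, the paper's intermediate claim that $(n^2-\cU)^2 - D = 4n\cU$ is a typo; the correct value, which both proofs ultimately use, is $4n(\cU-n)$.
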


The proof of the lemma is given in Appendix~C. We now have the tools required to complete the proof of Theorem~\ref{thm:thresB}.

\begin{proof}[Proof of Theorem~\ref{thm:thresB}]
Recall that the goal is to show that there is a threshold $\Upsilon_B$ independent of the direction vector $\xibf$ such that \eqref{eq:limG''} holds. Corollary~\ref{cor:G''} shows that $\lim_{M \to \infty} G_{k,M}''(0)$ is independent of $\xibf$. With this, the $m \ge 3$ case of Theorem~\ref{thm:thresB} follows directly from Corollary~\ref{cor:signG''}(b) and Lemma~\ref{lem:roots}.

The $m=2$ case requires a few additional details to be checked. When $\mu_1 = \mu_2 = 1$, Corollary~\ref{cor:signG''} and Lemma~\ref{lem:roots}(1) show that $\lim_{M \to \infty} G_{k,M}''(0) < 0$ for all $k \ge 2$, and hence, we can take $\Upsilon_B = \infty$.

When $\mu_1 = \mu_2 > 1$, we need to argue that \eqref{eq:limG''} holds for any $\Upsilon_B \in (2,3)$. Corollary~\ref{cor:signG''}(a) shows that $\lim_{M \to \infty} G_{k,M}''(0) < 0$ for $k=2$. Therefore, it remains to show that $\lim_{M \to \infty} G_{k,M}''(0) > 0$ for all $k \ge 3$. Similarly, when $\mu_1 \ne \mu_2$, we need to show that $\lim_{M \to \infty} G_{k,M}''(0) > 0$ for all $k \ge 2$, so that we can take $\Upsilon_B \in (1,2)$ in this case. Again, Corollary~\ref{cor:signG''}(a) takes care of the $k=2$ case, so we are left with $k \ge 3$. In summary, we must show that if $(\mu_1,\mu_2) \ne (1,1)$, then $\lim_{M \to \infty} G_{k,M}''(0) > 0$ for all $k \ge 3$. For this, we will appeal to Lemma~\ref{lem:roots}(2).

If the discriminant $D$ is negative, then we are done by Lemma~\ref{lem:roots}(2)(a). So, we may assume that $D \ge 0$, in which case the situation of Lemma~\ref{lem:roots}(2)(b) applies. It suffices to show that when $(\mu_1,\mu_2) = (a,b) \ne (1,1)$, then $\rho_2 \le 2$. Note that $D = (2(a+b) + 2ab)^2 - 4(a+b)^3 = 4[(a+b+ab)^2 - (a+b)^3]$. Using the fact that $(a+b)^3  \ge  4ab(a+b)$, we obtain $D \le 4[(a+b+ab)^2 - 4ab(a+b)] = 4(a+b-ab)^2$. Thus, $\sqrt{D} \le 2|a+b-ab|$, and hence,
\begin{equation}
\rho_2 = 1 + \frac{n^2 - \cU + \sqrt{D}}{2(U-n)} \le 1+\frac{ab + |a+b-ab|}{a^2+b^2-(a+b)}.
\label{rho2_ineq}
\end{equation}

Now, for positive integers $a,b$, we have $ab < a+b$ only if $b = 1$. If $(a,b) = (2,1)$, then observe that $D = -8 < 0$, which cannot happen. So we can have $ab < a+b$ only if $b = 1$ and $a \ge 3$. In this case, \eqref{rho2_ineq} becomes $\rho_2 \le 1 + \frac{a+1}{a^2-a} < 2$, the last inequality holding for any $a > 1+\sqrt{2}$. 

Finally, if $ab \ge a+b$, then \eqref{rho2_ineq} reduces to 
$$
\rho_2 \le 1+\frac{2ab - (a+b)}{a^2+b^2-(a+b)},
$$
which is at most $2$. 
We have thus shown that $\rho_2 \le 2$ whenever $(\mu_1,\mu_2) \ne (1,1)$, which completes the proof of the $m=2$ case of the theorem.
\end{proof}


\section{The Probability Distribution $Q_{k,M}$} \label{sec:QkM}

The main aim of this section is to prove Proposition~\ref{prop:VarkM}. For convenience, we recall the relevant definitions first. We use $\cA_{k,M}$ to denote the set of $k \times k$ non-negative integer matrices whose row and column sums are all equal to $M$. The probability distribution $Q_{k,M}$ on $\cA_{k,M}$ is defined by setting, for $A \in \cA_{k,M}$, $Q_{k,M}(A) = \frac{1}{Z_{k,M}} w(A)$, where $w(A)$ and $Z_{k,M}$ are given by \eqref{def:w} and \eqref{def:ZkM}. We are especially interested in determining $\Var_{k,M}(a_{1,1})$, the variance of the entry $a_{1,1}$ in a random matrix $A = (a_{i,j}) \in \cA_{k,M}$ chosen according to the distribution $Q_{k,M}$. 

To compute the variance of $a_{1,1}$, we need to know the mean $\E[a_{1,1}]$, where $\E[\cdot]$ denotes expectation with respect to the probability distribution $Q_{k,M}$. As the following lemma shows, this is quite straightforward.

\begin{lemma}
For any entry $a_{i,j}$ of a random matrix $A \in \cA_{k,M}$ chosen according to $Q_{k,M}$, we have
$$
\E[a_{i,j}] = \frac{1}{k} M.
$$
\label{lem:Eaij}
\end{lemma}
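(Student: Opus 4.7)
The plan is to exploit the symmetry of the weight function $w(A)$ under permutations of the rows and columns of $A$, and then use a simple counting argument based on the fact that all row (and column) sums equal $M$.

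First I would observe that $w(A) = \prod_{(i,j)} \frac{(M-a_{i,j})!}{a_{i,j}!}$ is a product over all entries of a function that depends only on the entry itself. Consequently, for any pair of permutations $\sigma,\tau \in S_k$, if $A' = (a_{\sigma(i),\tau(j)})$ is obtained from $A$ by permuting rows by $\sigma$ and columns by $\tau$, then $w(A') = w(A)$. Moreover, $A \mapsto A'$ is a bijection from $\cA_{k,M}$ onto itself, since row-sums and column-sums are preserved. It follows that $Q_{k,M}(A') = Q_{k,M}(A)$, and hence $\E[a_{i,j}] = \E[a_{\sigma(i),\tau(j)}]$ for every $(i,j)$ and every $\sigma,\tau$. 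Taking $\sigma$ and $\tau$ to be transpositions that send $(1,1)$ to an arbitrary $(i,j)$ shows that the expectation $\E[a_{i,j}]$ is the same for every entry.

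Next I would use linearity of expectation together with the row-sum constraint. For any fixed row $i$, we have $\sum_{j=1}^k a_{i,j} = M$ deterministically, so $\sum_{j=1}^k \E[a_{i,j}] = M$. Combined with the symmetry just established, which gives $\E[a_{i,j}] = \E[a_{i,1}]$ for all $j$, this yields $k \cdot \E[a_{i,1}] = M$, and therefore $\E[a_{i,j}] = M/k$.

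There is no real obstacle here: the only thing to check carefully is that the map $A \mapsto (a_{\sigma(i),\tau(j)})$ really is an involution (or at least a bijection) on $\cA_{k,M}$ that preserves $w$, which is immediate from the definitions. The proof is thus essentially a one-line symmetry argument plus a row-sum identity, and no analysis of the normalizing constant $Z_{k,M}$ is needed.
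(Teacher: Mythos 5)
Your proof is correct and follows essentially the same route as the paper: invariance of $w$ (hence of $Q_{k,M}$) under row and column permutations gives that $\E[a_{i,j}]$ is independent of $(i,j)$, and then the deterministic row-sum constraint $\sum_j a_{i,j}=M$ together with linearity of expectation forces $\E[a_{i,j}]=M/k$. You spell out the bijection argument in slightly more detail, but the substance is identical.
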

\begin{proof}
Note that $w(A)$, and consequently, the probability function $Q_{k,M}(A)$, is invariant to row and column permutations of $A$. Therefore, the expected value $\E[a_{i,j}]$ must be a constant independent of $(i,j)$. This constant can be explicitly evaluated as follows:
\begin{equation*}
\E[a_{i,j}] = \frac{1}{k} \sum_{\ell=1}^k \E[a_{i,\ell}] = \frac1k \E\biggl[\sum_{\ell=1}^k a_{i,\ell}\biggr] = \frac1k \, M,
\end{equation*}
the last equality using the fact that $\sum_{\ell=1}^k a_{i,\ell} = M$.
\end{proof}

The invariance of $Q_{k,M}(A)$ to row and column permutations of $A$ also implies that the variance of $a_{i,j}$ is independent of $(i,j)$. In spite of this, the explicit computation of $\Var_{k,M}(a_{1,1})$ seems difficult in general, with the notable exception of the case when $k=2$.

Observe that $\cA_{2,M}$ consists precisely of the matrices
$$
\left[
\begin{array}{cc}
a & M-a \\
M-a & a
\end{array}
\right]
$$
with $a \in \{0,1,\ldots,M\}$. For any such matrix $A$, we have $w(A) = 1$, and hence, $Z_{2,M} = |\cA_{2,M}| = M+1$. In particular, $Q_{2,M}$ is just the uniform distribution on $\cA_{2,M}$. It follows that $\Var_{2,M}(a_{1,1})$ is the variance of a random variable uniformly distributed on $\{0,1,\ldots,M\}$. An easy calculation shows this to be $\frac{1}{12} M(M+2)$, thus completing the proof of part~(a) of Proposition~\ref{prop:VarkM}.

\medskip

Henceforth, we consider the case when $k \ge 3$. Our interest is in the regime where $k$ is fixed and $M$ goes to $\infty$.

\subsection{Estimating $Z_{k,M}$} \label{sec:ZkM}

To estimate the normalization constant $Z_{k,M}$, we make use of the fact that 
$$
w^*_{k,M} \le Z_{k,M} \le w^*_{k,M} |\cA_{k,M}|,
$$
where $w^*_{k,M} = \max_{A \in \cA_{k,M}} w(A)$. As we will see below, it is possible to explicitly determine $w^*_{k,M}$. In the regime of interest to us, $w^*_{k,M}$ grows super-exponentially in $M$. Since $|\cA_{k,M}| \le (M+1)^{k^2}$, a quantity that is polynomial in $M$, we see that the asymptotics of $Z_{k,M}$ are governed by the asymptotics of $w^*_{k,M}$. 

Write $M = qk + r$, where $q = \lfloor M/k \rfloor$ and $r = M-kq$. Note that we have $0 \le r < k$. Let $\u = (u_1,\ldots,u_k) \in \Z_+^k$ be defined by $u_i = q+1$ for $1 \le i \le r$, and $u_i = q$ for $r+1 \le i \le k$. Clearly, $\sum_i u_i = M$. Let $U$ be the circulant matrix having $\u$ as its first row. The fact that $U$ is a circulant matrix assures us that it is in $\cA_{k,M}$.

\begin{proposition}
The matrix $U$ maximizes $w(A)$ among all $A \in \cA_{k,M}$, and hence, 
$$w^*_{k,M} = w(U) = \left[\frac{\bigl(M-(q+1)\bigr)!}{(q+1)!}\right]^{kr} \left[\frac{(M-q)!}{q!}\right]^{k(k-r)}.$$
\label{prop:U}
\end{proposition}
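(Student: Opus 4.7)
The plan is to argue that any maximizer $A^* \in \cA_{k,M}$ of $w$ must have all of its entries in $\{q, q+1\}$, after which the stated formula for $w^*_{k,M}$ follows from an elementary counting argument.

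The counting is the clean part.  Writing $g(x) = (M-x)!/x!$, so that $w(A) = \prod_{(i,j)} g(a_{i,j})$, one sees that $w$ depends only on the multiset of entries of $A$.  Any matrix in $\cA_{k,M}$ with entries in $\{q, q+1\}$ and row sums $M = qk+r$ must, by the linear system $n_+ + n_- = k^2$ and $(q+1)n_+ + q\,n_- = kM$, contain exactly $n_+ = kr$ entries equal to $q+1$ and $n_- = k(k-r)$ entries equal to $q$.  Every such matrix therefore has the same $w$-value, which is precisely the right-hand side of the asserted formula; $U$ is manifestly one such matrix.  The proposition thus reduces to proving that $w(A) \le w(U)$ for every $A \in \cA_{k,M}$.

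I would prove this via a $2 \times 2$ swap argument.  The basic move $\bigl(\begin{smallmatrix} a & b \\ c & d \end{smallmatrix}\bigr) \mapsto \bigl(\begin{smallmatrix} a-1 & b+1 \\ c+1 & d-1 \end{smallmatrix}\bigr)$ on a $2 \times 2$ subblock (valid whenever $a, d \ge 1$) preserves membership in $\cA_{k,M}$ and, by direct computation of the factorial ratios, multiplies $w$ by $\frac{a(M-a+1)\,d(M-d+1)}{(b+1)(M-b)\,(c+1)(M-c)}$.  Suppose for contradiction that a maximizer $A$ had an entry $a := a_{i^*,j^*} \ge q+2$.  The remaining $k-1$ entries of row $i^*$ sum to at most $q(k-1) + r - 2$, which is strictly less than $(k-1)(q+1)$ since $r < k$, and so some $b := a_{i^*,j'} \le q$ must exist; the symmetric argument on column $j^*$ produces some $c := a_{i',j^*} \le q$.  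One then shows that $(i',j')$ can be chosen so that the swap factor strictly exceeds $1$, contradicting the assumed optimality of $A$.  The symmetric case of an entry $\le q-1$ is handled by the reverse swap.

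The main obstacle is that $\log g$ is only concave on $\{0, 1, \ldots, \lfloor (M-2)/2 \rfloor\}$, so a straightforward rearrangement argument does not apply when $A$ contains entries exceeding $M/2$.  I would address this by first establishing separately that no maximizer has any entry exceeding $M/2$: such an entry forces every other entry in its row and column to be strictly less than $M/2$, and the swap factor displayed above is then easy to bound below by $1$ directly from the explicit formula (falling back on a longer alternating cycle if no admissible $2 \times 2$ block has a nonzero $d$).  Once every entry of the maximizer is confined to the log-concave regime of $g$, the swap argument from the previous paragraph then forces those entries into $\{q, q+1\}$, which together with the counting argument concludes the proof.
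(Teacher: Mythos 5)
Your counting step, the reduction to proving $w(A) \le w(U)$, and the swap-factor computation are all correct. However, the $2\times 2$ swap argument as sketched has two genuine gaps, and the paper takes a cleaner route that sidesteps both.

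The first gap is the alternating-cycle fallback. Having found $j'$ with $a_{i^*,j'} \le q$ and $i'$ with $a_{i',j^*} \le q$, you have no guarantee that $d := a_{i',j'} \ge 1$, and there may simply be no admissible $2\times 2$ block through $(i^*,j^*)$. Your parenthetical remark that you would then use a longer alternating cycle is not a detail to be filled in later: for a cycle of length $2\ell$ you must control a product of $2\ell$ ratios of $g$-values, in a regime where $\log g$ need not be concave, and that is where the actual work of the argument lives. The second gap is the preliminary claim that no maximizer has an entry exceeding $M/2$. If $a > M/2$ then, as you note, the other entries in row $i^*$ and column $j^*$ lie below $M/2$, but the swap factor $\frac{a(M-a+1)\,d(M-d+1)}{(b+1)(M-b)\,(c+1)(M-c)}$ also involves the fourth corner $d$, which sits outside that row and column and can itself exceed $M/2$; moreover the quantity $a(M-a+1)$ \emph{decreases} for $a > (M+1)/2$, so it is not at all clear that the ratio exceeds $1$. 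The phrase ``easy to bound below by $1$ directly'' is doing a lot of unearned work, and you again need $d \ge 1$, so the same hole reappears.

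The paper avoids both difficulties by exploiting the factorization $w(A) = \prod_{i=1}^k \phi(\a_i)$, where $\a_i$ is the $i$th row of $A$ and $\phi(x_1,\dots,x_k) = \prod_j \frac{(\sum_{\ell\ne j}x_\ell)!}{x_j!}$. It shows that $\u$ is majorized by every vector in $S_{k,M}$ (Lemma~\ref{lem:u}) and that $\phi$ is Schur-concave via Proposition~\ref{prop:Schur}, whose verification is a short algebraic computation requiring no range restriction on the arguments. Since each row of any $A \in \cA_{k,M}$ lies in $S_{k,M}$, Schur-concavity gives $\phi(\a_i) \le \phi(\u)$ for every $i$, hence $w(A) \le \phi(\u)^k = w(U)$; the column constraints are met for free because $U$ is circulant. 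That approach decouples the rows completely, which your $2\times 2$ moves --- which necessarily couple two rows and two columns at once --- cannot do, and this is exactly why your sketch runs into trouble.
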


Before giving a proof of the proposition, we briefly discuss its implications. In the regime of fixed $k$ and $M \to \infty$, 
routine algebraic manipulations using Stirling's approximation yield
\begin{equation}
\lim_{M \to \infty} {[(M!)^{2k-k^2}]}^{\frac1M} (w^*_{k,M})^{\frac1M} = \frac{(k-1)^{k(k-1)}}{k^{k(k-2)}}.
\label{eq:lim_wkM}
\end{equation}
Thus, $w^*_{k,M}$ grows super-exponentially in $M$, so that as observed above, its asymptotic behaviour governs that of $Z_{k,M}$. Consequently, in the left-hand side of \eqref{eq:lim_wkM}, we may replace $w^*_{k,M}$ with $Z_{k,M}$, thereby obtaining the first equality stated in Proposition~\ref{prop:VarkM}(b).

\medskip

Our proof of Proposition~\ref{prop:U} is based upon the theory of majorization \cite{MO_book}. For any vector $\x = (x_1,\ldots,x_k) \in \R^k$, let $\x_{\downarrow} = (x_{[1]},\ldots,x_{[k]})$ denote the permutation of the components of $\x$ such that $x_{[1]} \ge \ldots \ge x_{[k]}$. A vector $\x \in \R^k$ is said to be \emph{majorized} by a vector $\y \in \R^k$, denoted by $\x \preceq \y$, if for $\ell = 1,\ldots,k-1$, we have
$$
\sum_{i=1}^{\ell} x_{[i]} \le \sum_{i=1}^{\ell} y_{[i]},
$$
and $\sum_{i=1}^k x_{[i]} = \sum_{i=1}^k y_{[i]}$ (or equivalently, $\sum_{i=1}^k x_i = \sum_{i=1}^k y_i$).

The next lemma plays an important role in our proof of Proposition~\ref{prop:U}. Let us define $S_{k,M}$ to be the set of vectors $\x = (x_1,\ldots,x_k) \in \Z_+^k$ such that $\sum_{i=1}^k x_i = M$. Also, recall from above our definition of the vector $\u = (u_1,\ldots,u_k) \in S_{k,M}$ with $u_i = q+1$ for $1 \le i \le r$, and $u_i = q$ for $r+1 \le i \le k$, where $q = \lfloor M/k \rfloor$ and $r = M-kq$.

\begin{lemma}
The vector $\u$ is majorized by every $\x \in S_{k,M}$.
\label{lem:u}
\end{lemma}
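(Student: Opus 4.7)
The plan is to verify the majorization $\u \preceq \x$ directly from the definition. The equality of total sums, $\sum_i u_i = \sum_i x_i = M$, is automatic from the definition of $S_{k,M}$, so only the partial-sum inequalities need to be established. Since $\u$ is already sorted non-increasingly (its first $r$ components equal $q+1$ and its last $k-r$ components equal $q$), I would start by writing out its partial sums explicitly:
$$
\sum_{i=1}^\ell u_{[i]} = \begin{cases} \ell(q+1), & 1 \le \ell \le r, \\ \ell q + r, & r < \ell \le k-1. \end{cases}
$$
The task then reduces to showing $\sum_{i=1}^\ell x_{[i]}$ is at least this value for every $\x \in S_{k,M}$.

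I would argue by contradiction. Suppose some $\x \in S_{k,M}$ violates the partial-sum inequality at some $\ell$. Because the partial sum is an integer strictly less than the (integer) right-hand side, the average $\tfrac{1}{\ell}\sum_{i=1}^\ell x_{[i]}$ is strictly less than $q+1$; in particular the $\ell$-th largest entry satisfies $x_{[\ell]} < q+1$, and being an integer, $x_{[\ell]} \le q$. Monotonicity of the sorted entries then forces $x_{[i]} \le q$ for every $i \ge \ell$. Combining the hypothetical bound on the top $\ell$ entries with $(k-\ell)q$ yields $\sum_i x_i \le kq + r - 1 < M$ in both cases $\ell \le r$ and $\ell > r$, contradicting $\x \in S_{k,M}$. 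The same line of reasoning covers both ranges of $\ell$ with only trivial changes in the arithmetic.

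I expect no substantive obstacle; the only bookkeeping subtlety is the two-case description of $\sum_{i=1}^\ell u_{[i]}$, and it disappears in the contradiction step. An alternative, more conceptual route would be to observe that any $\x \in S_{k,M}$ can be obtained from $\u$ by a finite sequence of unit ``inverse Robin Hood'' transfers (subtract $1$ from a smaller component and add $1$ to a larger one), each of which weakly increases every partial sum of top-$\ell$ entries and therefore preserves $\u \preceq \cdot$; but the direct contradiction argument is quicker and avoids an explicit construction.
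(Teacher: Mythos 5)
Your proof is correct and takes essentially the same route as the paper's: argue by contradiction, use integrality of the entries to force the entries beyond position $\ell$ down to at most $q$, and then show the totals cannot add up to $M$. The only cosmetic difference is that you bound the full sum $\sum_i x_i$ directly and compare it against $M$, whereas the paper derives both $x_{[\ell+1]} \le q$ and $x_{[\ell+1]} > q$ and lets those clash; both are valid realizations of the same pigeonhole idea.
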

\begin{proof}
Consider an arbitrary $\x \in S_{k,M}$. By definition, $\sum_{i=1}^k u_i = \sum_{i=1}^k x_i = M$, so we must show that $\sum_{i=1}^{\ell} u_{[i]} \le \sum_{i=1}^{\ell} x_{[i]}$ for $\ell = 1,\ldots,k-1$. Suppose that $\sum_{i=1}^{\ell} u_{[i]} > \sum_{i=1}^{\ell} x_{[i]}$ for some $\ell \in \{1,\ldots,k-1\}$. Note that $\sum_{i=1}^{\ell} u_{[i]} = \ell q + \min(\ell,r)$. Thus, 
$$
\ell x_{[\ell+1]} \le \sum_{i=1}^{\ell} x_{[i]} < \sum_{i=1}^{\ell} u_{[i]} = \ell q + \min(\ell,r),
$$
from which we infer that $x_{[\ell+1]} < q + \min(1,r/\ell)$. Since $x_{[\ell+1]}$ must be an integer, we deduce that $x_{[\ell+1]} \le q$. 

On the other hand, we also have
\begin{align*}
(k-\ell) x_{[\ell+1]} \ge \sum_{i=\ell+1}^k x_{[i]} & = M-\sum_{i=1}^\ell x_{[i]} \\
 & > M - \sum_{i=1}^{\ell} u_{[i]} \\
 & = qk+r- \bigl(\ell q + \min(\ell,r)\bigr) \\
 & \ge q(k-\ell)
\end{align*}
Hence, $x_{[\ell+1]} > q$, which contradicts the inequality $x_{[\ell+1]} \le q$ deduced previously. Therefore, our assumption that $\sum_{i=1}^{\ell} u_{[i]} > \sum_{i=1}^{\ell} x_{[i]}$ cannot hold. 
\end{proof}

A function $\phi: \Z_+^k \to \R$ is said to be \emph{Schur-concave} if for all $\x,\y \in \Z_+^k$,
$$
\x \preceq \y \Longrightarrow \phi(\x) \ge \phi(\y).
$$
Note that if $\x$ and $\y$ are permutations of each other, then $\x_{\downarrow} = \y_{\downarrow}$, so that both $\x \preceq \y$ and $\y \preceq \x$ hold. Hence, a Schur-concave function $\phi$ must be \emph{symmetric}, which means that $\phi(\x) = \phi(\y)$ whenever $\x$ and $\y$ are permutations of each other. 

A characterization of Schur-concave functions is given in \cite[Chapter~3, A.2.b]{MO_book}, which we adapt to our context in the proposition below.

\begin{proposition}
A function $\phi: \Z_+^k \to \R$ is Schur-concave iff it is symmetric, and for each choice of non-negative integers $s,x_3,\ldots,x_k$, the function $\phi(x,s-x,x_3,\ldots,x_k)$ is monotonically decreasing in $x$ for $x \ge s/2$.
\label{prop:Schur}
\end{proposition}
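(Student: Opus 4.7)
The plan is to reduce the proposition to an elementary statement about unit ``Robin Hood'' transfers, using the fact that majorization on $\Z_+^k$ is generated, up to coordinate permutations, by such transfers acting on two coordinates at a time. Concretely, I will rely on the following discrete decomposition: whenever $\x \preceq \y$ in $\Z_+^k$ with $\x \ne \y$, there exist indices $i,j \in [k]$ with $y_i \ge y_j + 2$ such that the vector $\y'$ obtained from $\y$ by decreasing $y_i$ by $1$ and increasing $y_j$ by $1$ still satisfies $\x \preceq \y' \preceq \y$; iterating finitely many such transfers turns $\y$ into $\x$.

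For the forward direction, symmetry of $\phi$ is immediate, since for any permutation $\x$ of $\y$ one has both $\x \preceq \y$ and $\y \preceq \x$. For the monotonicity condition, I would fix $s, x_3, \ldots, x_k \in \Z_+$ and $s/2 \le x \le x'$, and check directly from the definition that $(x, s-x, x_3, \ldots, x_k) \preceq (x', s-x', x_3, \ldots, x_k)$: the two vectors share the same tail coordinates and the same total, while the sorted pair $(x, s-x)$ is obtained from $(x', s-x')$ by a Robin Hood equalization (legitimate because $x \ge s/2$), so partial sums of sorted coordinates can only decrease. Schur-concavity then yields $\phi(x, s-x, x_3, \ldots, x_k) \ge \phi(x', s-x', x_3, \ldots, x_k)$.

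For the reverse direction, I assume $\phi$ is symmetric and satisfies the monotonicity hypothesis, and let $\x \preceq \y$ in $\Z_+^k$. By the decomposition above, it suffices to prove $\phi(\y') \ge \phi(\y)$ for a single unit transfer. Using the symmetry of $\phi$, I may permute coordinates so that the transfer acts on positions $1$ and $2$: $\y = (y_1, y_2, y_3, \ldots, y_k)$ with $y_1 \ge y_2 + 2$ and $\y' = (y_1 - 1, y_2 + 1, y_3, \ldots, y_k)$. Setting $s := y_1 + y_2$, the condition $y_1 \ge y_2 + 2$ is equivalent to $y_1 - 1 \ge s/2$; in particular both $y_1$ and $y_1 - 1$ lie in the region $x \ge s/2$, and the hypothesized monotonicity of $x \mapsto \phi(x, s-x, y_3, \ldots, y_k)$ there gives $\phi(\y') \ge \phi(\y)$, as needed.

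The main obstacle I expect is the integer-valued T-transform decomposition itself. The real-valued version is the classical Hardy--Littlewood--P\'olya--Muirhead theorem, but here one must stay within $\Z_+^k$ and use only unit transfers. A short constructive argument should handle this: given $\x \preceq \y$ with $\x \ne \y$, pick the smallest index $\ell$ at which the sorted coordinates of $\y$ strictly exceed those of $\x$ and a correspondingly larger index where the opposite strict inequality holds; a unit Robin Hood transfer between the two corresponding positions of $\y$ preserves $\x \preceq \cdot$ and strictly decreases the nonnegative integer quantity $\sum_\ell \bigl(y_{[\ell]} - x_{[\ell]}\bigr)_+$ measuring the discrepancy, so the process terminates. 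Alternatively, one can simply invoke the discrete analogue of the Marshall--Olkin characterization already cited in the paper.
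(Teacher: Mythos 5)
Your proof is correct, but it should be noted that the paper does not actually prove this proposition: it simply cites the Marshall--Olkin characterization of Schur-concave functions (\cite[Ch.~3, A.2.b]{MO_book}) and says it is being ``adapted'' to $\Z_+^k$. You instead give a self-contained argument, and it is sound. The forward direction is handled by the standard observation that a single two-coordinate equalizing transfer (not necessarily unit) produces a majorized vector, applied to $(x,s-x,\mathbf{z}) \preceq (x',s-x',\mathbf{z})$ when $s/2 \le x \le x'$. The reverse direction rests on the integer T-transform decomposition, and your details are right: taking $\ell$ to be the first index where the sorted vectors differ forces $y_{[\ell]} > x_{[\ell]}$, and the first $m>\ell$ with $x_{[m]} > y_{[m]}$ gives $y_{[\ell]} \ge x_{[\ell]}+1 \ge x_{[m]}+1 \ge y_{[m]}+2$, so a unit transfer is legal and the discrepancy $\sum_\ell (y_{[\ell]}-x_{[\ell]})_+$ drops by exactly one. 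One modest thing your approach buys over the paper's citation is that it makes explicit why the characterization survives restriction to $\Z_+^k$: the Marshall--Olkin statement is phrased for real (convex) domains, and the nontrivial part of the ``adaptation'' is precisely the integer decomposition into unit transfers that you spell out. The only stylistic suggestion is to state clearly at the outset that $\x \preceq \y$ in $\Z_+^k$, $\x \ne \y$, implies a chain of unit transfers exists, as a lemma, since both directions of your argument lean on two-coordinate reductions and the reader benefits from seeing the discrete decomposition isolated rather than folded into the flow.
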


Now, define the function $\phi: \Z_+^k \to \R$ as follows:
\begin{equation}
\phi(x_1,\ldots,x_k) = \prod_{j=1}^k \frac{\left(\sum_{\ell \neq j} x_\ell\right)!}{x_j!}.
\label{eq:phi}
\end{equation}
An application of Proposition~\ref{prop:Schur} shows that this function $\phi$ is Schur-concave. Indeed, $\phi$ is obviously symmetric. We claim that, for any choice of non-negative integers $s,x_3,\ldots,x_k$, the function $\varphi(x) := \phi(x,s-x,x_3,\ldots,x_k)$ is monotonically decreasing in $x$ for integers $x \ge s/2$. To see this, first verify that 
$$
\frac{\varphi(x+1)}{\varphi(x)} = \frac{(x+1 + x_3 + \ldots + x_k)(s-x)}{(x+1)(s-x + x_3+\ldots+x_k)}. 
$$
Now, we must show that for $x \ge s/2$, this ratio is at most $1$, or equivalently, $(x+1 + x_3 + \ldots + x_k)(s-x) - (x+1)(s-x + x_3+\ldots+x_k) \le 0$. Upon some re-arrangement and cancellation of like terms, the left-hand side becomes $(x_3 + \ldots + x_k)(s-x - (x+1))$, which is at most $0$ when $x \ge s/2$. 

\begin{proof}[Proof of Proposition~\ref{prop:U}]
For any $A \in \cA_{k,M}$, let $\a_1, \ldots,\a_k$ denote the rows of $A$. Note that $w(A) = \prod_{i=1}^k \phi(\a_i)$,
where $\phi$ is as defined in \eqref{eq:phi}. 

Since all row sums of $A$ are equal to $M$, the row vectors $\a_1,\ldots,\a_k$ are all in $S_{k,M}$. By Lemma~\ref{lem:u} and Schur-concavity of $\phi$, we have 
$$
w(A) = \prod_{i=1}^k \phi(\a_i) \le \prod_{i=1}^k \phi(\u) = w(U),
$$
which proves the proposition.
\end{proof}

We remark that with a little bit of extra work, it can be shown that when $k$ divides $M$ (and $k \ge 3$), the matrix $U$, which in this case has all entries equal to $M/k$, uniquely maximizes $w(A)$ among $A \in \cA_{k,M}$. When $k$ does not divide $M$, any matrix obtained by permuting the rows and columns of $U$ is also a maximizer.

\subsection{Estimating $\Var_{k,M}(a_{1,1})$}\label{sec:varkM}

Recalling that the variance of $a_{i,j}$ is independent of $(i,j)$, we have
$$
\Var_{k,M}(a_{1,1}) = \frac{1}{k^2} \sum_{i,j} \Var_{k,M}(a_{i,j}) = \frac{1}{k^2}\E\biggl[\sum_{i,j} (a_{i,j}-M/k)^2\biggr].
$$
We will recast this in terms of the matrix $U$ defined previously. Recall that this matrix has all its entries $u_{i.j}$ equal to either $\lfloor{M/k}\rfloor$ or $\lceil{M/k}\rceil$. 

For any matrix $T = (t_{i,j})$, we use $\|T\|^2$ to denote the sum $\sum_{i,j}t_{i,j}^2$. Observe that 
\begin{align*}
\E[\|A-U\|^2] & = \E\biggl[\sum_{i,j} (a_{i,j}-u_{i,j})^2\biggr]  \\
& = \E\biggl[\sum_{i,j} \bigl((a_{i,j}-M/k) - (u_{i,j}-M/k)\bigr)^2\biggr] \\
& = \E\biggl[\sum_{i,j} (a_{i,j}-M/k)^2\biggr] + \E\biggl[\sum_{i,j} (u_{i,j}-M/k)^2\biggr]. \\
\end{align*}
Hence, 
\begin{equation}
 \biggl| \Var_{k,M}(a_{1,1}) - \frac{1}{k^2} \E[\|A-U\|^2] \biggr| = \frac{1}{k^2}\E\biggl[\sum_{i,j} (u_{i,j}-M/k)^2\biggr] < 1.
\label{eq:E[A-U]}
\end{equation}
Thus, for our purposes, it suffices to estimate $\E[\|A-U\|^2]$. 

Our strategy to estimate $\E[\|A-U\|^2]$ is to approximate the distribution $Q_{k,M}$ by a suitably chosen discrete Gaussian distribution. The implementation of this strategy rests upon the following lemma.

\begin{lemma} Fix $k \ge 3$. Let $\rho = M/k$, and for $A \in \cA_{k,M}$, let $T = (t_{i,j}) = A-U$. If $\rho \ge 4$ and $\max_{i,j} |t_{i,j}| \le \frac19 \rho$, then 
$$
\left| \log \frac{w(A)}{w(U)} - \log \frac{\tw(A)}{\tw(U)} \right| \le \frac{4}{\rho^2} \sum_{i,j} (|t_{i,j}|+1)^3 + \frac{3}{2\rho} \sum_{i,j} |t_{i,j}|,
$$
where $\tw(A) := \exp\left\{ -\frac12 \cdot \frac{k-2}{k-1} \cdot \frac{1}{\rho} \sum_{i,j} t_{i,j}^2\right\}$.
\label{lem:gauss}
\end{lemma}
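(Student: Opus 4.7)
The plan is to expand $\log(w(A)/w(U))$ in a finite-difference Taylor series in $t_{i,j}$ and match the quadratic coefficient with $\log(\tw(A)/\tw(U)) = -\tfrac12 \tfrac{k-2}{(k-1)\rho}\sum t_{i,j}^2$ (using that $\tw(U)=1$). Setting $f(a) := \log((M-a)!/a!)$, we have
$$\log\tfrac{w(A)}{w(U)} = \sum_{i,j}\bigl[f(u_{i,j}+t_{i,j}) - f(u_{i,j})\bigr].$$
The key identity is the first forward difference $f(a+1) - f(a) = -\log[(M-a)(a+1)]$, and iterating gives a telescoped sum of logarithms in an index $s$ running over $|t_{i,j}|$ steps. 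Because the hypothesis $|t_{i,j}| \le \rho/9$ combined with $\rho \ge 4$ keeps every argument inside an interval like $[8\rho/9,\, 10\rho/9]$ (and analogously near $(k-1)\rho$), each $\log$ can be expanded via $\log(1+x) = x - x^2/2 + O(x^3)$ around the centres $\rho$ and $(k-1)\rho$, with a uniform cubic tail bound.

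Collecting terms by powers of the summation index $s$ shows that: (a) the constant-in-$s$ contribution yields a linear-in-$t$ term whose coefficient depends only on $u_{i,j}$; (b) the $O(s)$ contribution sums to a quadratic in $t$ whose coefficient works out to exactly $-\tfrac12 \tfrac{k-2}{(k-1)\rho}$, matching $\log\tw$; (c) the $O(s^2)$-and-higher contributions form the cubic-size remainder. After subtracting $\log(\tw(A)/\tw(U))$, three error pieces remain: a linear residue $\alpha(u_{i,j})\,t_{i,j}$, a sub-leading quadratic of size $O(t_{i,j}^2/\rho^2)$ coming from the fact that $u_{i,j}$ is $q$ or $q+1$ rather than exactly $\rho$, and a Taylor remainder of size $O((|t_{i,j}|+1)^3/\rho^2)$, with the ``$+1$'' absorbing finite-difference corrections. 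The last two combine into the $\tfrac{4}{\rho^2}\sum(|t_{i,j}|+1)^3$ piece of the stated bound. For the linear residue, I would use that both $A$ and $U$ have all row and column sums equal to $M$, so $\sum_{i,j} t_{i,j} = 0$; since $U$ has only two distinct entry values $q$ and $q+1$, the coefficient $\alpha(u_{i,j})$ takes only two values, and
$$\sum_{i,j}\alpha(u_{i,j})\,t_{i,j} = \bigl(\alpha(q+1)-\alpha(q)\bigr)\sum_{(i,j):\,u_{i,j}=q+1}t_{i,j}.$$
A mean-value estimate $|\alpha(q+1)-\alpha(q)| = O(1/\rho)$ then yields the $\tfrac{3}{2\rho}\sum|t_{i,j}|$ term.

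The main obstacle is pinning down the explicit constants $4$ and $3/2$ rather than obtaining mere $O(\cdot)$ bounds. This requires careful bookkeeping in three places: controlling the cubic-and-higher tail of the $\log(1+x)$ expansion uniformly in $s\in\{0,\ldots,|t|-1\}$ (where $\rho \ge 4$ and $|t_{i,j}|\le\rho/9$ jointly keep $|x|$ below an absolute constant around $1/8$, making the tail a fast-decaying geometric series); showing that the sub-leading quadratic $O(t^2/\rho^2)$ piece can be absorbed into $(|t|+1)^3/\rho^2$ without inflating the constant past $4$; and verifying $|\alpha(q+1)-\alpha(q)| \le 3/(2\rho)$ by a direct computation of the first difference of $f'$. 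A further technical subtlety is that the sign cases $t_{i,j} > 0$ and $t_{i,j} < 0$ (and the two possible values $u_{i,j} \in \{q,q+1\}$) must be treated symmetrically so that the constants do not get doubled.
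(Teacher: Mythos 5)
Your overall strategy is sound and is in fact quite close to the paper's. The paper's proof (Appendix~D) also telescopes $\log(w(A)/w(U))$ into a sum of $\log f(u_{i,j}+\ell)$ terms (with $f(x)=x(M+1-x)$, essentially your $-\log f = f(a+1)-f(a)$ in your notation), expands these logs to second order around $\rho$ and $(k-1)\rho$, and matches the quadratic coefficient to $-\tfrac12\tfrac{k-2}{(k-1)\rho}$. The difference is in where $\sum_{i,j}t_{i,j}=0$ is invoked: the paper inserts the factor $\prod_{i,j} f(\rho)^{-t_{i,j}}=1$ at the outset, which normalises each ratio $f(u_{i,j}+\ell)/f(\rho)$ to $1+O(|\ell|/\rho)$, and then \emph{all} subsequent bounds are term-by-term with no global cancellation. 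Your version keeps the full $\alpha(u_{i,j})=O(\log\rho)$ linear coefficient and cancels it at the end through $\sum_{i,j}\alpha(u_{i,j})t_{i,j}=(\alpha(q{+}1)-\alpha(q))\sum_{u_{i,j}=q+1}t_{i,j}$; this is a legitimate variant and buys you nothing beyond what the upfront normalisation already gives, but it does work.

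There is, however, one genuine accounting gap. Your $O(s)$ contribution, summed over $s=0,\dots,t-1$, produces $\sum_s s = \tfrac{t(t-1)}{2}=\tfrac{t^2}{2}-\tfrac{t}{2}$, i.e.\ \emph{both} the quadratic that matches $\log\tw$ \emph{and} a linear piece of size $\tfrac{k-2}{2(k-1)\rho}\,|t_{i,j}|$. You ascribe the entire $\tfrac{3}{2\rho}\sum|t_{i,j}|$ bound to $|\alpha(q{+}1)-\alpha(q)|$ and say the ``$+1$'' in the cubic term absorbs the finite-difference corrections, but the $t/2$ correction is $\Theta(|t|/\rho)$, not $O(|t|/\rho^2)$, so it \emph{cannot} be hidden inside $\tfrac{4}{\rho^2}(|t_{i,j}|+1)^3$; it must live in the $\tfrac{3}{2\rho}\sum|t_{i,j}|$ term. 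Once you track it, the arithmetic does close: the cancellation piece contributes about $\tfrac{k-2}{(k-1)\rho}\sum|t_{i,j}|$ (so $|\alpha(q{+}1)-\alpha(q)|$ is $\approx\tfrac{k-2}{(k-1)\rho}\le\tfrac1\rho$, not $\tfrac{3}{2\rho}$ as you guessed), and the $t/2$ finite-difference piece contributes $\tfrac{k-2}{2(k-1)\rho}\sum|t_{i,j}|$; together they give $\tfrac{3(k-2)}{2(k-1)\rho}\sum|t_{i,j}|\le\tfrac{3}{2\rho}\sum|t_{i,j}|$. In the paper's formulation the same two sources appear as the $(1\pm 2\ve_{i,j})$ and the $|t|$-in-$|t|(|t|\pm 1)/2$ contributions, and $|\ve_{i,j}|<1$ yields the factor $3$. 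So: add the $t/2$ piece explicitly to your linear-residue budget, and tighten your estimate of $|\alpha(q{+}1)-\alpha(q)|$ to roughly $1/\rho$; otherwise your constants do not reproduce $3/(2\rho)$.
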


The proof of the lemma is given in Appendix~D.

Taking cue from Lemma~\ref{lem:gauss}, we define a discrete Gaussian measure on the set, $\tcA_{k,M}$, of all $k \times k$ integer (not necessarily non-negative) matrices $A = (a_{i,j})$ with row and column sums equal to $M$.  For any such matrix $A$, define $\tw(A) :=  \exp\left\{ -\frac{1}{2\sigma_{k,M}^2} \sum_{i,j} t_{i,j}^2\right\}$, where $(t_{i,j}) = A - U$ and $\sigma_{k,M}^2:= \frac{k-1}{k-2} \bigl(\frac{M}{k}\bigr)$. We then define a probability measure $\tQ_{k,M}$ on $\tcA_{k,M}$ as follows: for $A \in \tcA_{k,M}$,
$$
\tQ_{k,M}(A) := \frac{1}{\tZ_{k,M}} \tw(A),
$$
where $\tZ_{k,M} = \sum_{A \in \tcA_{k,M}} \tw(A)$. 

Let $\tE[\cdot]$ denote expectation with respect to the measure $\tQ_{k,M}$. To be clear, when we write $\tE[\|A-U\|^2]$, we mean $\sum_{A \in \tcA_{k,M}} \|A-U\|^2 \, \tQ_{k,M}(A)$, while $\E[\|A-U\|^2]$ refers to $\sum_{A \in \cA_{k,M}} \|A-U\|^2 \, Q_{k,M}(A)$. The following lemma, proved in Appendix~E, shows that $\E[\|A-U\|^2]$ is well-approximated by $\tE[\|A-U\|^2]$ as $M \to \infty$.

\begin{lemma}
$\E[\|A-U\|^2] = \tE[\|A-U\|^2]  + o(M)$.
\label{lem:E_and_tE}
\end{lemma}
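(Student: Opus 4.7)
The plan is to couple $Q_{k,M}$ with $\tQ_{k,M}$ on a ``typical'' set where Lemma~\ref{lem:gauss} gives an $o(1)$ multiplicative approximation, and to control the tails separately. Choose a threshold $R = R(M)$ satisfying $\sqrt{M} \ll R \ll M^{2/3}$ (for instance $R = M^{3/5}$) and let $\cG = \{A \in \tcA_{k,M} : \|A-U\|_\infty \le R\}$. For $M$ large we have $R \le \rho/9$ and all entries $u_{i,j} \pm R$ are non-negative, so $\cG \subseteq \cA_{k,M}$. Applying Lemma~\ref{lem:gauss} uniformly on $\cG$ yields
$$
\left|\log\frac{w(A)}{w(U)} - \log\frac{\tw(A)}{\tw(U)}\right| \le \frac{4k^2(R+1)^3}{\rho^2} + \frac{3k^2 R}{2\rho} = o(1),
$$
so that $w(A) = (w(U)/\tw(U))\,\tw(A)\,(1+o(1))$ uniformly for $A \in \cG$.

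The measure $\tQ_{k,M}$ is a discrete Gaussian on an affine sublattice of dimension $(k-1)^2$ with coordinate variances of order $M$, so standard sub-Gaussian concentration gives $\tQ_{k,M}(\cG^c) \le \exp(-cR^2/M)$ for some $c > 0$, and an explicit moment calculation bounds $\tE[\|A-U\|^4] = O(M^2)$. By Cauchy--Schwarz, $\tE[\|A-U\|^2 \mathbf{1}_{\cG^c}] = o(1)$, whence $\sum_{A\in\cG}\tw(A) = (1-o(1))\tZ_{k,M}$ and $\sum_{A\in\cG}\|A-U\|^2 \tw(A) = \tZ_{k,M}\,\tE[\|A-U\|^2] + o(1)$. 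Combining these with the uniform ratio on $\cG$ gives
$$
\sum_{A\in\cG} w(A) = (1+o(1))\,\tfrac{w(U)}{\tw(U)}\,\tZ_{k,M}, \qquad \sum_{A\in\cG}\|A-U\|^2 w(A) = (1+o(1))\,\tfrac{w(U)}{\tw(U)}\,\tZ_{k,M}\,\tE[\|A-U\|^2].
$$
If one can further establish $\sum_{\cA_{k,M}\setminus\cG}w(A) = o\bigl((w(U)/\tw(U))\tZ_{k,M}\bigr)$ and $\sum_{\cA_{k,M}\setminus\cG}\|A-U\|^2 w(A) = o\bigl(M(w(U)/\tw(U))\tZ_{k,M}\bigr)$, then dividing by $Z_{k,M} = (1+o(1))(w(U)/\tw(U))\tZ_{k,M}$ immediately yields $\E[\|A-U\|^2] = \tE[\|A-U\|^2] + o(M)$.

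The main obstacle is the $Q_{k,M}$-tail estimate in the intermediate band $\{R < \|A-U\|_\infty \le \rho/9\}$, because the error term $4k^2(R+1)^3/\rho^2$ in Lemma~\ref{lem:gauss} grows to a nontrivial constant as $\|A-U\|_\infty$ approaches $\rho/9$ and could in principle overpower the Gaussian damping $\exp(-\|A-U\|^2/(2\sigma_{k,M}^2))$. To resolve this I would either sharpen Lemma~\ref{lem:gauss} by retaining the next-order Stirling term (producing a quartic error $\sum t_{i,j}^4/\rho^3$ that is always dominated by the Gaussian factor in this band), or bypass it entirely by deriving a direct Stirling-based bound on the row-wise factor $\phi(\a_i)/\phi(\u) \le \exp(-c\|\a_i-\u\|^2/\rho)$ and multiplying over the $k$ rows. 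Either approach, combined with the crude bound $w(A) \le w(U)$ from Proposition~\ref{prop:U} applied to the far tail $\{\|A-U\|_\infty > \rho/9\}$ against the merely polynomial-in-$M$ count of such matrices, closes the tail estimate and completes the proof.
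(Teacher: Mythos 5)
Your high-level strategy — split $\cA_{k,M}$ into a ``typical'' set where Lemma~\ref{lem:gauss} gives a uniform $1+o(1)$ multiplicative approximation of $w(A)/w(U)$ by $\tw(A)$, then bound the tails of both measures separately — is the same as the paper's (Appendix~E), with your $R=M^{3/5}$ playing the role of the paper's $\rho^{1/2+\delta}$. The $\tQ_{k,M}$-tail estimate via sub-Gaussian concentration plus Cauchy--Schwarz is sound and is essentially the paper's Lemma~\ref{lem:tQ_A^c} (which instead invokes Proposition~\ref{prop:muZR} directly). The problem is the $Q_{k,M}$-tail, and there the proposal has a genuine gap that is not just a matter of filling in details.

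For the far tail $\{\|A-U\|_\infty > \rho/9\}$, the ``crude bound $w(A)\le w(U)$ times polynomial count'' does not close the argument. After dividing by $w(U)$ (note $\tw(U)=1$), you need $\sum_{\text{far tail}} \|A-U\|^2\, w(A)/w(U) = o\bigl(M\,\tZ_{k,M}\bigr)$. But $\tZ_{k,M}=\Theta\bigl(\sigma_{k,M}^{(k-1)^2}\bigr)=\Theta\bigl(M^{(k-1)^2/2}\bigr)$, while your crude bound only gives $\le k^2M^2\,(M+1)^{k^2}$; since $k^2+2 > 1+(k-1)^2/2$ for every $k$, this comparison fails by a polynomial factor, for all $k$. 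So $w(A)\le w(U)$ is far too weak: one must show that $w(A)/w(U)$ is \emph{exponentially} small (in some power of $M$) on the entire tail, not merely $\le 1$. The same issue infects the normalisation: $Z_{k,M}/w(U)\approx\tZ_{k,M}$, which is a relatively small polynomial, so any contribution that is not beaten down exponentially pollutes the ratio.

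This is precisely what the paper's Lemma~\ref{lem:Q_A^c} supplies: $w(A)/w(U)\le\exp(-c_k'\rho^{2\delta})$ uniformly on $\cB_{k,M}(\delta)=\cA_{k,M}\setminus\cA_{k,M}(\delta)$. Its proof is neither of your two proposed routes. Rather, it exploits the sign structure of $\log\bigl(f(u_{i,j}+\ell)/f(\rho)\bigr)$ (nonnegative only for $\rho\le u_{i,j}+\ell\le M+1-\rho$) to replace each telescoping product in~\eqref{log_wratio} by its \emph{truncation} to the first $\hatt_{i,j}=\min\{|t_{i,j}|,\rho^{1/2+\delta}\}$ factors, which only increases $\log(w(A)/w(U))$; to the truncated sum the error bounds \eqref{neg_t_bound}--\eqref{pos_t_bound} apply (so Lemma~\ref{lem:gauss} needs no sharpening), and at least one $\hatt_{i,j}$ equals $\rho^{1/2+\delta}$, producing the $-\rho^{2\delta}$ in the exponent. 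The only wrinkle is that the truncation step is valid for positive $t_{i,j}$ only when $a_{i,j}\le M+1-\rho$; matrices violating this are handled by a local swap argument ($a_{i,j}\to a_{i,j}-1$, etc.) showing $w(A)\le w(A^\pm)$ while preserving membership in $\cB_{k,M}(\delta)$, so one can always reduce to the bounded-entry case. Your alternative proposal of a uniform row-wise bound $\phi(\a_i)/\phi(\u)\le\exp(-c\|\a_i-\u\|^2/\rho)$ is plausible (and would be a clean replacement if proved), but it is a global claim about a Schur-concave function that you have not established, and it is not what the paper does. As written, the proposal's treatment of $\cA_{k,M}\setminus\cG$ does not prove the lemma.
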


The usefulness of this lemma stems from the fact that $\tE[\|A-U\|^2]$ can be estimated very accurately. To do this, we express $\tE[\|A-U\|^2]$ in a different form. For any $A \in \tcA_{k,M}$, note that $A-U$ is an integer matrix all of whose row and column sums are equal to $0$, i.e., $A-U \in \tcA_{k,0}$. For simplicity, let $\cT_k := \tcA_{k,0}$. We then have $\tE[\|A-U\|^2] = \frac{1}{\tZ} \sum_{T\in\cT_k} \|T\|^2 \exp\bigl\{-\frac{1}{2\sigma_{k,M}^2}\|T\|^2\bigr\}$, where $\tZ := \sum_{T\in\cT_k} \exp\bigl\{-\frac{1}{2\sigma_{k,M}^2}\|T\|^2\bigr\}$. 

For any $T = (t_{i,j}) \in \cT_k$, each of the entries in the $k$th row and column of $T$ is linearly dependent on the entries $t_{i,j}$, $1 \le i,j \le k-1$. To be precise,
\begin{align*}
t_{i,k} & = -\sum_{j=1}^{k-1} t_{i,j},  \ \ i = 1,\ldots,k-1 ; \\
t_{k,j} & = -\sum_{i=1}^{k-1} t_{i,j},  \ \ j = 1,\ldots,k-1 ;\\
t_{k,k} & = - \sum_{i=1}^{k-1} t_{i,k} = \sum_{i=1}^{k-1}\sum_{j=1}^{k-1} t_{i,j}.
\end{align*}
It follows that $\|T\|^2$ is a positive definite quadratic form in the $(k-1)^2$ variables $t_{i,j}$, $1 \le i,j \le k-1$. Hence, we can write\footnote{To avoid notational confusion, we write $\t'$, $\x'$ etc.\ instead of $\t^T$, $x^T$ etc.\ to denote the transpose of $\t$, $\x$ etc.}
$$
\|T\|^2 = \t' B \t,
$$
where $\t \in \Z^{(k-1)^2}$ is a vector with coordinates $t_{i,j}$, $1 \le i,j \le k-1$ (in some fixed linear order),
and $B$ is a symmetric positive definite matrix. With this, we have
$$
\tE[\|A-U\|^2] = \frac{1}{\tZ} \sum_{\t\in\Z^{(k-1)^2}} \t' B \t \, \exp\left\{-\frac{1}{2\sigma_{k,M}^2} \, \t' B \t\right\},
$$
where $\tZ$ can now be written as
$$
\tZ = \sum_{\t \in \Z^{(k-1)^2}} \exp\left\{-\frac{1}{2\sigma_{k,M}^2} \, \t' B \t\right\}.
$$
Thus, we see that $\tE[\|A-U\|^2]$ is equal to the expected value of $\t' B \t$, where $\t$ is a random vector in $\Z^{(k-1)^2}$ distributed according to the discrete Gaussian measure $\frac{1}{\tZ} \exp\left\{-\frac{1}{2\sigma_{k,M}^2} \, \t' B \t\right\}$. The lemma below is a direct consequence of Proposition~\ref{prop:var} in Appendix~F.

\begin{lemma}
We have
$$
\tE[\|A-U\|^2] = (k-1)^2 \sigma_{k,M}^2 + O\bigl(\exp(-c_k M)\bigr),
$$
where $c_k$ is a positive constant depending only on $k$.
\label{lem:Qtilde_var}
\end{lemma}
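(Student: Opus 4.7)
The plan is to view $\tE[\|A-U\|^2]$ as the second moment (with respect to the quadratic form $B$) of a discrete Gaussian measure on the lattice $\Z^{(k-1)^2}$ and to show that, up to exponentially small corrections, this moment equals its continuous-Gaussian counterpart. The continuous analogue is easy: a centered Gaussian on $\R^{(k-1)^2}$ with density proportional to $\exp\bigl\{-\tfrac{1}{2\sigma_{k,M}^2}\,\t'B\t\bigr\}$ has covariance $\sigma_{k,M}^2 B^{-1}$, so $\E[\t'B\t] = \sigma_{k,M}^2 \,\tr(B^{-1}B) = (k-1)^2 \sigma_{k,M}^2$. The lemma is therefore a quantitative statement that the discrete Gaussian agrees with the continuous one in this second moment as $\sigma_{k,M}^2 \to \infty$.

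The first step I would carry out is to record that the symmetric matrix $B$ realizing $\|T\|^2 = \t'B\t$ depends only on $k$, not on $M$, since it is determined by the linear relations expressing the $k$th row and column of $T\in\cT_k$ in terms of the $(k-1)^2$ free entries. In particular, its smallest and largest eigenvalues $\lambda_{\min}(B), \lambda_{\max}(B)$ are positive constants depending only on $k$. Combined with $\sigma_{k,M}^2 = \tfrac{k-1}{k(k-2)}\,M$, this immediately tells us that any error term of the form $\exp\bigl\{-c\,\sigma_{k,M}^2/\lambda_{\max}(B)\bigr\}$ will collapse into $O(\exp(-c_k M))$ for a suitable constant $c_k > 0$.

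The main step is a Poisson-summation argument (which I expect is packaged in Proposition~\ref{prop:var} of Appendix~F). Writing $\beta = 1/\sigma_{k,M}^2$, the theta-type partition function is
$$
\tZ(\beta) \;=\; \sum_{\t\in\Z^{(k-1)^2}} \exp\!\Bigl(-\tfrac{\beta}{2}\,\t'B\t\Bigr) \;=\; \Bigl(\tfrac{2\pi}{\beta}\Bigr)^{(k-1)^2/2}(\det B)^{-1/2}\biggl(1 + \!\!\sum_{m\in\Z^{(k-1)^2}\setminus\{0\}}\!\! \exp\!\bigl(-\tfrac{2\pi^2}{\beta}\,m'B^{-1}m\bigr)\biggr).
$$
Because $m'B^{-1}m \ge \|m\|^2/\lambda_{\max}(B)$ for $m\ne 0$, the tail sum is $O\bigl(\exp(-c_k M)\bigr)$. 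Differentiating $-2\,\partial_\beta \log \tZ(\beta)$ produces exactly $\tE[\t'B\t]$, and the main term yields the continuous value $(k-1)^2/\beta = (k-1)^2 \sigma_{k,M}^2$; the correction from differentiating $\log(1+\text{exp-small})$ is again of order $\exp(-c_k M)$ (after absorbing polynomial-in-$M$ prefactors, which are swallowed by a slight adjustment of $c_k$). Combined with $\tE[\|A-U\|^2] = \tE[\t'B\t]$, this gives the claim.

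The one mildly delicate point, and the place I expect to have to be careful, is handling the polynomial prefactors that arise when differentiating the dual-lattice tail (e.g.\ factors like $\beta^{-2}\exp(-c/\beta)$) and when taking the quotient of numerator over $\tZ$ rather than using $\log\tZ$ directly: one must verify that these polynomial factors do not outweigh the exponential decay in $M$, which is immediate since $\sigma_{k,M}^2$ grows linearly in $M$ and $\exp(-c_k M)$ dominates any polynomial in $M$ at the cost of shrinking $c_k$ slightly. Everything else is standard discrete-Gaussian bookkeeping, which is why the lemma is stated as a direct consequence of the general proposition in Appendix~F.
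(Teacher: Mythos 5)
Your proof is correct and follows essentially the same approach as the paper: Poisson summation to match the discrete Gaussian second moment to its continuous analogue $(k-1)^2\sigma_{k,M}^2$, with the error controlled by the fact that $B$ (and hence $\lambda_{\max}(B)$) depends only on $k$ while $\sigma_{k,M}^2$ grows linearly in $M$, so polynomial prefactors are absorbed into $\exp(-c_k M)$. The paper packages exactly this computation as Proposition~\ref{prop:var} in Appendix~F, applying Poisson summation directly to $u(\x) = \x' B \x\, v(\x)$ rather than via the log-derivative of $\tZ$, but the two calculations are equivalent.
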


Equation~\eqref{eq:E[A-U]} and Lemmas~\ref{lem:E_and_tE} and \ref{lem:Qtilde_var} suffice to prove that 
$$
\Var_{k,M}(a_{1,1}) = \frac{(k-1)^2}{k^2} \, \sigma_{k,M}^2 + o(M),
$$
from which we obtain the second statement of Proposition~\ref{prop:VarkM}(b).


\section{Concluding Remarks} \label{sec:conclusion}

The two main results of this paper, namely, Theorems~\ref{thm:thres} and \ref{thm:thresB}, encapsulate an abrupt transition in behaviour of the uniform distribution $U_k$ in relation to the mapping $\beta_{k,M}^{\psibf}:\Pi_{[k]} \to \R_+$ defined by $\p \mapsto \perm_{B,M}(\Th(\psibf;\p))$. While the former theorem concerns the $M=1$ case, the latter theorem holds for all sufficiently large $M$. Our proofs of these theorems involve picking an arbitrary unit vector $\xibf \in \R^k$ such that $\p(t) = U_k + t \xibf$ lies within $\Pi_{[k]}$ for all $t$ within a sufficiently small interval around $0$, and analyzing the first and second derivatives at $t = 0$ of the function $G_{k,M}(t) = \beta_{k,M}^{\psibf}(\p(t))$. 

Unfortunately, we have not been able to successfully extend this proof technique to deduce an analogous result for the mapping $\beta_{k,\infty}^{\psibf}: \Pi_{[k]} \to [0,1]$ defined by $\p \mapsto \perm_B(\Th(\psibf;\p))$. The main technical obstacle here is the fact that the corresponding function $G_{k,\infty}(t) = \beta_{k,\infty}^{\psibf}(\p(t))$ may not be differentiable. (We do know that this function is continuous, since $\perm_B(\psibf;\p)$ was noted to be a continuous function of $\p$ in the paragraph following \eqref{def:BPML}.) Note that Vontobel's identity \eqref{eq:Vontobel} only allows us to say that $\limsup_{M\to\infty} G_{k,M}(t) = G_{k,\infty}(t)$ pointwise in $t$. By itself, this is insufficient to claim the differentiability of $G_{k,\infty}$. 

It would be possible to prove the desired result if we could show that the functions $G_{k,M}$ satisfy the following two conditions within a sufficiently small interval $I$ about $0$: (i)~$\lim_{M\to\infty} G_{k,M}(t)$ exists pointwise on $I$, and (ii)~the first three derivatives of $G_{k,M}$ are uniformly bounded on $I$. Indeed, using the properties of equicontinuous and uniformly convergent functions (for example, Theorems~7.17 and 7.25 in \cite{Rudin}), we can then deduce that $G_{k,\infty}$ is twice-differentiable in the interior of $I$. Moreover, we would have $\lim_{M \to \infty} G_{k,M}(t) = G_{k,\infty}(t)$, $\lim_{M \to \infty} G'_{k,M}(t) = G'_{k,\infty}(t)$, and $\lim_{M \to \infty} G''_{k,M}(t) = G''_{k,\infty}(t)$ for all $t$ in the interior of $I$. In particular, these hold at $t=0$. From this, we could conclude, via Lemma~\ref{lem:G} and Corollary~\ref{cor:signG''}, that the mapping $\beta_{k,\infty}^{\psibf}$ admits a phase transition at the same threshold as in Theorem~\ref{thm:thresB}.

We would unhesitatingly conjecture that condition~(i) above is true, but we are less sure about condition~(ii). It is actually not difficult to show that $G_{k,M}'$ is uniformly bounded within some small interval $I$, but the uniform boundedness of the second and third derivatives presents difficulties. It is in fact entirely possible that this approach will not work, as it may be the case that $G_{k,\infty}$ is not differentiable. However, we do believe that the ultimate result is still true: there is sufficient numerical evidence in favour of there being a phase transition at the threshold $\Upsilon_B(\psibf)$, defined in Theorem~\ref{thm:thresB}, for the uniform distribution in the Bethe PML problem.


\section*{Appendix~A: Proof of Proposition \ref{prop:permBM}}

From Definition~\ref{def:permBM}, we see that $\perm_{B,M}(\Th)$ involves the permanents of $kM \times kM$ matrices $\Th \odot \Lambda$ as in \eqref{M-lift}. Each such permanent involves a sum over permutations $\pi: [kM] \to [kM]$. We identify a permutation $\pi: [kM] \to [kM]$ with the permutation matrix in $\cP_{kM}$ whose $(s,t)$th entry is a $1$ iff $\pi(s) = t$; in a slight abuse of notation, we let $\pi$ denote this permutation matrix as well. We will find it convenient to view any matrix $\pi \in \cP_{kM}$ as a block matrix of the form 
\begin{equation}
\pi = \left(
\begin{array}{cccc}
\pi^{(1,1)} & \pi^{(1,2)} & \cdots & \pi^{(1,k)} \\
\pi^{(2,1)} & \pi^{(2,2)} & \cdots & \pi^{(2,k)} \\
\vdots & \vdots & \ddots & \vdots \\
\pi^{(k,1)} & \pi^{(k,2)} & \cdots & \pi^{(k,k)} \\
\end{array}
\right)
\label{pi_blocks}
\end{equation}
where for $1 \le i,j \le k$, the $(i,j)$th block $\pi^{(i,j)}$ is the $M \times M$ submatrix of $\pi$ located at the intersection of the rows and columns indexed by $(i-1)M+1,\ldots,iM$ and $(j-1)M+1,\ldots,jM$, respectively. Let $a_{i,j}(\pi)$ denote the number of $1$s in $\pi^{(i,j)}$.  

Given two $0/1$-matrices $\mathsf{P} = (p_{i,j})$ and $\mathsf{Q} = (q_{i,j})$ of the same size, we write $\mathsf{P} \leq \mathsf{Q}$ if for all $i,j$, we have $p_{i,j} \le q_{i,j}$ (or equivalently, $p_{i,j} = 1 \Longrightarrow q_{i,j} = 1$). Using the newly introduced notation, we observe that 
\begin{equation}
\perm(\Th \odot \Lambda) 
 = \sum_{\pi \in \cP_{kM}: \pi \le \Lambda} \prod_{(i,j) \in [k]^2} \theta_{i,j}^{a_{i,j}(\pi)}
\label{permodot}
\end{equation}
since permutations $\pi \not\le \Lambda$ contribute only $0$s to the permanent. Hence,
\begin{align}
\bigl\langle\perm(\Th\odot\Lambda)\bigr\rangle
& = (M!)^{-k^2} \sum_{\Lambda \in \cP_M^{k \times k}} \sum_{\pi \in \cP_{kM}: \pi \leq \Lambda} \prod_{(i,j) \in [k]^2}  \theta_{i,j}^{a_{i,j}(\pi)} \notag \\
& = (M!)^{-k^2} \sum_{\pi \in \cP_{kM}} \sum_{\Lambda \in \cP_{M}^{k \times k}: \pi \leq \Lambda} \prod_{(i,j) \in [k]^2}  \theta_{i,j}^{a_{i,j}(\pi)}. 
\label{avgperm1}
\end{align}
Now, for a given $\pi \in \cP_{kM}$, the number of matrices $\Lambda$ as in \eqref{eq:Lambda} such that $\pi \leq \Lambda$ is equal to $\prod_{(i,j) \in [k]^2}  (M-a_{i,j}(\pi))!$. This is because, for each $(i,j)$, $\pi^{(i,j)}$ determines the positions of $a_{i,j}(\pi)$ $1$s in $P^{(i,j)}$, and the positions of the remaining $1$s in $P^{(i,j)}$ can be chosen in $(M-a_{i,j}(\pi))!$ ways to make $P^{(i,j)}$ a permutation matrix. Therefore, carrying on from \eqref{avgperm1}, we have
\begin{equation}
\bigl\langle\perm(\Th\odot\Lambda)\bigr\rangle 
 = (M!)^{-k^2} \sum_{\pi \in \cP_{kM}} \prod_{(i,j) \in [k]^2}  (M-a_{i,j}(\pi))! \, \theta_{i,j}^{a_{i,j}(\pi)}. 
\label{avgperm2}
\end{equation}

For any $\pi \in \cP_{kM}$, the $k \times k$ matrix $A(\pi) := (a_{i,j}(\pi))$ is a non-negative integer matrix whose row and column sums are all equal to $M$. Recall from the statement of Proposition~\ref{prop:permBM} that $\cA_{k,M}$ denotes the set of all such $k \times k$ matrices. Thus, we can write \eqref{avgperm2} as 
\begin{equation}
\bigl\langle\perm(\Th\odot\Lambda)\bigr\rangle 
 = (M!)^{-k^2} \sum_{A = (a_{i,j}) \in \cA_{k,M}} |\cP_{kM}(A)| \prod_{(i,j) \in [k]^2} (M-a_{i,j})! \, \theta_{i,j}^{a_{i,j}} 
\label{avgperm3}
\end{equation}
where $\cP_{kM}(A) := \{\pi \in \cP_{kM}: A(\pi) = A\}$. Therefore, the proof of Proposition~\ref{prop:permBM} would be complete once we prove the following lemma.

\begin{lemma}
For $A = (a_{i,j}) \in \cA_{k,M}$, we have 
$$|\cP_{kM}(A)| = \frac{(M!)^{2k}}{\prod_{(i,j) \in [k]^2}  (a_{i,j})!}.$$
\label{lem:PkMA}
\end{lemma}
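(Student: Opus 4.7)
The plan is to count $|\cP_{kM}(A)|$ directly by decomposing a permutation $\pi \in \cP_{kM}(A)$ into combinatorial choices. I view $\pi$ as a bijection between $[kM]$ (rows) and $[kM]$ (columns). The rows are partitioned into $k$ row-groups $R_1,\ldots,R_k$ of size $M$ (where $R_i$ indexes rows $(i-1)M+1,\ldots,iM$), and similarly the columns are partitioned into column-groups $C_1,\ldots,C_k$. The condition $\pi \in \cP_{kM}(A)$ is precisely that, for every $(i,j) \in [k]^2$, exactly $a_{i,j}$ rows in $R_i$ are sent by $\pi$ into $C_j$.

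To specify such a $\pi$, I would make three independent choices. First, for each $i$, I refine $R_i$ into an ordered partition $R_i = R_{i,1} \sqcup R_{i,2} \sqcup \cdots \sqcup R_{i,k}$ with $|R_{i,j}| = a_{i,j}$, recording which rows of group $i$ are sent into each column-group; since the row sums of $A$ equal $M$, this is a multinomial choice contributing $\frac{M!}{\prod_j a_{i,j}!}$. Second, for each $j$, I refine $C_j$ into an ordered partition $C_j = C_{1,j} \sqcup \cdots \sqcup C_{k,j}$ with $|C_{i,j}| = a_{i,j}$, using the column-sum constraint; this contributes $\frac{M!}{\prod_i a_{i,j}!}$. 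Third, for each block $(i,j)$, I pick an arbitrary bijection $R_{i,j} \to C_{i,j}$, contributing $a_{i,j}!$.

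These three choices are independent, and together they uniquely determine $\pi$ (conversely, $\pi$ uniquely determines all three). Multiplying gives
$$
|\cP_{kM}(A)| = \Biggl(\prod_{i=1}^k \frac{M!}{\prod_j a_{i,j}!}\Biggr) \Biggl(\prod_{j=1}^k \frac{M!}{\prod_i a_{i,j}!}\Biggr) \Biggl(\prod_{(i,j) \in [k]^2} a_{i,j}!\Biggr) = \frac{(M!)^{2k}}{\prod_{(i,j) \in [k]^2} a_{i,j}!},
$$
since the two multinomial products yield $(M!)^{2k}/\bigl(\prod_{i,j} a_{i,j}!\bigr)^2$ and one factor of $\prod_{i,j} a_{i,j}!$ cancels against the block-bijection contributions.

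There is essentially no technical obstacle here; the only thing requiring a moment's care is verifying the bijection between permutations in $\cP_{kM}(A)$ and triples (row-refinements, column-refinements, per-block bijections). This is clear: the refinements are read off $\pi$ by grouping rows of $R_i$ according to which column-group they land in (and symmetrically for columns), and $\pi$ restricted to $R_{i,j}$ is a bijection onto $C_{i,j}$ by construction.
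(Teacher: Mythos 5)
Your proof is correct and follows essentially the same route as the paper's: both decompose the count into three independent choices — a multinomial choice of how each row-group is distributed across column-groups, a multinomial choice of how each column-group is distributed across row-groups, and an $a_{i,j}!$-fold choice of per-block bijection — and then multiply. The only difference is cosmetic (you phrase the choices as ordered set partitions and bijections, while the paper phrases them as placing $1$s in the blocks $\pi^{(i,j)}$ of the permutation matrix).
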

\begin{proof}
Given a matrix $A = (a_{i,j}) \in \cA_{k,M}$, we can construct permutation matrices $\pi \in \cP_{kM}$ such that $A(\pi) = A$ by following the three steps described below. Our description views a $kM \times kM$ matrix $\pi$ as a $k \times k$ block matrix as in \eqref{pi_blocks}. 
\begin{enumerate}
\item Fix an $i \in [k]$. For each $j \in [k]$, pick $a_{i,j}$ rows of $\pi^{(i,j)}$ within which to place $1$s. Since $\pi$ cannot have two $1$s in the same row, the number of ways in which these $a_{i,j}$ rows, $j = 1,\ldots,k$, can be picked is the multinomial coefficient $\binom{M}{a_{i,1},\ldots,a_{i,k}}$. Then, letting $i$ range over $[k]$, we see that the number of ways in which rows of $\pi$ can be so chosen is $\prod_i \binom{M}{a_{i,1},\ldots,a_{i,k}}$.
\item Fix a $j \in [k]$. For each $i \in [k]$, pick $a_{i,j}$ columns of $\pi^{(i,j)}$ within which to place $1$s. By a similar argument as above, this can be done in $\binom{M}{a_{1,j},\ldots,a_{k,j}}$ ways. So, letting $j$ range over $[k]$, we see that the number of ways in which columns of $\pi$ can be so chosen is $\prod_j \binom{M}{a_{1,j},\ldots,a_{k,j}}$.
\item Fix a pair $(i,j) \in [k] \times [k]$, and consider the submatrix of $\pi^{(i,j)}$ formed by the points of intersection of the $a_{i,j}$ rows and columns chosen in the first two steps. For $\pi$ to be a permutation matrix, this submatrix should be a permutation matrix as well. Hence, there are $(a_{i,j})!$ ways of placing $0$s and $1$s within this submatrix. All other entries of $\pi^{(i,j)}$ must be $0$s. Letting $(i,j)$ range over $[k] \times [k]$, we determine the number of possible choices in this step to be $\prod_{i,j} (a_{i,j})!$.
\end{enumerate}
Thus, putting together the counts in the three steps, we obtain 
$$
|\cP_{kM}(A)| = \left[\prod_i \binom{M}{a_{i,1},\ldots,a_{i,k}} \right] \Biggl[\prod_j \binom{M}{a_{1,j},\ldots,a_{k,j}}\Biggr] \biggl[\prod_{i,j}  (a_{i,j})!\biggr],
$$
which simplifies to the expression in the statement of the lemma. 
\end{proof}


\section*{Appendix~B: Proof of Lemma~\ref{lem:G}}

We introduce some convenient notation to be used in the proof. For $A = (a_{i,j}) \in \cA_{k,M}$, define $\gm_A(t) = \prod_{i,j} (p_i+t\xi_i)^{\mu_ja_{i,j}}$ and $\gm(t) = \sum_{A \in \cA_{k,M}} w(A) \gm_A(t)$. We will need the values of $\gm(0)$, $\gm'(0)$ and $\gm''(0)$, for which we need to compute $\gm_A(0)$, $\gm_A'(0)$ and $\gm_A''(0)$. 

Determining $\gm_A(0)$ is easy: since $p_i = \frac1k$ for all $i \in [k]$, we have
$$
\gm_A(0) = \prod_{i,j} (\frac{1}{k})^{\mu_j a_{i,j}} = k^{-\sum_{i,j} \mu_j a_{i,j}} = k^{-Mn},
$$
the last equality using $\sum_i a_{i,j} = M$ and $\sum_j \mu_j = n$. Hence,
\begin{equation}
\gm(0) = \sum_A w(A) \gm_A(0) = k^{-Mn} Z_{k,M}.
\label{gm0}
\end{equation}

Similarly, taking the derivative of $\gm_A(t)$, it is straightforward to show that
$$
\gm_A'(0) = k^{1-Mn} \sum_{i,j} \xi_i \mu_j a_{i,j},
$$
and hence,
$$
\gm'(0) = \sum_A w(A) \gm_A'(0) = k^{1-Mn} \sum_{i,j} \xi_i \mu_j \sum_A a_{i,j} w(A).
$$
Now, $\sum_A a_{i,j} w(A) = Z_{k,M} \E[a_{i,j}]$, where $\E[\cdot]$ denotes expectation taken with respect to a random matrix $A \in \cA_{k,M}$ distributed according to $Q_{k,M}$. 
By Lemma~\ref{lem:Eaij}, $\E[a_{i,j}]$ is a constant independent of $(i,j)$, and hence,
\begin{equation}
\gm'(0) = k^{1-Mn} Z_{k,M} (\text{const.}) \biggl(\sum_i\xi_i\biggr) \biggl(\sum_j \mu_j\biggr) = 0,
\label{gm'0}
\end{equation}
since $\sum_i \xi_i = 0$ by choice of the direction vector $\xibf$.

Calculation of the second derivative $\gm''(0)$ requires a lot more work. To start with, routine differentiation yields
$$
\gm_A''(0) = k^{2-Mn}\left[ \biggl(\sum_{i,j} \xi_i \mu_j a_{i,j} \biggr)^2 - \sum_{i,j} \xi_i^2 \mu_j a_{i,j} \right],
$$
which we can plug into 
$$
\gm''(0) = \sum_A w(A) \gm_A''(0) = Z_{k,M} \E[\gm_A''(0)]
$$
to get
\begin{align}
\gm''(0) &= Z_{k,M} k^{2-Mn} \left( \E\biggl[\biggl(\sum_{i,j} \xi_i \mu_j a_{i,j} \biggr)^2\biggr] 
- \sum_{i,j} \xi_i^2 \mu_j \E[a_{i,j}] \right) \notag \\
& = Z_{k,M} k^{2-Mn} \left( \E\biggl[\biggl(\sum_{i,j} \xi_i \mu_j a_{i,j} \biggr)^2\biggr] - \frac{Mn}{k}\right).
\label{gm''0_eq1}
\end{align}
For the second equality above, we used Lemma~\ref{lem:Eaij} and $\sum_{i,j} \xi_i^2 \mu_j = (\sum_i \xi_i^2) (\sum_j \mu_j) = n$, since ${\|\xibf\|}_2 = 1$.

To determine $\E\bigl[\bigl(\sum_{i,j} \xi_i \mu_j a_{i,j} \bigr)^2\bigr]$, we write
\begin{equation}
\E\biggl[\biggl(\sum_{i,j} \xi_i \mu_j a_{i,j} \biggr)^2\biggr] = 
 \sum_{i,j} \xi_i^2 \mu_j^2 \E[a_{i,j}^2] + \sum_{\substack{(i,j), (i',j'): \\ (i,j) \ne (i',j')}} \xi_i\xi_{i'} \mu_j \mu_{j'} \E[a_{i,j}a_{i',j'}]
\label{Esumsq}
\end{equation}

As argued for $\E[a_{i,j}]$, the expected value $\E[a_{i,j}^2]$ is also a constant independent of $(i,j)$. With this, the first term on the right-hand side (RHS) of \eqref{Esumsq} can be expressed as 
\begin{equation}
\sum_{i,j} \xi_i^2 \mu_j^2 \E[a_{i,j}^2] = \E[a_{1,1}^2] \sum_j \mu_j^2,
\label{term1}
\end{equation}
using ${\|\xibf\|}_2 = 1$.

Turning our attention to the second term on the RHS of \eqref{Esumsq}, we note that for $(i,j) \ne (i',j')$, by virtue of the invariance of $Q_{k,M}$ with respect to row and column permutations,
\begin{equation}
\E[a_{i,j} a_{i',j'}] = 
\begin{cases}
\E[a_{1,1}a_{2,2}] & \text{ if } i \ne i' \text{ and }  j \ne j' \\
\E[a_{1,1}a_{1,2}] & \text{ otherwise.}
\end{cases}
\label{Eiji'j'}
\end{equation}
Now,
\begin{align}
\E[a_{1,1}a_{1,2}] &= \frac{1}{k-1} \sum_{j=2}^k \E[a_{1,1}a_{1,j}] \ = \ \frac{1}{k-1} \E\biggl[a_{1,1} \sum_{j=2}^k a_{1,j} \biggr]
\notag \\
&= \frac{1}{k-1} \E\bigl[a_{1,1}(M-a_{1,1}) \bigr] \ = \ \frac{1}{k-1} \biggl[\frac{M^2}{k} - \E[a_{1,1}^2] \biggr],
\label{E1112}
\end{align}
where we used Lemma~\ref{lem:Eaij} to get the last equality. By a similar argument,
\begin{align}
\E[a_{1,1}a_{2,2}] &= \frac{1}{k-1} \E\biggl[a_{1,1} \sum_{i=2}^k a_{i,2}\biggr] \ = \ \frac{1}{k-1} \E\bigl[a_{1,1}(M-a_{1,2}) \bigr] \notag \\
& = \frac{1}{k-1} \biggl[\frac{M^2}{k} - \E[a_{1,1} a_{1,2}] \biggr] 
  \ = \ \frac{1}{(k-1)^2} \biggl[\frac{k-2}{k} M^2 + \E[a_{1,1}^2] \biggr],
\label{E1122}
\end{align}
this time using \eqref{E1112} to get the last equality. Plugging \eqref{term1}--\eqref{E1122} into \eqref{Esumsq}, and then performing some careful book-keeping, we eventually obtain
$$
\E\biggl[\biggl(\sum_{i,j} \xi_i \mu_j a_{i,j} \biggr)^2\biggr] = \frac{k}{(k-1)^2} \Var_{k,M}(a_{1,1}) \biggl[k \sum_j \mu_j^2 - n^2\biggr],
$$
and hence,
\begin{equation}
\gm''(0) = Z_{k,M} k^{2-Mn} \left( \frac{k}{(k-1)^2} \Var_{k,M}(a_{1,1}) \biggl[k \sum_j \mu_j^2 - n^2\biggr] - \frac{Mn}{k}\right). 
\label{gm''0_eq2}
\end{equation}

\medskip

Finally, by Proposition~\ref{prop:permBM}, we have 
$$
G_{k,M}(t) = \perm_{B,M}(\Th(\psibf;\p(t))) = [(M!)^{2k-k^2}]^{\frac1M} \gm(t)^{\frac1M}.
$$
Taking the derivative with respect to $t$ and setting $t=0$, we obtain 
$$
G_{k,M}'(0) = [(M!)^{2k-k^2}]^{\frac1M} \frac{1}{M} \gm(0)^{\frac1M -1} \gm'(0) = 0,
$$
since $\gm'(0) = 0$ --- see \eqref{gm'0}.

 Similarly, calculating the second derivative at $t=0$, we get 
$$
G_{k,M}''(0) =  [(M!)^{2k-k^2}]^{\frac1M} \frac{1}{M} \, \gm(0)^{\frac1M} \, \frac{\gm''(0)}{\gm(0)}.
$$
Plugging in the expressions for $\gm(0)$ and $\gm''(0)$ given in \eqref{gm0} and \eqref{gm''0_eq2}, respectively, we obtain the expression for $G_{k,M}''(0)$ recorded in the statement of Lemma~\ref{lem:G}.


\section*{Appendix~C: Proof of Lemma~\ref{lem:roots}}

Part~(1) of the lemma is obvious, so we concern ourselves with part~(2). Here, there are two claims that need proof:

\emph{Claim~\ref{lem:roots}.1}. The discriminant $D$ is strictly negative iff $\Upsilon < \frac{\sqrt{n}+1}{\sqrt{n}-1}$.

\emph{Claim~\ref{lem:roots}.2}. When $D \ge 0$, so that real roots $\rho_1$ and $\rho_2$ exist, we have $1 < \rho_1 \le 2$ and $\Upsilon - 1 \le \rho_2 < \Upsilon$.

\begin{proof}[Proof of Claim~\ref{lem:roots}.1]
Note that $D = (n^2+2n-\cU)^2 - 4n^3 < 0$ iff $|n^2+2n-\cU| < 2n^{3/2}$. We may remove the absolute value in the latter condition since $n^2 > \cU$. Thus, $D < 0$ iff $n^2+2n-\cU < 2n^{3/2}$, which upon some re-arrangement becomes $\cU - n > (n-\sqrt{n})^2$. Thus, recalling that $\Upsilon = \frac{n^2 - n}{\cU - n}$, we see that $D < 0$ is equivalent to 
$$
\Upsilon < \frac{n^2 - n}{(n-\sqrt{n})^2}.
$$
Upon cancelling the common factor $\sqrt{n}(n-\sqrt{n})$, the right-hand side simplifies to $\frac{\sqrt{n}+1}{\sqrt{n}-1}$. \end{proof}

\begin{proof}[Proof of Claim~\ref{lem:roots}.2]
We start by writing $\rho_1 = 1 + \frac{n^2 - \cU - \sqrt{D}}{2(\cU-n)}$. To show that $\rho_1 > 1$, it suffices to show that $n^2 - \cU > \sqrt{D}$, or equivalently, $(n^2-\cU)^2 > D$. Routine algebra shows that 
$(n^2-\cU)^2 - D = 4n\cU$, which is of course positive.

For the rest of this proof, it will be convenient to define $a = \cU-n$, $b = \cU+n^2-2n$ and $c=n^2$, so that $q(x) = ax^2-bx+c$. With this, $\rho_1 = \frac{b-\sqrt{b^2-4ac}}{2a}$. 

We now give a proof for $\rho_1 \le 2$. Suppose that $\rho_1 \ge 2$. We would then have $b-4a \ge \sqrt{b^2-4ac}$. This yields two inequalities that must necessarily be satisfied: $b - 4a \ge 0$ and $(b-4a)^2 \ge \sqrt{b^2-4ac}$. The latter inequality simplifies to $4a-2b+c \ge 0$, so that the two inequalities that must be satisfied are:
\begin{equation}
b - 4a \ge 0 \ \ \text{ and } \ \ 4a-2b+c \ge 0
\label{eq:abc}
\end{equation}
Plugging in the expressions for $a$, $b$ and $c$, we obtain $n^2 + 2n \ge 3\cU$ and $2 \cU \ge n^2$. Combining these inequalities, we get 
\begin{equation}
\frac12 n^2 \le \cU \le \frac13(n^2+2n),
\label{U_ineqs}
\end{equation} 
and hence, $\frac12 n^2 \le \frac13(n^2+2n)$. Upon re-arrangement, this becomes $n^2 - 4n \le 0$, which yields $0 \le n \le 4$. 

As we do not consider pattern lengths $n < 2$, we must deal with $n=2,3,4$. Plugging these values of $n$ into \eqref{U_ineqs}, we obtain $(n,\cU) = (2,2)$, $(3,5)$ and $(5,8)$ as the only valid solutions. The assumption of part~(2) of the lemma is that $\cU > n$, so we cannot have $(n,\cU) = (2,2)$. Also, $(n,\cU) = (3,5)$ is not possible as this yields a negative discriminant. We are thus forced to conclude that if $\rho_2 \ge 2$, then $(n,\cU) = (4,8)$. Indeed, in this case, we have $q(x) = 4x^2 - 16x + 16$, so that $\rho_1 = \rho_2 = 2$. We have thus proved that $\rho_1 \le 2$ always holds, and in fact, it holds with equality iff $(n,\cU) = (4,8)$, i.e., $\psibf = 1122$.

\medskip

To prove that $\Upsilon-1 \le \rho_2 < \Upsilon$, consider the difference $\rho_2 - \Upsilon$. It may be verified that this difference can be expressed as 
$$
\rho_2 - \Upsilon = \frac{1}{2(\cU-n)}\left[-(n^2-\cU) + \sqrt{(n^2-\cU)^2 - 4n(\cU-n)} \right],
$$
which is obviously strictly negative, given that $\cU > n$ and $n^2 > \cU$. Thus, $\rho_2 < \Upsilon$.

Now, suppose $\rho_2 - \Upsilon \le -1$. Then, using the expression given above for $\rho_2 - \Upsilon \le -1$, we must have $\sqrt{(n^2-\cU)^2 - 4n(\cU-n)} \le n^2 + 2n - 3\cU$. This yields two inequalities to be satisfied: $n^2 + 2n - 3\cU \ge 0$ and ${(n^2-\cU)^2 - 4n(\cU-n)} \le (n^2 + 2n - 3\cU)^2$. The latter inequality can be manipulated into the following equivalent form: $4(\cU-n)(2\cU-n^2) \ge 0$. Since $\cU > n$, this inequality is satisfied iff $2 \cU \ge n^2$. Thus, $\rho_2 - \Upsilon \le -1$ only if the inequalities in \eqref{U_ineqs} hold. As argued earlier, these inequalities are satisfied only if $(n,\cU) = (4,8)$, in which case it may be verified that $\rho_2 - \Upsilon = -1$. This proves that $\rho_2 \ge \Upsilon -1$ always holds, and again, it holds with equality iff $(n,\cU) = (4,8)$, i.e., $\psibf = 1122$.
\end{proof}

This completes the proof of Lemma~\ref{lem:roots}.


\section*{Appendix~D: Proof of Lemma~\ref{lem:gauss}}

Throughout the proof, we fix $k \ge 3$. We introduce some convenient notation: $\rho := M/k$, $\ve_{i,j}  :=  u_{i,j} - \rho$ and $f(x) := x(M+1-x)$. Now, consider the ratio $w(A)/w(U) = w(U+T)/w(U)$:
\begin{align}
\frac{w(U+T)}{w(U)} &= \prod_{(i,j): t_{i,j} \ne  0} \frac{\bigl(M-(u_{i,j}+t_{i,j})\bigr)!}{(M-u_{i,j})!} \frac{u_{i,j}!}{(u_{i,j}+t_{i,j})!} \notag \\
& = \prod_{(i,j): t_{i,j} < 0} f(u_{i,j}-|t_{i,j}|+1) \cdots f(u_{i,j}) \prod_{(i,j):t_{i,j} > 0} \frac{1}{f(u_{i,j} + 1) \cdots f(u_{i,j}+t_{i,j})} \notag \\
& = \prod_{(i,j): t_{i,j} < 0} \frac{f(u_{i,j}-|t_{i,j}|+1) \cdots f(u_{i,j})}{f(\rho)^{|t_{i,j}|}} \prod_{(i,j):t_{i,j} > 0} \frac{f(\rho)^{t_{i,j}}}{f(u_{i,j} + 1) \cdots f(u_{i,j}+t_{i,j})} \notag 
\end{align}
The last equality above holds because $\sum_{i,j} t_{i,j} = 0$. Thus,
\begin{equation}
\log \frac{w(U+T)}{w(U)} 
   = \sum_{(i,j):t_{i,j} < 0} \sum_{\ell = -|t_{i,j}|+1}^{0} \log \frac{f(u_{i,j}+\ell)}{f(\rho)}  
      - \sum_{(i,j):t_{i,j} > 0} \sum_{\ell = 1}^{t_{i,j}} \log \frac{f(u_{i,j}+\ell)}{f(\rho)} 
\label{log_wratio}
\end{equation}

Thus, we need estimates for the summands $\log \frac{f(u_{i,j} + \ell)}{f(\rho)}$. Observe that, since $u_{i,j}+t_{i,j} = a_{i,j} \ge 0$, the integers $\ell$ that appear in \eqref{log_wratio} all satisfy $u_{i,j} + \ell \ge 1$. We will make use of this observation a little later.

We first derive useful estimates for the ratios $\frac{f(u_{i,j} + \ell)}{f(\rho)}$. Note that $f(x_2)-f(x_1) = (x_2-x_1)(M+1-(x_1+x_2))$. Using this and the fact that $M = k\rho$, we can write, for $\ell \in \Z$:
\begin{align}
\frac{f(u_{i,j} + \ell)}{f(\rho)} & = 1 + \frac{f(u_{i,j} + \ell) - f(\rho)}{f(\rho)}  \notag \\
& = 1 + \frac{(\ell+\ve_{i,j})((k-2)\rho - \ell - \ve_{i,j} +  1)}{\rho((k-1)\rho + 1)}  \notag \\
& = 1 + \left(\frac{\ell+\ve_{i,j}}{\rho}\right) \left(\frac{k-2}{k-1}\right) (1+ \gamma(\ell)), \label{fratio_eq1}
\end{align}
where $1+\gamma(\ell) = \bigl(1 - \frac{\ell+\ve_{i,j}-1}{(k-2)\rho}\bigr)\bigl(1+ \frac{1}{(k-1)\rho}\bigr)^{-1}$. Observe that 
\begin{equation}
1 + \gamma(\ell) \le 1 - \frac{\ell + \ve_{i,j} - 1}{(k-2)\rho} \le 1+ \frac{|\ell|+2}{(k-2)\rho},
\label{gamma_upbnd}
\end{equation}
using $\ve_{i,j} \in (-1,1)$. On the other hand, using $\bigl(1+ \frac{1}{(k-1)\rho}\bigr)^{-1}  \ge 1-\frac{1}{(k-1)\rho}$, we also have
\begin{align}
1 + \gamma(\ell) & \ge \biggl(1 - \frac{\ell+\ve_{i,j}-1}{(k-2)\rho}\biggr)\biggl(1 - \frac{1}{(k-1)\rho}\biggr) \notag \\
& = 1 - \frac{\ell+\ve_{i,j}}{(k-2)\rho}   + \frac{\rho + \ell + \ve_{i,j} - 1}{(k-1)(k-2)\rho^2} \notag \\
& = 1 - \frac{\ell+\ve_{i,j}}{(k-2)\rho}   + \frac{\ell + u_{i,j} - 1}{(k-1)(k-2)\rho^2} \notag \\
& \ge 1 - \frac{\ell+\ve_{i,j}}{(k-2)\rho} \label{gamma_lobnd}
\end{align}
since, as observed earlier, $\ell + u_{i,j} \ge 1$ for the integers $\ell$ that appear in \eqref{log_wratio}. It should also be pointed out that the first inequality above requires $1 - \frac{\ell+\ve_{i,j}-1}{(k-2)\rho}$ to be non-negative. If $1 - \frac{\ell+\ve_{i,j}-1}{(k-2)\rho} < 0$, then $1 + \gamma(\ell)$ is still lower bounded by \eqref{gamma_lobnd}, since we now have $1 + \gamma(\ell) =  \bigl(1 - \frac{\ell+\ve_{i,j}-1}{(k-2)\rho}\bigr)\bigl(1+ \frac{1}{(k-1)\rho}\bigr)^{-1}
 \ge 1 - \frac{\ell+\ve_{i,j}-1}{(k-2)\rho}.$

Thus, from \eqref{fratio_eq1}--\eqref{gamma_lobnd}, we obtain for any integer $\ell$ occurring in \eqref{log_wratio},
\begin{equation}
\frac{f(u_{i,j} + \ell)}{f(\rho)} = 1 + \zeta(\ell)(1+\gamma(\ell)),
\label{fratio_eq2}
\end{equation}
where $\zeta(\ell) = \left(\frac{k-2}{k-1}\right) \left(\frac{\ell+\ve_{i,j}}{\rho}\right)$, and $|\gamma(\ell)| \le \frac{|\ell| + 2}{(k-2)\rho}$. Observe that for $|\ell| \ge 1$, $|\zeta(\ell)(1+\gamma(\ell))| \le \frac{2|\ell|}{\rho}\left(1+\frac{3|\ell|}{\rho}\right),$ which is at most $\frac12$ for $\frac{|\ell|}{\rho} \le \frac19$. (Indeed, it is easy to check that $2x(1+3x) \le \frac12$ for $|x| \le \frac{1}{12}(\sqrt{15}-2) = 0.1560\ldots$.)  Also, verify that $|\zeta(0) (1+\gamma(0)) |\le  \frac{1}{\rho} \left(1 + \frac{2}{(k-2)\rho}\right) \le \frac{1}{\rho}\left(1 + \frac{2}{\rho}\right)$, which is at most $\frac{3}{8}$ for $\rho \ge 4$.  Henceforth, we assume $\rho \ge 4$ and $\frac{|\ell|}{\rho} \le \frac19$. 

From \eqref{fratio_eq2}, via the inequality $x-x^2 \le \log(1+x) \le x$, valid for $|x| \le \frac12$, we obtain
$$
-{\bigl[\zeta(\ell)(1+\gamma(\ell))\bigr]}^2 \le \log \frac{f(u_{i,j} + \ell)}{f(\rho)} - \zeta(\ell)(1+\gamma(\ell)) \le 0,
$$
and hence,
$$
\zeta(\ell)\gamma(\ell)-{\bigl[\zeta(\ell)(1+\gamma(\ell))\bigr]}^2 \le \log \frac{f(u_{i,j} + \ell)}{f(\rho)} - \zeta(\ell) \le \zeta(\ell)\gamma(\ell).
$$
It follows that 
\begin{equation}
\left| \log \frac{f(u_{i,j} + \ell)}{f(\rho)} - \zeta(\ell)  \right|  
   \le |\zeta(\ell)\gamma(\ell)| + {\bigl[\zeta(\ell)(1+\gamma(\ell))\bigr]}^2
\label{logf_bnds_eq1}
\end{equation}
Now, $|\zeta(\ell)\gamma(\ell)| \le \frac{1}{k-1} \frac{(|\ell| + 1)(|\ell|+2)}{\rho^2} \le \frac{2}{k-1} \left(\frac{|\ell|+1}{\rho}\right)^2 \le \left(\frac{|\ell|+1}{\rho}\right)^2$. Furthermore, $|1+\gamma(\ell)| \le 1+|\gamma(\ell)| \le 1+\frac{|\ell|+2}{\rho} \le 1+\frac{1}{9} + \frac{1}{4} = \frac{29}{18}$, as we have assumed $\frac{|\ell|}{\rho} \le \frac19$ and $\rho \ge 4$. From this, we get $|(\zeta(\ell)(1+\gamma(\ell))| \le \frac{29}{18} \frac{|\ell|+1}{\rho}$. Plugging these estimates into \eqref{logf_bnds_eq1}, we obtain
\begin{equation}
\left| \log \frac{f(u_{i,j} + \ell)}{f(\rho)} - \zeta(\ell)  \right|  
   \le \left(1 + \frac{29^2}{18^2}\right) \left(\frac{|\ell|+1}{\rho}\right)^2
   \le 4 \left(\frac{|\ell|+1}{\rho}\right)^2.
\label{logf_bnds_eq2}
\end{equation}

From \eqref{logf_bnds_eq2}, we can deduce estimates for sums of the form $\sum_\ell \log \frac{f(u_{i,j} + \ell)}{f(\rho)}$, which we can use in \eqref{log_wratio}. Indeed, for an integer $t < 0$, with $|t| \le \frac19 \rho$, we have
\begin{equation*}
\left| \sum_{\ell = -|t|+1}^0 \log \frac{f(u_{i,j} + \ell)}{f(\rho)} - \sum_{\ell = -|t|+1}^0 \zeta(\ell)  \right|  
 \le \frac{4}{\rho^2} \sum_{\ell = -|t|+1}^0 (|\ell|+1)^2,
\end{equation*}
which yields
$$
\left| \sum_{\ell = -|t|+1}^0 \log \frac{f(u_{i,j} + \ell)}{f(\rho)} + \frac12 \biggl(\frac{k-2}{k-1}\biggr) \biggl(\frac{1}{\rho}\biggr) \bigl[|t|^2+ |t|(2\ve_{i,j}-1)\bigr]  \right|  
 \le \frac{2}{3\rho^2}\bigl[|t|(|t|+1)(2|t|+1) \bigr].
$$
The above bound can be brought into the following looser but simpler form:
\begin{equation}
\left| \sum_{\ell = -|t|+1}^0 \log \frac{f(u_{i,j} + \ell)}{f(\rho)} + \frac12 \biggl(\frac{k-2}{k-1}\biggr) \biggl(\frac{t^2}{\rho}\biggr) \right| 
  \le  \frac{4|t|^3}{\rho^2} + \frac{3|t|}{2\rho}.
\label{neg_t_bound}
\end{equation}

Similarly, for an integer $0 < t \le \frac19 \rho$, we can obtain
\begin{equation}
\left| \sum_{\ell = 1}^t \log \frac{f(u_{i,j} + \ell)}{f(\rho)} - \frac12 \biggl(\frac{k-2}{k-1}\biggr) \biggl(\frac{t^2}{\rho}\biggr) \right| 
  \le  \frac{4(t+1)^3}{\rho^2} + \frac{3t}{2\rho}.
\label{pos_t_bound}
\end{equation}
Lemma~\ref{lem:gauss} readily follows from \eqref{log_wratio}, \eqref{neg_t_bound} and \eqref{pos_t_bound}.


\section*{Appendix~E: Proof of Lemma~\ref{lem:E_and_tE}}

As in Lemma~\ref{lem:gauss} and Appendix~D, we set $\rho = \frac{M}{k}$. We will use the notation introduced after the statement of Lemma~\ref{lem:E_and_tE} in Section~\ref{sec:varkM}. In particular, the measure $\tQ_{k,M}$ defined on the set $\tcA_{k,M}$ is equivalent to a discrete Gaussian measure on $\Z^{(k-1)^2}$. This measure assigns to each $\t \in \Z^{(k-1)^2}$ the mass $\frac{1}{\tZ} \exp\bigl(-\frac{1}{2\sigma_{k,M}^2} \t'Bt\bigr)$, where $\sigma_{k,M}^2 = \frac{k-1}{k-2} \rho$ and $B$ is a symmetric positive definite matrix.

Let $\delta \in (0,\frac16)$ be fixed, and define 
$$
\cA_{k,M}(\delta) = \bigl\{A \in \cA_{k,M}:  \max_{i,j} |t_{i,j}| \le \rho^{\frac12 + \delta}\bigr\},
$$
where $T = (t_{i,j}) = A-U$. We assume $\rho \ge 4$ throughout. By Lemma~\ref{lem:gauss}, it follows that for any $A \in \cA_{k,M}(\delta)$, we have 
$$
\left| \log \frac{w(A)}{w(U)} - \log \frac{\tw(A)}{\tw(U)} \right| \le \hc_k \rho^{-\frac12 + 3 \delta},
$$
where $\hc_k$ is a positive constant depending only on $k$. Thus, for $A \in \cA_{k,M}(\delta)$, we have
\begin{equation}
\tw(A) \exp(-\hc_k \rho^{-\frac12 + 3 \delta}) \le \frac{w(A)}{w(U)} \le \tw(A) \exp(\hc_k \rho^{-\frac12 + 3 \delta})
\label{w_approx}
\end{equation}
Since $-\frac12 + 3\delta < 0$, this shows that, as $M \to \infty$ (so that $\rho \to \infty$), the ratio $w(A)/w(U)$ is well-approximated by $\tw(A)$ for all $A \in \cA_{k,M}(\delta)$. From this, we will be able to deduce that, as $M \to \infty$, the contributions made to $\E[\|A-U\|^2]$ and $\tE[\|A-U\|^2]$ by matrices in $A \in \cA_{k,M}(\delta)$ are nearly the same. The next two lemmas show that the matrices outside $\cA_{k,M}(\delta)$ make vanishingly small contributions to both the expected values.


\begin{lemma}
Let $\tcB_{k,M}(\delta) = \tcA_{k,M} \setminus \cA_{k,M}(\delta)$. We have
$$
\sum_{A \in \tcB_{k,M}(\delta)} \|A-U\|^2 \tQ_{k,M}(A) \le \kappa_k \, \rho^2 \exp\left(- \frac{k-2}{2k} \rho^{2\delta}\right),
$$
where $\kappa_k$ is a constant depending only on $k$. 
\label{lem:tQ_A^c}
\end{lemma}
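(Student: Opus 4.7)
My plan is to bound the tail sum by a Chernoff-style split of the discrete Gaussian weight, combined with a comparison to a ``fattened'' discrete Gaussian whose moments and normalization are controlled by Proposition~\ref{prop:var} of Appendix~F.

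First I would reduce from a max-norm tail to a quadratic tail: by the definition of $\tcB_{k,M}(\delta) = \tcA_{k,M} \setminus \cA_{k,M}(\delta)$, any matrix $A$ in this set has some $|t_{i,j}| > \rho^{1/2+\delta}$ (where $T = A-U$) and hence $\|A-U\|^2 > R := \rho^{1+2\delta}$. Consequently it suffices to bound
$$
\frac{1}{\tZ_{k,M}} \sum_{A \in \tcA_{k,M}:\,\|A-U\|^2 > R} \|A-U\|^2 \, \exp\bigl(-\|A-U\|^2 / (2\sigma_{k,M}^2)\bigr).
$$

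Next I would split the Gaussian weight as $\tw(A) = \exp\bigl(-(k-1)\|A-U\|^2/(2k\sigma_{k,M}^2)\bigr) \cdot \exp\bigl(-\|A-U\|^2/(2k\sigma_{k,M}^2)\bigr)$. On the event $\|A-U\|^2 > R$, using $\sigma_{k,M}^2 = \tfrac{k-1}{k-2}\rho$ and $R = \rho^{1+2\delta}$, the first factor is bounded \emph{exactly} by $\exp\bigl(-\tfrac{(k-1)R}{2k\sigma_{k,M}^2}\bigr) = \exp\bigl(-\tfrac{k-2}{2k}\rho^{2\delta}\bigr)$, producing the advertised constant. Pulling this prefactor out of the sum leaves the residual sum
$$
\sum_{A \in \tcA_{k,M}} \|A-U\|^2\, \exp\bigl(-\|A-U\|^2/(2k\sigma_{k,M}^2)\bigr),
$$
which is, up to normalization, the second moment of $\|A-U\|^2$ under a fattened discrete Gaussian on $\tcA_{k,M}$ with variance $k\sigma_{k,M}^2$.

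Using the parameterization from Section~\ref{sec:varkM} by the $(k-1)^2$ free coordinates $\t$ with $\|T\|^2 = \t' B \t$, I would invoke Proposition~\ref{prop:var} of Appendix~F at both variance scales: once at $\sigma_{k,M}^2$ to handle the denominator $\tZ_{k,M}$, and once at $k\sigma_{k,M}^2$ to show that the fattened second moment is $O((k-1)^2 k \sigma_{k,M}^2) = O(\rho)$ times the fattened partition function $\tZ'_{k,M}$, while $\tZ'_{k,M}/\tZ_{k,M} = k^{(k-1)^2/2}(1+o(1))$. Combining yields a bound of the form $\kappa_k\, \rho\, \exp\bigl(-\tfrac{k-2}{2k}\rho^{2\delta}\bigr)$, which is majorized by the stated $\kappa_k\, \rho^2\, \exp\bigl(-\tfrac{k-2}{2k}\rho^{2\delta}\bigr)$.

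The main technical obstacle is verifying that the Appendix~F discrete-Gaussian estimates (second-moment asymptotics and a uniformly-controlled ratio of partition functions) apply at both the original and the fattened scales with constants independent of $M$. Once that input is granted, the rest is a one-line Chernoff split engineered to produce the precise constant $\tfrac{k-2}{2k}$ appearing in the statement.
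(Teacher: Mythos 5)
Your approach is essentially the paper's: the reduction from the max-norm constraint to the quadratic tail $\|A-U\|^2 \ge \rho^{1+2\delta}$ and the Chernoff-style split with weight fraction $\tau = 1/k$ are exactly what the paper does, except the paper packages the split into Proposition~\ref{prop:muZR} (which is applied as a black box with $R=\rho^{1+2\delta}$, $\tau=1/k$, $\beta=\sigma_{k,M}^2$, $d=(k-1)^2$) whereas you re-derive it inline. One small bookkeeping correction: the partition-function ratio $\tZ'_{k,M}/\tZ_{k,M}=k^{(k-1)^2/2}(1+o(1))$ you invoke comes from Proposition~\ref{prop:Zgauss}, not Proposition~\ref{prop:var}; the latter only supplies the normalized second moment. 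With that substitution your argument goes through and yields the same bound.
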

\begin{proof}
For any $A \in \tcA_{k,M}$, if the entries of $T = A-U$ are bounded above in magnitude by $\rho^{\frac12 + \delta}$,
then $A$ must have non-negative entries, and hence, $A \in \cA_{k,M}(\delta)$. Therefore, for any $A \in \tcB_{k,M}(\delta)$, we have $\|A-U\|^2 = \sum_{i,j} t_{i,j}^2 \ge \rho^{1+2\delta}$. Hence,
$$
\sum_{A \in \tcB_{k,M}(\delta)} \|A-U\|^2 \, \tQ_{k,M}(A)  \le \sum_{A \in \tcA_{k,M}: \|A-U\|^2 \ge \rho^{1+2\delta}} \|A-U\|^2 \, \tQ_{k,M}(A) 
$$
and the lemma follows by applying Proposition~\ref{prop:muZR} in Appendix~F with $R = \rho^{1+2\delta}$ and $\tau = \frac{1}{k}$. 
\end{proof}

\begin{lemma}
Let $\cB_{k,M}(\delta) = \cA_{k,M} \setminus \cA_{k,M}(\delta)$. There is a positive constant $c_k'$ depending only on $k$ such that for all $A \in \cB_{k,M}(\delta)$, the bound
\begin{equation}
\frac{w(A)}{w(U)} \le \exp(- c_k' \rho^{2\delta})
\label{ineq:ck'}
\end{equation}
holds for all sufficiently large $M$. Consequently, there is a positive constant $c_k''$ depending only on $k$ such that
\begin{equation}
\sum_{A \in \cB_{k,M}(\delta)} \|A-U\|^2 \, \frac{w(A)}{w(U)} \le \exp(-c_k'' \rho^{2\delta}).
\label{ineq:ck''}
\end{equation}
holds for all sufficiently large $M$.
\label{lem:Q_A^c}
\end{lemma}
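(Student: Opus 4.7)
The plan is to first establish the pointwise bound \eqref{ineq:ck'}, and then derive \eqref{ineq:ck''} by a routine summation. Set $\mu := \max_{i,j}|t_{i,j}|$, where $T = A-U$, so $A \in \cB_{k,M}(\delta)$ corresponds to $\mu > \rho^{1/2+\delta}$. I partition $\cB_{k,M}(\delta)$ into a moderate regime $\rho^{1/2+\delta} < \mu \le \eta_k\rho$ (for a constant $\eta_k > 0$ depending only on $k$, chosen below) and a tail regime $\mu > \eta_k\rho$, treating each separately.

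In the moderate regime, Lemma~\ref{lem:gauss} applies and gives
$$
\log\frac{w(A)}{w(U)} \le -\frac{k-2}{2(k-1)\rho}\|T\|^2 + \frac{4}{\rho^2}\sum_{i,j}(|t_{i,j}|+1)^3 + \frac{3}{2\rho}\sum_{i,j}|t_{i,j}|.
$$
Using $(|t|+1)^3 \le 8|t|^3 + 8$ together with $\sum|t_{i,j}|^3 \le \mu\|T\|^2$ and $\sum|t_{i,j}| \le k\|T\|$ (Cauchy--Schwarz), the error is at most $\frac{32\mu}{\rho^2}\|T\|^2 + \frac{3k}{2\rho}\|T\| + \frac{32k^2}{\rho^2}$. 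Choosing $\eta_k$ small enough that the cubic error is at most half the Gaussian term leaves $\log w(A)/w(U) \le -\frac{k-2}{4(k-1)\rho}\|T\|^2 + o(\|T\|^2/\rho)$, which combined with $\|T\|^2 \ge \mu^2 \ge \rho^{1+2\delta}$ yields the desired $-c_k\rho^{2\delta}$ for $M$ large.

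For the tail regime, Lemma~\ref{lem:gauss} no longer applies. My plan is a rebalancing swap argument. Each swap picks four entries at $(i_1,j_1), (i_1,j_2), (i_2,j_1), (i_2,j_2)$ with $t_{i_1,j_1}, t_{i_2,j_2} > 0$ and $t_{i_1,j_2}, t_{i_2,j_1} < 0$, decrementing the first two and incrementing the last two (preserving row and column sums). Using the finite-difference identity $\Delta h(x) = -\log[(M-x)(x+1)]$ for $h(x) = \log[(M-x)!/x!]$, a direct computation shows the resulting change in $\log w$ is approximately $\frac{4k\mu'}{(k-1)\rho}$, where $\mu'$ is the current maximum deviation. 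Iterating swaps so as to reduce the current maximum by one at each step produces, after $\Omega(\mu)$ steps, a matrix $A^* \in \cA_{k,M}$ with $\max|t^*_{i,j}| \le \eta_k\rho$ and cumulative $\log w$ increase of $\Omega(\mu^2/\rho) \ge \Omega(\eta_k^2\rho)$. Combined with Proposition~\ref{prop:U} ($w(A^*) \le w(U)$), we conclude $w(A)/w(U) \le \exp(-c_k\rho)$, far smaller than the required $\exp(-c_k\rho^{2\delta})$.

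The main obstacle is the rigorous execution of the tail-regime swap argument: one must verify that a valid swap (with the required signs of deviations and non-negativity) can always be found when $\mu > \eta_k\rho$, and carefully track the cumulative $\log w$ increase as the entries being swapped shift. An alternative, possibly more streamlined approach would be to prove a global quadratic decay $\log w(A)/w(U) \le -c_k\|T\|^2/M$ on $\cA_{k,M}$ via the strict log-concavity of $h$ on $[0,(M-2)/2]$, with a separate Stirling-based estimate handling entries that stray near the boundary. The inequality \eqref{ineq:ck''} then follows from \eqref{ineq:ck'} via the crude bounds $\|A-U\|^2 \le k^2M^2$ and $|\cA_{k,M}| \le (M+1)^{k^2}$: the sum in \eqref{ineq:ck''} is at most $\exp(-c_k'\rho^{2\delta}) \cdot (M+1)^{k^2} \cdot k^2M^2$, which is $\le \exp(-c_k''\rho^{2\delta})$ for any $c_k'' < c_k'$ and $M$ sufficiently large.
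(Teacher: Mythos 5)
Your derivation of \eqref{ineq:ck''} from \eqref{ineq:ck'} is exactly the paper's argument, and your treatment of the moderate regime $\rho^{1/2+\delta} < \max_{i,j}|t_{i,j}| \le \eta_k\rho$ is sound: Lemma~\ref{lem:gauss} applies directly, and your bookkeeping ($\sum|t_{i,j}|^3 \le \mu\|T\|^2$, Cauchy--Schwarz on $\sum|t_{i,j}|$) correctly shows that the Gaussian term dominates once $\eta_k$ is chosen small enough.

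The gap is the tail regime $\max_{i,j}|t_{i,j}| > \eta_k\rho$. You assert, without proof, that each rebalancing swap increases $\log w$ by roughly $4k\mu'/((k-1)\rho)$. This quantitative claim is not obvious: writing $f(x)=x(M+1-x)$, the change in $\log w$ under your swap is
$$
\log\frac{f(a_{i_1,j_1})\,f(a_{i_2,j_2})}{f(a_{i_1,j_2}+1)\,f(a_{i_2,j_1}+1)},
$$
which depends on all four entries, not merely the current maximum deviation, and can be small or even negative unless the swap sites are chosen carefully. You would also have to guarantee the existence of a swap with the required sign pattern at every step and track the cumulative increase as the entries shift over $\Omega(\mu)$ iterations; as you say, this is the obstacle, and it is not resolved.

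The paper sidesteps the tail dichotomy entirely with a truncation device. Set $\hatt_{i,j} = \min\{|t_{i,j}|,\rho^{1/2+\delta}\}$. Since $f(x)/f(\rho)\ge 1$ iff $\rho\le x\le M+1-\rho$, truncating the inner telescoping sums in the expansion \eqref{log_wratio} of $\log(w(A)/w(U))$ to $\ell\in[-\hatt_{i,j}+1,0]$ (when $t_{i,j}<0$) or $\ell\in[1,\hatt_{i,j}]$ (when $t_{i,j}>0$ and $a_{i,j}\le M+1-\rho$) only discards terms of the favourable sign. So $\log(w(A)/w(U))$ is bounded above by a sum in which every effective deviation is already $\le\rho^{1/2+\delta}$, and the same estimates that proved Lemma~\ref{lem:gauss} give $-c_k'\rho^{2\delta}$ uniformly on all of $\cB_{k,M}(\delta)$ --- no separate ``tail'' is needed. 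A swap argument then enters only for the narrow residual case $\max_{i,j}a_{i,j}>M+1-\rho$, and there the paper uses only the \emph{qualitative} monotonicity $w(A)\le w(A^{\pm})$ coming from Schur-concavity of $\phi$, not the quantitative lower bound your plan requires. If you want to avoid the truncation, your alternative sketch (a global bound $\log(w(A)/w(U))\le -c_k\|T\|^2/M$ via log-concavity of $h$, plus a boundary correction) is more promising than the quantitative swap route, but it too is left unexecuted.
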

\begin{proof}
Given the bound in \eqref{ineq:ck'}, the bound in \eqref{ineq:ck''} follows readily. Indeed, note that for any $A \in \cA_{k,M}$, we have $\|A-U\|^2 \le \|A\|^2 \le k^2 M^2$. Also, note that $|\cB_{k,M}(\delta)| \le |\cA_{k,M}| \le (M+1)^{k^2}$.  Therefore, 
\begin{align*}
\sum_{A \in \cB_{k,M}(\delta)} \|A-U\|^2 \, \frac{w(A)}{w(U)} & \le \ k^2 M^2 |\cB_{k,M}(\delta)| \, \exp(- c_k' \rho^{2\delta}) \\
& \le \ \exp\bigl(- c_k' \rho^{2\delta} + O(\log M)\bigr),
\end{align*}
from which \eqref{ineq:ck''} follows.

The proof of \eqref{ineq:ck'} builds on \eqref{log_wratio}. Recall that $A-U = T$, and note that $f(x)/f(\rho) \ge 1$ iff $\rho \le x \le M+1-\rho$. For all $(i,j)$, define $\hatt_{i,j} = \min\{|t_{i,j}|,\rho^{\frac12+\delta}\}$. Then, for $t_{i,j} < 0$, we have
\begin{equation}
\sum_{\ell = -|t_{i,j}|+1}^{0} \log \frac{f(u_{i,j}+\ell)}{f(\rho)}  
\le  \sum_{\ell=-\hatt_{i,j}+1}^{0} \log \frac{f(u_{i,j}+\ell)}{f(\rho)}. 
\label{ineq:t<0}
\end{equation}
Also, for any $t_{i,j} > 0$ such that $a_{i,j} = u_{i,j} + t_{i,j} \le M+1-\rho$, we have
\begin{equation}
\sum_{\ell = 1}^{t_{i,j}} \log \frac{f(u_{i,j}+\ell)}{f(\rho)} 
\ge \sum_{\ell = 1}^{\hatt_{i,j}} \log \frac{f(u_{i,j}+\ell)}{f(\rho)}.
\label{ineq:t>0}
\end{equation}

Now, consider $A \in \cB_{k,M}(\delta)$. Suppose first that $a_{i,j} \le M+1-\rho$ for all $(i,j)$. Then, from \eqref{log_wratio}, \eqref{ineq:t<0} and \eqref{ineq:t>0}, we have
\begin{equation}
\log \frac{w(A)}{w(U)} 
   \le \sum_{(i,j):t_{i,j} < 0} \sum_{\ell = -\hatt_{i,j}+1}^{0} \log \frac{f(u_{i,j}+\ell)}{f(\rho)}  
      - \sum_{(i,j):t_{i,j} > 0} \sum_{\ell = 1}^{\hatt_{i,j}} \log \frac{f(u_{i,j}+\ell)}{f(\rho)}.
\label{bnd:log_wratio}
\end{equation}
Now arguing as in the proof of Lemma~\ref{lem:gauss} in Appendix~D (in particular, using \eqref{neg_t_bound} and \eqref{pos_t_bound}), we can bound the right-hand side above by
\begin{equation}
-\frac12 \biggl(\frac{k-2}{k-1}\biggr) \frac{1}{\rho} \sum_{i,j} (\hatt_{i,j})^2 + \frac{4}{\rho^2} \sum_{i,j} (\hatt_{i,j}+1)^3 + \frac{3}{2\rho} \sum_{i,j}\hatt_{i,j}. 
\label{bd1}
\end{equation}
Now, since $A \in \cB_{k,M}(\delta)$, we have $\sum_{i,j} (\hatt_{i,j})^2 \ge \rho^{1+2\delta}$, so that the first term above is upper bounded by $-\frac12(\frac{k-2}{k-1}) \rho^{2\delta}$. Using $\hatt_{i,j} \le \rho^{\frac12+\delta}$, the remaining two terms are upper bounded by $\hc_k \rho^{-\frac12 + 3\delta}$ for some positive constant $\hc_k$ that depends only on $k$. Hence, we have
$$
\log \frac{w(A)}{w(U)} \le -\frac12\biggl(\frac{k-2}{k-1}\biggr) \rho^{2\delta} + \hc_k \rho^{-\frac12 + 3\delta}.
$$
This proves \eqref{ineq:ck'} for $A \in \cB_{k,M}(\delta)$ with $\max_{i,j} a_{i,j} \le M+1-\rho$.

It remains to consider the case of $A \in \cB_{k,M}(\delta)$ with $\max_{i,j} a_{i,j} > M+1-\rho$. The problem here is that if $a_{i,j} > M+1-\rho$, then \eqref{ineq:t>0} may not hold, so we are unable to use the same approach as above to get to \eqref{ineq:ck'}. However, what we do now is to show that for each such $A$, there exists an $\breve{A} \in \cB_{k,M}(\delta)$ with $\max_{i,j} \breve{a}_{i,j} \le M+1-\rho$ such that $w(A) \le w(\breve{A})$. As argued above, \eqref{ineq:ck'} holds for $\breve{A}$; therefore, it holds for $A$ as well.

So, let us now prove the existence of an $\breve{A}$ as required. Let $(i,j)$ be such that $a_{i,j} > M+1-\rho$. Then, since $A \in \cA_{k,M}$, the following must hold: (i)~for all $i' \ne i$ and $j' \ne j$, we have $a_{i',j} < \rho-1$ and $a_{i,j'} < \rho-1$; and (ii)~there exists some $i'\ne i$ and $j' \ne j$ such that $a_{i',j'} > \rho$. Now, consider the matrix $A^{\pm}$ which has the same entries as $A$, except for the following: $a^{\pm}_{i,j} = a_{i,j} - 1$, $a^{\pm}_{i',j} = a_{i',j}+1$, $a^{\pm}_{i,j'} = a_{i,j'}+1$ and $a^{\pm}_{i',j'} = a_{i',j'}-1$. Note that $A^{\pm}$ is also in $\cB_{k,M}(\delta)$. 

Let $\a_1,\ldots,\a_k$ and $\a^{\pm}_1,\ldots,\a^{\pm}_k$ denote the rows of $A$ and $A^{\pm}$, respectively. Clearly, $\a_{\ell} = \a^{\pm}_{\ell}$ for all $\ell \notin \{i,i'\}$. Moreover, it can be directly verified using the definition of the function $\phi$ in \eqref{eq:phi} that $\phi(\a_{\ell}) \le \phi(\a^{\pm}_{\ell})$ for $\ell \in \{i,i'\}$. With this, we have
$$
w(A) = \prod_{\ell=1}^k \phi(\a_{\ell}) \le \prod_{\ell=1}^k \phi(\a^{\pm}_{\ell}) = w(A^{\pm}).
$$

Note that the procedure of obtaining $A^{\pm}$ from $A$ strictly reduces the $(i,j)$th entry of $A$, and does not create any new entries larger than $M+1-\rho$. If $A^{\pm}$ still contains an entry larger than $M+1-\rho$, we apply the procedure to $A^{\pm}$ to produce a matrix ${(A^{\pm})}^{\pm}$, and so on. Carrying on in this manner, after finitely many steps, we will obtain the desired matrix $\breve{A}$.
\end{proof}

\medskip

We are now in a position to complete the proof of Lemma~\ref{lem:E_and_tE}. First, we write
$$
\sum_{A \in \cA_{k,M}} \|A-U\|^2 \, \frac{w(A)}{w(U)} \ = \ 
   \sum_{A \in \cA_{k,M}(\delta)} \|A-U\|^2\, \frac{w(A)}{w(U)} + \sum_{A \in \cB_{k,M}(\delta)} \|A-U\|^2\, \frac{w(A)}{w(U)},
$$
where $\cB_{k,M}(\delta)$ is as defined in Lemma~\ref{lem:Q_A^c}. It then follows from \eqref{w_approx} and \eqref{ineq:ck''} that
there exists a positive constant $c_{1,k}$ depending only on $k$ such that, for all sufficiently large $M$,
\begin{align}
\sum_{A \in \cA_{k,M}} \|A-U\|^2 \, \frac{w(A)}{w(U)} 
  & \le \exp(c_{1,k} \, \rho^{-\frac12 + 3\delta}) \sum_{A \in \cA_{k,M}(\delta)} \|A-U\|^2 \tw(A) \notag \\
  & \le \exp(c_{1,k} \, \rho^{-\frac12 + 3\delta}) \sum_{A \in \tcA_{k,M}} \|A-U\|^2 \tw(A) \label{eq:c1} 
\end{align}

On the other hand, via \eqref{w_approx} and Lemma~\ref{lem:tQ_A^c}, we also have for all sufficiently large $M$,
\begin{align}
\sum_{A \in \cA_{k,M}} \|A-U\|^2 \, \frac{w(A)}{w(U)} 
 & \ge \sum_{A \in \cA_{k,M}(\delta)} \|A-U\|^2 \, \frac{w(A)}{w(U)} \notag \\
 & \ge \exp(-\hat{c}_k \, \rho^{-\frac12 + 3\delta}) \sum_{A \in \cA_{k,M}(\delta)} \|A-U\|^2 \, \tw(A) \notag \\
 & \ge \exp(-c_{2,k} \, \rho^{-\frac12 + 3 \delta}) \sum_{A \in \tcA_{k,M}} \|A-U\|^2 \, \tw(A) \label{eq:c2} 
\end{align}
where $c_{2,k}$ is a positive constant depending only on $k$.

Similar arguments also yield the inequalities
\begin{equation}
\exp(-c_{2,k} \, \rho^{-\frac12 + 3 \delta}) \sum_{A \in \tcA_{k,M}} \tw(A)  \le \sum_{A \in \cA_{k,M}} \frac{w(A)}{w(U)} \le \exp(c_{1,k} \, \rho^{-\frac12 + 3 \delta}) \sum_{A \in \tcA_{k,M}} \tw(A) 
\label{eq:c1c2}
\end{equation}
valid for all sufficiently large $M$.

Now, note that 
$$
\E[\|A-U\|^2] = \frac{\sum_{A \in \cA_{k,M}} \|A-U\|^2 \, \frac{w(A)}{w(U)}}{\sum_{A \in \cA_{k,M}} \frac{w(A)}{w(U)}}
$$
and 
$$
\tE[\|A-U\|^2] = \frac{\sum_{A \in \tcA_{k,M}} \|A-U\|^2 \, \tw(A)} {\sum_{A \in \tcA_{k,M}} \tw(A)}.
$$
Therefore, from \eqref{eq:c1}--\eqref{eq:c1c2}, we deduce that, with $c_{3,k} = c_{1,k} + c_{2,k}$,
$$
\exp(-c_{3,k} \, \rho^{-\frac12 + 3\delta}) \, \tE[\|A-U\|^2]  \ \le \ \E[\|A-U\|^2] \ \le \ \exp(c_{3,k} \, \rho^{-\frac12 + 3\delta}) \, \tE[\|A-U\|^2]
$$
for all sufficiently large $M$. It follows that
$$
\left|\E[\|A-U\|^2]  - \tE[\|A-U\|^2]\right| \ = \ \tE[\|A-U\|^2]| \, O\bigl(\rho^{-\frac12 + 3\delta}\bigr).
$$
Since $\tE[\|A-U\|^2] = O(\rho)$ by Lemma~\ref{lem:Qtilde_var},\footnote{Lemma~\ref{lem:Qtilde_var} is proved independently of Lemma~\ref{lem:E_and_tE}.} we conclude that 
$$
\left|\E[\|A-U\|^2]  - \tE[\|A-U\|^2]\right| = O\bigl(\rho^{\frac12 + 3\delta}\bigr),
$$
which proves Lemma~\ref{lem:E_and_tE}.


\section*{Appendix~F: Some Properties of Discrete Gaussian Measures}

In this appendix, we consider a discrete Gaussian measure defined by $\mu(\x) = \frac{1}{Z} v(\x)$, where $v(\x):=\exp\{-\frac{1}{2\beta} \x' \textrm{V} \x\}$ for $\x\in \Z^d$, $\beta > 0$ and $\textrm{V}$ a symmetric positive definite matrix, and $Z=\sum_{\x\in \Z^{d}}v(\x)$. Let $\mathbf{X}$ be a random variable distributed according to the measure $\mu$. We collect here some results on the measure $\mu$ that are used in this paper. These results are valid in the regime where $V$ is fixed and $\beta \to \infty$. 


The main tool used in the proofs in this appendix is the Poisson summation formula (see e.g., \cite[Chapter~VII, Corollary 2.6]{SW71} or \cite[Section~17]{Barvinok}). This formula applies to functions $f:\R^d \to \C$, with Fourier transform $\hat{f}$ defined for all $\xibf \in \R^d$ as $\hat{f}(\xibf) = \int_{\R^d} f(\x) e^{i \langle \xibf,\x \rangle} d\x$, such that 
$$
|f(\x)|, |\hat{f}(\x)| \le \frac{C}{1+\|\x\|^{d+\delta}} \ \ \text{ for all } \x \in \R^d
$$
for some constants $C > 0$ and $\delta > 0$. For such functions $f$, the Poisson summation formula states that
\begin{equation}
\sum_{\x \in \Z^d} f(\x) = \sum_{\xibf \in \Z^d} \hat{f}(2\pi \xibf). \label{eq:psf}
\end{equation}

By a basic fact about the Gaussian density, the function $v(\x) = \exp\{-\frac{1}{2\beta} \x^{T}\textrm{V}\x\}$ has Fourier transform $\hat{v}(\xibf) = \frac{(2\pi)^{d/2}\beta^{1/2}}{\sqrt{\det \textrm{V}}} \exp(-\frac{1}{2} \beta \, \xibf' \textrm{V}^{-1} \xibf)$. Hence, the Poisson summation formula applies, and we have
\begin{equation}
Z = \sum_{\x\in \Z^{d}}v(\x) = \sum_{\xibf\in\Z^d} \hat{v}(2\pi\xibf) 
= \frac{(2\pi)^{d/2} \beta^{1/2}}{\sqrt{\det \textrm{V}}} \, Z^*
\label{eq:Z}
\end{equation}
where we define $Z^* =  \sum_{\xibf\in\Z^d} \exp(-\frac{1}{2} 4\pi^2 \beta \, \xibf' \textrm{V}^{-1} \xibf)$.

Clearly, $Z^* \ge 1$ since the $\xibf = \0$ term in the sum evaluates to $1$. In fact, as $\beta \to \infty$, $Z^* \to 1$. This is because $\xibf' V^{-1} \xibf$ is a positive definite quadratic form, so that $\lim_{\beta \to \infty} \exp(-\frac{1}{2} 4\pi^2 \beta \, \xibf' \textrm{V}^{-1} \xibf)$ equals $0$ if $\xibf \ne \0$, and equals $1$ if $\xibf = \0$. Thus, $Z \to \frac{(2\pi)^{d/2} \beta^{1/2}}{\sqrt{\det \textrm{V}}}$. To estimate the rate of this convergence, we make use of some bounds on the quadratic form $\xibf' V^{-1} \xibf$. 

The matrix $\textrm{V}^{-1}$ can be diagonalized as $U'\Lambda^{-1}U$, where $U$ is an orthogonal matrix and $\Lambda = \text{diag}(\lam_1,\ldots,\lam_d)$ is a diagonal matrix composed of the eigenvalues, $\lam_1,\ldots,\lam_d$, of $\textrm{V}$. Since $\textrm{V}$ is symmetric and positive definite, its eigenvalues are all real and positive. Thus, with $\etabf = U\xibf$, we have $\xibf' \textrm{V}^{-1} \xibf = \etabf' \Lambda^{-1} \etabf = \sum_{i=1}^d \frac{1}{\lam_i} \eta_i^2$. Hence, letting $\lambda_{\min}$ and $\lambda_{\max}$ denote the smallest and largest eigenvalues, respectively, of $\textrm{V}$, we have
\begin{equation}
\frac{1}{\lambda_{\max}} \|\xibf\|^2 = \frac{1}{\lambda_{\max}} \|\etabf\|^2 \le \xibf' \textrm{V}^{-1} \xibf \le \frac{1}{\lambda_{\min}} \|\etabf\|^2 = \frac{1}{\lambda_{\min}} \|\xibf\|^2,
\label{qform_bounds}
\end{equation}
the equalities on either side being due to the fact that orthogonal transformations preserve $\ell_2$ norms.

\begin{proposition}
In the regime where $\beta \to \infty$, we have $Z = \frac{(2\pi)^{d/2} \beta^{1/2}}{\sqrt{\det \textrm{V}}}\left[1 + O\bigl(\exp(-\frac{2\pi^2}{\lam_{\max}} \, \beta)\bigr) \right]$.
\label{prop:Zgauss}
\end{proposition}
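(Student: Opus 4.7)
The plan is to reduce the problem directly to estimating the factor $Z^{*}$ that already appears in \eqref{eq:Z}: since the Poisson summation identity has pulled out the leading constant $\frac{(2\pi)^{d/2}\beta^{1/2}}{\sqrt{\det \textrm{V}}}$, all that remains is to show
$$
Z^{*} \;=\; 1 \;+\; O\!\left(\exp\!\bigl(-\tfrac{2\pi^{2}}{\lam_{\max}}\,\beta\bigr)\right)
$$
as $\beta\to\infty$.

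The first step is to isolate the $\xibf=\0$ term in $Z^{*}$, which contributes exactly $1$, and to control the remainder using the left-hand inequality in \eqref{qform_bounds}, namely $\xibf' \textrm{V}^{-1}\xibf \ge \frac{1}{\lam_{\max}}\|\xibf\|^{2}$. Setting $c := \frac{2\pi^{2}\beta}{\lam_{\max}}$, this gives the two-sided bound
$$
0 \;\le\; Z^{*}-1 \;\le\; \sum_{\xibf\in\Z^{d}\setminus\{\0\}} \exp\!\bigl(-c\,\|\xibf\|^{2}\bigr).
$$

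The second step is to factor the tail sum coordinate-wise. Since $\|\xibf\|^{2}=\sum_{i=1}^{d}\xi_{i}^{2}$,
$$
\sum_{\xibf\in\Z^{d}} e^{-c\|\xibf\|^{2}} \;=\; \left(\sum_{n\in\Z} e^{-cn^{2}}\right)^{d}.
$$
For large $c$, using $n^{2}\ge n$ for $n\ge 1$, the one-dimensional theta-type sum obeys $\sum_{n\in\Z} e^{-cn^{2}} \le 1 + 2\sum_{n\ge 1} e^{-cn} = 1 + \frac{2e^{-c}}{1-e^{-c}} = 1 + O(e^{-c})$. Raising to the $d$th power and subtracting the $\xibf=\0$ contribution yields $Z^{*}-1 = O(e^{-c}) = O\bigl(\exp(-\tfrac{2\pi^{2}}{\lam_{\max}}\beta)\bigr)$, and plugging this back into \eqref{eq:Z} gives the proposition.

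There is no genuine obstacle: Poisson summation does all the heavy lifting of converting the discrete Gaussian partition function to its continuous analogue, and the correction is dominated by the shortest nonzero vectors in $\Z^{d}$, whose squared norm is $1$. The particular exponent $\tfrac{2\pi^{2}}{\lam_{\max}}$ is exactly what the uniform bound on $\xibf'\textrm{V}^{-1}\xibf$ from \eqref{qform_bounds} produces at those shortest lattice vectors, so the stated rate is the natural one coming from this argument.
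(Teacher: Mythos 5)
Your proposal is correct and follows essentially the same route as the paper: reduce to bounding $Z^{*}-1$, invoke the lower bound $\xibf'\textrm{V}^{-1}\xibf \ge \frac{1}{\lam_{\max}}\|\xibf\|^2$ from \eqref{qform_bounds} to dominate the sum coordinate-wise, replace $\xi^2$ by $|\xi|$ for $|\xi|\ge 1$, and close with a geometric series. The only cosmetic difference is that you factor the full lattice sum and then subtract the $\xibf=\0$ term, whereas the paper bounds $Z^{*}$ directly by the $d$th power of the one-dimensional geometric sum; the substance is the same.
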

\begin{proof}
Let $C_0 = \frac{2\pi^2}{\lam_{\max}}$. It is enough to show that $Z^* = 1 + O\bigl(\exp(-C_0 \beta)\bigr)$. Since $Z^* \ge 1$, we need a corresponding upper bound. This is done using \eqref{qform_bounds} as follows:
\begin{align*}
Z^* \  \le \ \sum_{\xibf\in\Z^d} \exp\biggl(-C_0 \, \beta \sum_{i=1}^d \xi_i^2\biggr) 
& = \ \prod_{i=1}^d \sum_{\xi_i \in \Z} \exp\biggl(-C_0 \beta \, \xi_i^2\biggr) \\
& = \biggl[\sum_{\xi \in \Z} \exp\biggl(-C_0 \beta \, \xi^2\biggr) \biggr]^d 
\ \le \ \biggl[\sum_{\xi \in \Z} \exp\biggl(-C_0 \beta \, |\xi|\biggr) \biggr]^d 
\end{align*}
The upper bound $1 + O\bigl(\exp(-C_0 \beta)\bigr)$ now follows from the geometric series summation formula.
\end{proof}

The Poisson summation formula also applies to the function $u(\x) = \x'\textrm{V}\x \, v(\x)$. Recall that if $f$ has a twice-differentiable Fourier transform, then the function $g(\x) = \|\x\|^2 f(\x)$ has Fourier transform $\hat{g}(\xibf) = - \Delta\hat{f}(\xibf)$, where $\Delta\hat{f}$ is the Laplacian of $\hat{f}$. With $f(\x) = \exp(-\frac{1}{2} \|\x\|^2)$, we have $g(\x) = \|\x\|^2 \exp(-\frac{1}{2} \|\x\|^2)$, and $\frac{1}{\beta} u(\x) = g(\beta^{-1/2} \textrm{V}^{1/2} \x)$, where $\textrm{V}^{1/2}$ is the symmetric positive definite square root of $\textrm{V}$. We know that $\hat{f}(\xibf) = (2\pi)^{d/2} \exp(-\frac{1}{2} \|\xibf\|^2)$, from which straightforward computations yield $\hat{g}(\xibf) = (2\pi)^{d/2} (d - \|\xibf\|^2) \, \exp(-\frac12\|\xibf\|^2)$. Now, via a change of variable, 
\begin{align}
\frac{1}{\beta} \hat{u}(\xibf) 
& = \frac{(2\pi)^{d/2} \beta^{1/2}}{\sqrt{\det \textrm{V}}} (d - \beta \xibf'\textrm{V}^{-1}\xibf) \, \exp(-\frac12 \beta \, \xibf'\textrm{V}^{-1}\xibf) \notag \\
& = \frac{Z}{Z^*} (d - \beta \, \xibf'\textrm{V}^{-1}\xibf) \, \exp(-\frac12 \beta \, \xibf'\textrm{V}^{-1}\xibf),\label{eq:uhat}
\end{align}
where we have used the identity in \eqref{eq:Z}. 

\begin{proposition}
In the regime where $\beta \to \infty$, we have
$$
\E_{\mu}[\mathbf{X}'\mbox{\emph V} \mathbf{X}] = \frac{1}{Z} \sum_{\x\in\Z^d} u(\x) = \beta \, d + O\left(\beta^2 \exp\biggl(-\frac{2\pi^2}{\lam_{\max}} \, \beta\biggr)\right).
$$
\label{prop:var}
\end{proposition}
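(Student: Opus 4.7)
The plan is to apply the Poisson summation formula to $u(\mathbf{x}) = \mathbf{x}' V \mathbf{x}\, v(\mathbf{x})$ in exactly the same way it was applied to $v$ in the proof of Proposition~\ref{prop:Zgauss}. Both $u$ and $\hat{u}$ have Gaussian decay, so \eqref{eq:psf} applies. Using the Fourier transform already computed in \eqref{eq:uhat}, we get
$$
\sum_{\mathbf{x}\in\Z^d} u(\mathbf{x}) \ = \ \sum_{\xibf\in\Z^d} \hat{u}(2\pi\xibf) \ = \ \beta \, \frac{Z}{Z^*} \sum_{\xibf \in \Z^d} \bigl(d - 4\pi^2 \beta\, \xibf' V^{-1}\xibf\bigr) \exp\bigl(-2\pi^2 \beta\, \xibf' V^{-1}\xibf\bigr).
$$
Isolating the $\xibf=\0$ contribution gives $\beta d \cdot Z/Z^*$, so after dividing by $Z$,
$$
\frac{1}{Z} \sum_{\mathbf{x}\in\Z^d} u(\mathbf{x}) \ = \ \frac{\beta d}{Z^*} + \frac{\beta}{Z^*} \sum_{\xibf \ne \0} \bigl(d - 4\pi^2 \beta\, \xibf' V^{-1}\xibf\bigr) \exp\bigl(-2\pi^2 \beta\, \xibf' V^{-1}\xibf\bigr).
$$

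Next I would bound the tail sum over $\xibf \ne \0$ using the eigenvalue bounds \eqref{qform_bounds}. Setting $C_0 = 2\pi^2/\lambda_{\max}$, for every nonzero $\xibf \in \Z^d$ we have $\exp(-2\pi^2\beta\,\xibf'V^{-1}\xibf) \le \exp(-C_0 \beta \|\xibf\|^2)$ and $|d - 4\pi^2\beta\,\xibf'V^{-1}\xibf| \le d + (4\pi^2\beta/\lambda_{\min}) \|\xibf\|^2$. Hence for sufficiently large $\beta$, each nonzero term is bounded by a constant times $\beta\,\|\xibf\|^2 \exp(-C_0\beta\|\xibf\|^2)$. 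Splitting off the factor $\exp(-C_0\beta)$ (corresponding to vectors with $\|\xibf\|^2 = 1$) and using $\exp(-C_0\beta(\|\xibf\|^2-1)) \le \exp(-C_0 \beta (\|\xibf\|^2-1)/2) \cdot \exp(-C_0(\|\xibf\|^2-1)/2)$ for large $\beta$, one sees that the surviving series $\sum_{\xibf \ne \0} \|\xibf\|^2 \exp(-c\|\xibf\|^2)$ is a convergent constant depending only on $d$ and $V$. Thus the tail sum is $O\bigl(\beta\, \exp(-C_0\beta)\bigr)$.

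Combining this bound with Proposition~\ref{prop:Zgauss}, which gives $1/Z^* = 1 + O(\exp(-C_0\beta))$, yields
$$
\frac{1}{Z}\sum_{\mathbf{x}\in\Z^d} u(\mathbf{x}) \ = \ \beta d\bigl(1 + O(\exp(-C_0\beta))\bigr) + O\bigl(\beta^2 \exp(-C_0\beta)\bigr) \ = \ \beta d + O\bigl(\beta^2 \exp(-C_0\beta)\bigr),
$$
which is the claimed identity. There is no serious obstacle here; the only mildly delicate point is verifying that the tail of the dual lattice sum really contributes only $O(\beta \exp(-C_0\beta))$ rather than something of larger order, which is handled by separating the $\|\xibf\|^2 = 1$ shell from the rest and using that the remaining Gaussian sum is absolutely convergent with $\beta$-independent bound.
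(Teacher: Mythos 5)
Your proof is correct and uses the same core strategy as the paper: Poisson summation applied to $u$, the Fourier transform \eqref{eq:uhat}, and eigenvalue bounds \eqref{qform_bounds} to control the dual-lattice tail. The one structural difference is in how the main term $\beta d$ is isolated. You peel off the $\xibf=\0$ term, obtaining $\beta d/Z^{*}$, and then must invoke Proposition~\ref{prop:Zgauss} to replace $1/Z^{*}$ by $1+O(\exp(-2\pi^2\beta/\lambda_{\max}))$; the paper instead uses the exact cancellation $\frac{\beta d}{Z^{*}}\sum_{\xibf\in\Z^d}\exp\bigl(-\tfrac12\,4\pi^2\beta\,\xibf'V^{-1}\xibf\bigr)=\beta d$, leaving only the residual sum $\sum_{\xibf}\xibf'V^{-1}\xibf\,\exp(\cdot)$ (whose $\xibf=\0$ term vanishes automatically), so no $Z^{*}$ correction has to be tracked. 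For the tail, you bound via a shell argument separating $\|\xibf\|^2=1$ from the rest, while the paper bounds $\xibf'V^{-1}\xibf$ by $\lambda_{\min}^{-1}\|\xibf\|^2$ and factorizes the lattice sum coordinate-by-coordinate into one-dimensional geometric series; both give the same $O(\exp(-2\pi^2\beta/\lambda_{\max}))$ estimate. The two proofs are thus interchangeable, with the paper's decomposition being marginally tidier because the main term comes out exactly rather than up to a $Z^{*}$ factor.
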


\begin{proof}
Plugging \eqref{eq:uhat} into the Poisson summation formula \eqref{eq:psf}, we obtain
\begin{align}
\E_\mu[\mathbf{X}'\textrm{V}\mathbf{X}] \ = \ \frac{1}{Z} \sum_{\x \in \Z^d} u(\x) 
& = \frac{1}{Z} \sum_{\xibf \in \Z^d} \hat{u}(2\pi\xibf) \notag \\
& =  \frac{\beta}{Z^*} \sum_{\xibf \in \Z^d} (d - 4\pi^2 \beta \, \xibf'\textrm{V}^{-1}\xibf) \, \exp\bigl(-\frac12 4\pi^2 \beta \, \xibf'\textrm{V}^{-1}\xibf\bigr) \notag \\
& = \beta \, d \ - \ \frac{4\pi^2\beta^2}{Z^*} \sum_{\xibf \in \Z^d} \xibf'\textrm{V}^{-1}\xibf \, \exp\bigl(-\frac12 4\pi^2 \beta \, \xibf'\textrm{V}^{-1}\xibf\bigr).
\label{eq:ExQx}
\end{align}

It only remains to show that $\sum_{\xibf \in \Z^d} \xibf'\textrm{V}^{-1}\xibf \, \exp\bigl(-\frac12 4\pi^2 \beta \, \xibf'\textrm{V}^{-1}\xibf\bigr)$ is $O\bigl(\exp(-\frac{2\pi^2}{\lam_{\max}} \, \beta)\bigr)$. Using \eqref{qform_bounds}, the summand can be bounded as 
$$
 \xibf'\textrm{V}^{-1}\xibf \, \exp\biggl(-\frac12 4\pi^2 \beta \, \xibf'\textrm{V}^{-1}\xibf\biggr) 
   \le  \frac{1}{\lambda_{\min}} \|\xibf\|^2 \exp\biggl(\frac{-2\pi^2}{\lambda_{\max}} \, \beta \|\xibf\|^2\biggr). 
$$
We then have
\begin{align}
 \sum_{\xibf \in \Z^d} \xibf'\textrm{V}^{-1}\xibf \, & \exp\biggl(-\frac12 4\pi^2 \beta \, \xibf'\textrm{V}^{-1}\xibf\biggr) \notag \\
& \le \frac{1}{\lambda_{\min}} \, \sum_{\xibf \in \Z^d} \sum_{i=1}^d \xi_i^2 \exp\biggl(\frac{-2\pi^2}{\lambda_{\max}} \, \beta \sum_{i=1}^d\xi_i^2\biggr) \notag \\
& = \frac{1}{\lambda_{\min}} \, \sum_{i=1}^d \sum_{\xibf \in \Z^d} \xi_i^2 \exp\biggl(\frac{-2\pi^2}{\lambda_{\max}} \, \beta \sum_{i=1}^d\xi_i^2\biggr) \notag \\
& =  \frac{d}{\lambda_{\min}} \, \sum_{\xibf \in \Z^d} \xi_1^2 \exp\biggl(\frac{-2\pi^2}{\lambda_{\max}} \, \beta \sum_{i=1}^d\xi_i^2\biggr) \notag \\
& =  \frac{d}{\lambda_{\min}} \, \left[\sum_{\xi_1\in \Z} \xi_1^2 \exp\biggl(\frac{-2\pi^2}{\lambda_{\max}} \, \beta \, \xi_1^2\biggr)\right] \, \prod_{i=2}^d \left[ \sum_{\xi_i\in \Z} \exp\biggl(\frac{-2\pi^2}{\lambda_{\max}} \, \beta \, \xi_i^2\biggr) \right]\notag \\
& =  \frac{d}{\lambda_{\min}} \, \left[\sum_{\xi\in \Z} \xi^2 \exp\biggl(\frac{-2\pi^2}{\lambda_{\max}} \, \beta \, \xi^2\biggr)\right] \, {\left[ \sum_{\xi\in \Z} \exp\biggl(\frac{-2\pi^2}{\lambda_{\max}} \, \beta \, \xi^2\biggr) \right]}^{d-1} \notag \\
& \le \frac{d}{\lambda_{\min}} \, \left[\sum_{\xi\in \Z} \xi^2 \exp\biggl(\frac{-2\pi^2}{\lambda_{\max}} \, \beta \, |\xi|\biggr)\right] \, {\left[ \sum_{\xi\in \Z} \exp\biggl(\frac{-2\pi^2}{\lambda_{\max}} \, \beta \, |\xi|\biggr) \right]}^{d-1} \notag \\
& = \frac{d}{\lambda_{\min}} \, O\biggl(\exp\bigl(-\frac{2\pi^2}{\lam_{\max}} \, \beta\bigr)\biggr) 
  {\left[1 + O\biggl(\exp\bigl(-\frac{2\pi^2}{\lam_{\max}} \, \beta\bigr)\biggr) \right]}^{d-1} \label{eq:O} \\
& = O\biggl(\exp\bigl(-\frac{2\pi^2}{\lam_{\max}} \, \beta\bigr)\biggr) \notag
\end{align}
the equality in \eqref{eq:O} being a consequence of standard geometric series summation formulas.
\end{proof}

Our final result estimates the contribution to 
$\E_\mu[\X'V\X] = \frac{1}{Z} \sum_{\x} u(\x)$ made by vectors $\x \in \Z_d$ with $\x'V'\x \ge R$ for some (large) $R > 0$. To this end, define $\mfZ(R) := \{\x \in \Z^d: \x'V\x \ge R\}$. 

\begin{proposition} For any $R > 0$ and $0 < \tau < 1$, we have 
%
$$
\frac{1}{Z} \sum_{\x \in \mfZ(R)} u(\x) \le \beta d \, \tau^{-(\frac{d}{2}+1)} \exp\left(- \frac{(1-\tau)R}{2\beta}\right).
$$
\label{prop:muZR}
\end{proposition}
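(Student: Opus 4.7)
The plan is a Chernoff-style truncation combined with a rescaling of the Gaussian parameter. For any $\x \in \mfZ(R)$, the defining inequality $\x'V\x \ge R$ lets me split the Gaussian weight as
$$
\exp\!\left(-\tfrac{1}{2\beta}\x'V\x\right) \ = \ \exp\!\left(-\tfrac{1-\tau}{2\beta}\x'V\x\right)\exp\!\left(-\tfrac{\tau}{2\beta}\x'V\x\right) \ \le \ \exp\!\left(-\tfrac{(1-\tau)R}{2\beta}\right)\exp\!\left(-\tfrac{\tau}{2\beta}\x'V\x\right),
$$
after which I keep the polynomial prefactor $\x'V\x$ and extend the truncated sum over all of $\Z^d$:
$$
\sum_{\x \in \mfZ(R)} u(\x) \ \le \ \exp\!\left(-\tfrac{(1-\tau)R}{2\beta}\right)\sum_{\x \in \Z^d} \x'V\x\,\exp\!\left(-\tfrac{1}{2\tilde\beta}\x'V\x\right), \qquad \tilde\beta := \beta/\tau.
$$
The remaining sum on the right is exactly $Z(\tilde\beta)\,\E_{\tilde\mu}[\X'V\X]$ for the discrete Gaussian measure $\tilde\mu$ at parameter $\tilde\beta$, so after dividing by $Z = Z(\beta)$, it suffices to show that $(Z(\tilde\beta)/Z(\beta))\cdot \E_{\tilde\mu}[\X'V\X] \le \beta d\,\tau^{-(d/2+1)}$.

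For the expectation, I would reuse identity \eqref{eq:ExQx} from the proof of Proposition~\ref{prop:var} verbatim, applied at parameter $\tilde\beta$:
$$
\E_{\tilde\mu}[\X'V\X] \ = \ \tilde\beta\,d \ - \ \frac{4\pi^2\tilde\beta^2}{Z^*(\tilde\beta)}\sum_{\xibf \neq \mathbf{0}} \xibf'V^{-1}\xibf\,\exp\!\left(-2\pi^2\tilde\beta\,\xibf'V^{-1}\xibf\right).
$$
Because $V^{-1}$ is positive definite, every summand on the right is non-negative, so the correction is subtracted; this yields the non-asymptotic inequality $\E_{\tilde\mu}[\X'V\X] \le \tilde\beta\,d = \beta d/\tau$. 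For the partition functions, identity \eqref{eq:Z} gives $Z(\tilde\beta)/Z(\beta) = \tau^{-d/2}\,Z^*(\tilde\beta)/Z^*(\beta)$, and the sum defining $Z^*(\beta)$ is clearly non-increasing in $\beta$; since $\tilde\beta > \beta$, we get $Z^*(\tilde\beta)/Z^*(\beta) \le 1$ and hence $Z(\tilde\beta)/Z(\beta) \le \tau^{-d/2}$. Multiplying the two estimates assembles
$$
\frac{1}{Z}\sum_{\x \in \mfZ(R)} u(\x) \ \le \ \tau^{-d/2}\cdot\frac{\beta d}{\tau}\cdot\exp\!\left(-\tfrac{(1-\tau)R}{2\beta}\right) \ = \ \beta d\,\tau^{-(d/2+1)}\exp\!\left(-\tfrac{(1-\tau)R}{2\beta}\right),
$$
which is the claim.

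The only subtle point is getting $\E_{\tilde\mu}[\X'V\X] \le \tilde\beta\,d$ to hold \emph{exactly}, not merely up to an exponentially small error, since the proposition is stated uniformly in $\beta$ and not just asymptotically. This refinement comes for free from the structure of identity \eqref{eq:ExQx}, where the non-leading Poisson-summation terms carry a definite negative sign by positive-definiteness of $V^{-1}$; the partition-function comparison is even simpler, relying only on termwise monotonicity of $Z^*$ in $\beta$. Once these two signs are observed, the rest is just book-keeping.
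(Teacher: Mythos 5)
Your proof is correct and follows essentially the same route as the paper's: both split $v(\x)$ using the $(1-\tau)/\tau$-weighted factorization of the exponential, extend the truncated sum to all of $\Z^d$, and then control the resulting rescaled Gaussian sum via Poisson summation by dropping the non-negative correction term in $\hat{u}$ and using monotonicity of $Z^*$ in $\beta$; your reorganization into the product $(Z(\tilde\beta)/Z(\beta))\cdot\E_{\tilde\mu}[\X'V\X]$ is just a cleaner bookkeeping of the same two inequalities. One small caveat worth noting: the factor $\tau^{-d/2}$ you extract from \eqref{eq:Z} implicitly uses the correct scaling $Z \propto \beta^{d/2}$, whereas \eqref{eq:Z} as printed in the paper reads $\beta^{1/2}$ --- that is a typographical slip in the paper (harmless there, since it cancels in the derivation of \eqref{eq:uhat}), and you have silently corrected it.
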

\begin{proof}
For $0 < \tau < 1$, we write
\begin{align}
\frac{1}{Z} \sum_{\x \in \mfZ(R)} u(\x) 
& = \ \frac{1}{Z} \sum_{\x \in \mfZ(R)} \x'V\x \, \exp\biggl(-\frac{1-\tau}{2\beta} \x'V\x\biggr)  \exp\biggl(-\frac{\tau}{2\beta} \x'V\x\biggr) \notag \\
& \le \ \exp\biggl(-\frac{1-\tau}{2\beta} R \biggr) \, \frac{1}{Z} \sum_{\x \in \Z^d} \x'V\x \, \exp\biggl(-\frac{\tau}{2\beta} \x'V\x \biggr) \notag \\
& = \ \exp\biggl(-\frac{1-\tau}{2\beta} R \biggr) \, \tau^{-1} \, \frac{1}{Z} \sum_{\x \in \Z^d} u(\tau^{\frac12}\x) \notag \\
& = \ \exp\biggl(-\frac{1-\tau}{2\beta} R \biggr) \, \tau^{-(\frac{d}{2}+1)} \, \frac{1}{Z} \sum_{\xibf \in \Z^d} \hat{u}(2\pi\tau^{-\frac12}\xibf)
\label{eq:tau3} \\
& \le \exp\biggl(-\frac{1-\tau}{2\beta} R \biggr) \, \tau^{-(\frac{d}{2}+1)} \, \frac{\beta}{Z^*} \sum_{\xibf \in \Z^d} d \exp(-\frac12 4\pi^2\tau^{-1} \beta \, \xibf'\textrm{V}^{-1}\xibf) \label{ineq:tau2} \\
& \le \exp\biggl(-\frac{1-\tau}{2\beta} R \biggr) \, \tau^{-(\frac{d}{2}+1)} \, \frac{\beta}{Z^*} \sum_{\xibf \in \Z^d} d \exp(-\frac12 4\pi^2 \beta \, \xibf'\textrm{V}^{-1}\xibf) \notag \\
& = \exp\biggl(-\frac{1-\tau}{2\beta} R \biggr) \, \tau^{-(\frac{d}{2}+1)} \, \beta d. \notag
\end{align}
In \eqref{eq:tau3} above, we used the Poisson summation formula \eqref{eq:psf}, and \eqref{ineq:tau2} follows from \eqref{eq:uhat}.
\end{proof}

\section*{Acknowledgement}
We would like to thank Pascal Vontobel for suggesting the approach of analyzing the Bethe approximation via degree-$M$ lifted permanents.



\begin{thebibliography}{10}
\bibitem{OSZ04} A.\ Orlitsky, N.P.\ Santhanam, and J.\ Zhang, 
``Universal compression of memoryless sources over unknown alphabets,'' 
\emph{IEEE Trans.\ Inf.\ Theory}, vol.\ 50, no.\ 7, pp.\ 1469--1481, July 2004.
\bibitem{OSVZ11} A.\ Orlitsky, N.P.\ Santhanam, K.\ Viswanathan, and J.\ Zhang, 
``On estimating the probability multiset,'' draft manuscript, June 2011. 
[Online] Available: www-ee.eng.hawaii.edu/$\sim$prasadsn/skelnew59.pdf
\bibitem{OSVZ04} A.\ Orlitsky, N.\ Santhanam, K.\ Viswanathan, and J.\ Zhang, 
``On modeling profiles instead of values,'' 
in \emph{Proc.\ 20th Conf. Uncertainty in Artificial Intelligence}, 2004.
\bibitem{OSSVZ04} A.\ Orlitsky, S.\ Sajama, N.P.\ Santhanam, K.\ Viswanathan, and J.\ Zhang, 
``Algorithms for modeling distributions over large alphabets,''
\emph{Proc.\ 2004  IEEE Int.\ Symp.\ Inf.\ Theory (ISIT'04)}, Chicago, USA, June 27 -- July 2, 2004, p.\ 304.
\bibitem{AOP09a} J.\ Acharya, A.\ Orlitsky, and S.\ Pan,  
``Recent results on pattern maximum likelihood,'' 
\emph{Proc.\ 2009 IEEE Inf. Theory Workshop (ITW'09)}, Volos, Greece, June 10--12, 2009, pp.\ 251--255.
\bibitem{OP09} A.\ Orlitsky and S.\ Pan, ``The maximum likelihood probability of skewed patterns,''
\emph{Proc.\ 2009 IEEE Int.\ Symp.\ Inf.\ Theory (ISIT'09)}, Seoul, Korea, June 28 -- July 3, 2009, 
pp.\ 1130--1134.
\bibitem{AOP09b} J.\ Acharya, A.\ Orlitsky, and S.\ Pan,  
``The maximum likelihood probability of unique-singleton, ternary, and length-7 patterns,''
\emph{Proc.\ 2009 IEEE Int.\ Symp.\ Inf.\ Theory (ISIT'09)}, Seoul, Korea, June 28 -- July 3, 2009,   
pp.\ 1135--1139.
\bibitem{ADMOP10} J.\ Acharya, H.\ Das, H.\ Mohimani, A.\ Orlitsky, and S.\ Pan,
``Exact calculation of pattern probabilities,''
\emph{Proc.\ 2010 IEEE Int.\ Symp.\ Inf.\ Theory (ISIT'10)},  Austin, Texas, USA, June 13--18, 2010, 
pp.\ 1498--1502.
\bibitem{ADOP11} J.\ Acharya, H.\ Das, A.\ Orlitsky, and S.\ Pan,
``Algebraic computation of pattern maximum likelihood,''
\emph{Proc.\ 2011 IEEE Int.\ Symp.\ Inf.\ Theory (ISIT'11)},  Saint Petersburg, Russia, July 31 -- Aug.\ 5, 2011,
pp.\ 400--404.
\bibitem{Von12} P.O.\ Vontobel, 
``The Bethe approximation of the pattern maximum likelihood distribution,''
\emph{Proc.\ 2012 IEEE Int. Symp. Information Theory (ISIT'12)}, Cambridge, Mass., USA, July 1--6, 2012,
pp.\ 2012--2016.
\bibitem{Fer13} W.~Fernandes, \emph{Calculation of Approximate Pattern Maximum Likelihood}, M.E.\ Project Report, Dept.\ Electrical Commun.\ Engg., Indian Institute of Science, 2013.
\bibitem{YFW05} J.S.~Yedidia, W.T.~Freeman, and Y.~Weiss, 
``Constructing free-energy approximations and generalized belief propagation algorithms,''
\emph{IEEE Trans.\ Inf.\ Theory}, vol.\ 51, no.\ 7, pp.\ 2282--2312, July 2005.
\bibitem{FK13} W.~Fernandes and N.~Kashyap, 
  ``A phase transition for the uniform distribution in the pattern maximum likelihood problem,'' 
   \emph{Proc.\ 2013 Information Theory Workshop (ITW 2013)}, Seville, Spain, Sept.~2013, pp.~318--322.
\bibitem{Von13a} P.O.~Vontobel,
``The Bethe permanent of a nonnegative matrix'', 
\emph{IEEE Trans.\ Inf.\ Theory}, vol.\ 59, no.\ 3, pp.\ 1866--1901, March 2013.
\bibitem{Bethe35} H.~Bethe, ``Statistical theory of superlattices,'' \emph{Proc.\ Royal Soc.\ London A}, vol.\ 150, no.\ 871, pp.\ 552--575, July 1935.
\bibitem{Peierls36} R.~Peierls, ``On Ising's model of ferromagnetism,'' \emph{Math.\ Proc.\ Cambridge Philosophical Society}, vol.\ 32, no.\ 3, pp.\ 477--481, Oct.\ 1936.
\bibitem{Valiant79} L.~Valiant,
``The complexity of computing the permanent,'' \emph{Theor.\ Comput.\ Sci.}, vol.\ 8, no.\ 2, pp.\ 189--201, 1979.
\bibitem{JSV04} M.~Jerrum, A.~Sinclair and E.~Vigoda, 
``A polynomial-time approximation algorithm for the permanent of a matrix with nonnegative entries,''
\emph{J.\ ACM}, vol.\ 51, no.\ 4, pp.\ 671--697, July 2004.
\bibitem{HL08} M.~Huber and J.~Law, ``Fast approximation of the permanent for very dense problems,''
in \emph{Proc.\ 19th Annu.\ ACM-SIAM Symp.\ Discrete Algorithms (SODA'08)}, San Francisco, CA, USA, Jan 20--22, 2008.
\bibitem{CKV08} M.~Chertkov, L.~Kroc and M.~Vergassola, ``Belief propagation and beyond for particle tracking,'' 2008[Online]. Available: http://www.arxiv.org/abs/0806.1199
\bibitem{HJ09} B.~Huang and T.~Jebara, ``Approximating the permanent with belief propagation,'' 2009 [Online]. Available:  http://www.arxiv.org/abs/0908.1769
\bibitem{Gur11a} L.~Gurvits, ``Unharnessing the power of Schrijver's permanental inequality,'' 2011 [Online]. Available: http://www.arxiv.org/abs/1106.2844
\bibitem{Gur11b} L.~Gurvits, ``Unleashing the power of Schrijver's permanental inequality with the help of the Bethe approximation,'' in \emph{Proc.\ Electron.\ Colloq.\ Comput.\ Complexity} , TR 11-169, Dec. 2011.
\bibitem{Von13b} P.O.~Vontobel, ``Counting in Graph Covers: A Combinatorial Characterization of the Bethe Entropy Function,''
\emph{IEEE Trans.\ Inform.\ Theory}, vol.\ 59, no.\ 9, pp.\ 6018--6048, Sept.\ 2013.
\bibitem{MO_book} A.W.~Marshall and I.~Olkin, 
\emph{Inequalities: Theory of Majorization and its Applications},
Academic Press, New York, 1979.
\bibitem{Rudin}
W.\ Rudin,  \emph{Principles of Mathematical Analysis}, 3rd ed., McGraw-Hill, 1976.
\bibitem{SW71} E.M.\ Stein and G.L.\ Weiss, 
\emph{Introduction to Fourier Analysis on Euclidean Spaces}, Princeton, NJ:
Princeton Univ.\ Press, 1971.
\bibitem{Barvinok}
A.~Barvinok, \emph{Math 669: Combinatorics, Geometry and Complexity of Integer Points}.
 [Online]. Available: http://www.math.lsa.umich.edu/$\sim$barvinok/latticenotes669.pdf 
\end{thebibliography}
\end{document}